\let\Horig\H
\newcommand{\ds}{\displaystyle}
\newcommand{\ts}{\textstyle}
\newcommand{\mr}{\mathrm}
\newcommand{\mc}{\mathcal}
\newcommand{\scr}{\mathscr}
\newcommand{\uhr}{\upharpoonright}
\newcommand{\tsfrac}[2]{{\ts\frac{#1}{#2}}}
\newcommand{\eps}{\varepsilon}
\newcommand{\BAR}[1]{\overline{#1}}
\newcommand{\AC}{\complexityfont{AC}}
\newcommand{\probabilityfont}[1]{\mathsf{#1}}
\let\H\undefined
\let\Ex\undefined
\let\Pr\undefined
\newcommand{\H}{\probabilityfont{H}}
\DeclareMathOperator*{\Ex}{\mathds{E}}
\DeclareMathOperator*{\Pr}{\mathds{P}}
\newcommand{\defeq}{\vcentcolon=}
\newcommand{\fieldfont}[1]{\mathbb{#1}}
\newcommand{\N}{\fieldfont{N}}
\newcommand{\R}{\fieldfont{R}}
\newcommand{\Z}{\fieldfont{Z}}
\newtheoremstyle{theorem-style}
  {}
  {}
  {\slshape}
  {}
  {\bf}
  {.}
  {.5em}
  {}
\newtheorem{thm}{Theorem}[section]
\newtheorem{prop}[thm]{Proposition}
\newtheorem{la}[thm]{Lemma}
\newtheorem{main-la}[thm]{Main Lemma}
\newtheorem{redu}[thm]{Reduction}
\newtheorem{cor}[thm]{Corollary}
\newtheorem{claim}[thm]{Claim}
\newtheorem{qu}[thm]{Question}
\theoremstyle{definition}
\newtheorem{df}[thm]{Definition}
\newtheorem{rmk}[thm]{Remark}
\newtheorem{ex}[thm]{Example}
\newtheorem{notn}[thm]{Notation}
\newcommand{\thedepth}{\frac{\log n}{k^3\log\log n}}
\newcommand{\STCONN}{\textsl{STCONN}}
\newcommand{\USTCONN}{\textsl{USTCONN}}
\newcommand{\PATH}{\textsl{DISTCONN}}
\newcommand{\End}[1]{\mathsf{Ends}(#1)}
\newcommand{\Int}[1]{\mathsf{Interior}(#1)}
\newcommand{\prelim}[1]{\subsubsection*{#1}}
\newcommand{\DTD}{D}
\newcommand{\shift}[1]{{\triangleright}{#1}}
\newcommand{\cost}{\chi}
\newcommand{\FormSD}{\mathsf{Formula}}
\newcommand{\CircSD}{\mathsf{Circuit}}
\newcommand{\DSPACE}{\mathsf{DSPACE}}
\newcommand{\depth}{\mr{depth}}
\newcommand{\sib}[1]{#1^{\sim}}
\newcommand{\parent}[1]{#1^{\uparrow}}
\newcommand{\Overlap}[1]{T_{#1}}
\newcommand{\Strict}[1]{\mathit{Strict}(#1)}
\newcommand{\Con}[1]{\mc I(#1)}
\renewcommand{\P}{\scr P}
\newcommand{\smallP}{\scr P^{\hspace{.35pt}\textup{small}}}
\newcommand{\smallR}{\scr R^{\textup{small}}}
\renewcommand{\l}[1]{\ell_{\smash{#1}}}
\renewcommand{\c}[1]{\Delta_{\smash{#1}}}
\newcommand{\PHI}[1]{\Phi_{\smash{#1}}}
\newcommand{\cc}[1]{\Delta(#1)}
\newcommand{\PHII}[1]{\Phi(#1)}
\newcommand{\VV}[1]{V_{\smash{#1}}}
\newcommand{\AB}{\{A,B\}}
\newcommand{\BA}{\{B,A\}}
\newcommand{\PalPat}{M}
\newcommand{\Pal}{\mc{P}al}
\newcommand{\A}{\mc{A}}
\newcommand{\B}{\mc{B}}
\newcommand{\C}{\mc{C}}
\newcommand{\pRHO}[3]{(#2)|_{#1}^{#3}}
\newcommand{\RHO}[3]{#2|_{#1}^{#3}}
\renewcommand{\AC}{\mr{AC}}
\newcommand{\COPY}{\mr{Copy}}
\newcommand{\Live}{\mathit{Live}}
\renewcommand{\AC}{\mathsf{AC}}
\newcommand{\m}{\tilde{n}}
\newcommand{\psino}{\psi}
\newcommand{\psifc}{\xi}
\newcommand{\proj}{\mathrm{proj}}
\newcommand{\X}{R}
\newcommand{\size}{\mr{size}}
\renewcommand{\AA}[3]{\A^{#1}_{#3,#2}}
\newcommand{\edge}[2]{#1#2}
\newcommand{\lift}[2]{#1^{#2}}
\newcommand{\pcost}{\bar{\chi}}
\newtheorem*{rep@theorem}{\rep@title}
\newcommand{\newreptheorem}[2]{%
\newenvironment{rep#1}[1]{%
 \def\rep@title{#2 \ref{##1}}%
 \begin{rep@theorem}}%
 {\end{rep@theorem}}}
\titleformat{\subsubsection}[runin]
{\bf\normalsize}{\thesubsubsection}{1em}{}[:]
\begin{document}

\title{Formulas vs.\ Circuits for Small Distance Connectivity}
\author{Benjamin Rossman\footnote{National Institute of Informatics, 2-1-2 Hitotsubashi, Chiyoda-ku, Tokyo 101-8430, Japan}}
\date{November 28, 2013}
\maketitle{}

\begin{abstract}
We give the first super-polynomial separation in the power of bounded-depth boolean formulas vs.\ circuits. Specifically, we consider the problem \textsc{Distance $k(n)$ Connectivity}, which asks whether two specified nodes in a graph of size $n$ are connected by a path of length at most $k(n)$. This problem is solvable (by the recursive doubling technique) on {\bf circuits} of depth $O(\log k)$ and size $O(kn^3)$. In contrast, we show that solving this problem on {\bf formulas} of depth $\log n/(\log\log n)^{O(1)}$ requires size $n^{\Omega(\log k)}$ for all $k(n)$ $\le \log\log n$. As corollaries:
\begin{enumerate}[(i)]
  \item
    It follows that polynomial-size circuits for \textsc{Distance $k(n)$ Connectivity} require depth $\Omega(\log k)$ for all $k(n) \le \log\log n$. This matches the upper bound from recursive doubling and improves a previous $\Omega(\log\log k)$ lower bound of Beame, Pitassi and Impagliazzo [BIP98].
  \item
    We get a tight lower bound of $s^{\Omega(d)}$ on the size required to simulate size-$s$ depth-$d$ circuits by depth-$d$ formulas for all $s(n) = n^{O(1)}$ and $d(n) \le \log\log\log n$. No lower bound better than $s^{\Omega(1)}$ was previously known for any $d(n) \nleq O(1)$.
\end{enumerate}

Our proof technique is centered on a new notion of {\em pathset complexity}, which roughly speaking measures the minimum cost of constructing a set of (partial) paths in a universe of size $n$ via the operations of union and relational join, subject to certain density constraints. Half of our proof shows that bounded-depth formulas solving \textsc{Distance $k(n)$ Connectivity} imply upper bounds on pathset complexity. The other half is a combinatorial lower bound on pathset complexity. 
\end{abstract}

\newpage
\tableofcontents{}
\newpage

\section{Introduction}\label{sec:intro}

Understanding the relative power of formulas vs.\ circuits is a central challenge in complexity theory, especially in the important boolean setting.  Whereas boolean circuits are the most general non-uniform model of computation (in particular, boolean circuits can efficiently simulate Turing machines), there is a strong intuition that boolean formulas ($=$ tree-like circuits with fan-out $1$) are a very weak model of computation. Many natural problems solvable by small circuits, such as st-connectivity, are believed to require large formulas. However, no super-polynomial gap between the formula complexity and circuit complexity of any problem has ever been established. The existence of such a gap is a major open question.

\begin{qu}\label{qu:NC1}
Are polynomial-size boolean circuits strictly more powerful than polynomial-size boolean formulas?
\end{qu}

There are two versions of Question \ref{qu:NC1} for the {\bf uniform} and {\bf non-uniform} settings.\footnote{Whenever we speak of a circuit (or formula), this is understood to mean a sequence $(C_n)_{n=1}^\infty$ of circuits, one for each input size $n$. In the {\em uniform} setting, there is an underlying algorithm which, given $1^n$ as input, outputs a description of the circuit $C_n$. In the {\em non-uniform} setting, $C_n$ are arbitrary. All bounds mentioned in this paper may be interpreted in the stronger sense: uniform upper bounds and non-uniform lower bounds.} In terms of complexity classes, this is equivalent to asking whether \textsf{uniform-}$\mathsf{NC^1}$ (resp.\ $\mathsf{NC^1}$) is a proper subclass of $\mathsf{P}$ (resp.\ $\mathsf{P/poly}$).\footnote{By Spira's Theorem \cite{spira1971time}, $\mathsf{NC^1}$ is equivalent to the class of languages recognized by polynomial-size boolean formulas (of unbounded depth).} 
Both the uniform and non-uniform versions of this question are wide open:
\begin{itemize}
\item
An obvious prerequisite of the separation of \textsf{uniform-}$\mathsf{NC^1}$ from $\mathsf{P}$ is a super-polynomial lower bound on the formula complexity of {\em any} explicit boolean function. However, despite the fact that {\em almost all} boolean functions have formula complexity $\Omega(2^n/\log n)$ by a classic theorem of Riordan and Shannon \cite{riordan1942number}, the best lower bound for any explicit function, due to H{\aa}stad \cite{hastad1998shrinkage}, is only $\Omega(n^{3-o(1)})$. Unfortunately, $n^3$ is known to be the limit of existing techniques, and it appears that any improvement will require a major breakthrough.
\item
The situation is no better in the non-uniform setting. By a striking theorem of Savick\'y and Woods \cite{savicky1998number}, for every constant $k > 1$, almost all boolean functions with formula complexity $\le n^k$ have circuit complexity $\ge n^k/k$. This shows that $\mathsf{NC^1}$ cannot be separated from $\mathsf{P/poly}$ by a straightforward counting argument (in contrast with results like the Circuit Size Hierarchy Theorem, see \cite{jukna2012boolean}). Other than by counting arguments, it is not clear how to take advantage of non-uniformity.
\end{itemize}

In short, it appears that we are a long way from answering Question \ref{qu:NC1}. In  the meantime, we can hope to gain insight by studying the question of formulas vs.\ circuits in {\em restricted settings} where strong lower bounds are available. In particular, Question \ref{qu:NC1} has natural analogues in both the {\bf monotone} setting and the {\bf bounded-depth} (boolean) setting, where exponential lower bounds have been around for decades. However, as we will explain, while question of monotone formulas vs.\ circuits has been settled for 25 years, essentially nothing was known in bounded-depth setting prior to the results of this paper.

\subsubsection*{Monotone Formulas vs.\ Circuits}

The separation of monotone formulas from monotone circuits was shown by Karchmer and Wigderson \cite{karchmer1990monotone} via a lower bound for directed st-connectivity ($\STCONN$). 

\begin{thm}\label{thm:KW}
Monotone formulas solving $\STCONN$ require size $n^{\Omega(\log n)}$.
\end{thm}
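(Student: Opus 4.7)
The plan is to apply the Karchmer-Wigderson correspondence between monotone formula leaf-size and the minimum leaf-count of a communication protocol tree for the monotone KW search relation. For any monotone $f$, this correspondence equates the minimum monotone formula size of $f$ with the minimum number of leaves in a protocol where Alice holds $x \in f^{-1}(1)$, Bob holds $y \in f^{-1}(0)$, and the output is an index $i$ with $x_i = 1$ and $y_i = 0$. In particular, this quantity is lower-bounded by the minimum number of combinatorial rectangles in a monochromatic cover of $f^{-1}(1) \times f^{-1}(0)$, where each rectangle is labeled by a single valid index.

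Specializing to $\STCONN$, it suffices to exhibit an $n^{\Omega(\log n)}$ lower bound on the rectangle cover number for the KW relation in which Alice holds a simple directed $s$-$t$ path $P$, Bob holds a vertex 2-coloring $V = A \sqcup B$ with $s \in A$, $t \in B$ (encoding the $s$-$t$ cut), and the valid outputs are edges $(u,v) \in P$ with $u \in A$ and $v \in B$. I would restrict attention to a structured subfamily, letting Alice's paths live in a layered graph with $d = \Theta(\log n)$ layers of width $n/d$ (so a path is determined by a sequence of $d$ vertex choices) and letting Bob's cuts be of threshold form, placing the first $k$ layers into $A$ for some $k \in [d]$.

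On this subfamily the KW relation becomes an instance of the FORK / pointer-chasing game of Grigni-Sipser. The rectangle cover number is then lower-bounded by an inductive density argument: a single edge-labeled rectangle is shown to cover only pairs with very specific agreement between the path and the cut threshold, so that restricting the entire rectangle cover to inputs consistent with any one rectangle yields an uncovered residue containing a sub-instance of essentially the same form with parameters shrunk by a controlled amount. Iterating this reduction $d = \Theta(\log n)$ times and tracking the density at each step gives the desired $n^{\Omega(\log n)}$ bound.

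The hard part will be the combinatorial rectangle lower bound for the FORK-style game itself, which is the true heart of the Karchmer-Wigderson argument. The delicate step is formulating an inductive invariant strong enough to propagate through all $\log n$ layers while retaining a non-trivial density of uncovered inputs, and choosing the hard distribution on (path, cut) pairs so that each rectangle is provably small relative to this distribution. Once such an invariant is in place, the translation back through the KW correspondence to a lower bound on monotone formula size is immediate.
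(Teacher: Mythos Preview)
The paper does not prove this theorem; it is quoted as a known background result due to Karchmer and Wigderson \cite{karchmer1990monotone} and is stated without proof in the introduction. So there is no ``paper's own proof'' to compare against.

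Your sketch is a faithful outline of the original Karchmer--Wigderson argument: the formula-size/communication correspondence, the restriction to a layered instance, and the reduction to a FORK-type search problem are exactly the ingredients of \cite{karchmer1990monotone}. The one place where your description is slightly off is the choice of Bob's inputs: threshold cuts (first $k$ layers in $A$) are too restrictive to drive the inductive density argument, since a single edge label then determines the threshold and a rectangle can be large. In the actual proof Bob holds an arbitrary $2$-coloring of the layered vertex set with $s$ and $t$ on opposite sides, so that the relevant search problem is the full FORK relation on strings over an alphabet of size $n/d$. The round-by-round density argument you allude to (essentially the Grigni--Sipser analysis) needs this richer input set to work. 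With that correction, your plan is the standard proof.
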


As it was already known that $\STCONN$ has polynomial-size monotone circuits, Theorem \ref{thm:KW} implies the separation of monotone classes $\mathsf{mNC^1}$ and $\mathsf{mP}$ (in fact, it shows $\mathsf{mNC^1} \ne \mathsf{mAC^1}$). (In a notable recent development, Potechin \cite{potechin2010bounds} showed that monotone switching networks for $\STCONN$ require size $n^{\Omega(\log n)}$. This result strengthens Theorem \ref{thm:KW} and implies the sharper separation $\mathsf{mL} \ne \mathsf{mNL}$.)

\subsubsection*{Bounded-Depth Formulas vs.\ Circuits}

The {\em bounded-depth} setting refers to the class of unbounded fan-in boolean circuits and formulas of depth $\le d(n)$ for some (not necessarily constant) function $d : \N \to \N$. Unlike the monotone setting, the question of bounded-depth formulas vs.\ questions gives a natural approach to Question \ref{qu:NC1}: by comparing the power of depth-$d$ formulas vs.\ depth-$d$ circuits, we can hope to get a separation for as large a depth $d(n)$ as possible, noting that a super-polynomial separation for any $d(n) = \log n$ would imply $\mathsf{NC^1} \ne \mathsf{AC^1}$ (answering Question \ref{qu:NC1}).

We write $\CircSD(s,d)$ (resp.\ $\FormSD(s,d)$) for the class of languages computable by unbounded fan-in boolean circuits (resp.\ formulas) of size $\le s(n)$ and depth $\le d(n)$. Consider the elementary fact that $\CircSD(s,d) \subseteq \FormSD(s^d,d)$, that is, every depth-$d$ circuit of size $s$ is equivalent to a depth-$d$ formula of size $\le s^d$. In the naive simulation of circuits by formulas, we simply replace overlapping sub-circuits with non-overlapping copies until the circuit becomes a tree. Note that this give a slightly better upper bound of $\text{fan-in}^d$. It is natural to ask: is this naive simulation of depth-$d$ circuits by depth-$d$ formulas asymptotically optimal? To make this question meaningful, we focus on the case where $s(n)$ is any $n^{O(1)}$ and $d(n) \le \log n$. Thus, $\CircSD(n^{O(1)},d) \subseteq \FormSD(n^{O(d)},d)$ and we can ask whether $n^{O(d)}$ can be improved to $n^{o(d)}$.

\begin{qu}\label{qu:max-sep}
For which functions $d(n) \le \log n$ do we have 
\begin{equation}
\tag{$\ast$}\label{eq:it}
  \CircSD(n^{O(1)},d) \nsubseteq \FormSD(n^{o(d)},d)?
\end{equation}
\end{qu}

On the basis of problems like $\STCONN$, we conjecture that (\ref{eq:it}) holds for all $d(n) \le \log n$. Of course, since this (more than) implies $\mathsf{NC}^1 \ne \mathsf{AC}^1$, we should not expect to prove (\ref{eq:it}) all the way to depth $\log n$ anytime soon. On the other hand, more modest depths like $O(\log\log n)$ are well within the range of techniques like switching lemmas (after all, the super-polynomial lower bounds for parity extend to depth $o(\log n/\log\log n)$ \cite{hastad1987computational}). For this reason, it might seem that (\ref{eq:it}) is the kind of statement that ought to be known (or follow from known results) for modest but super-constant $d(n)$. (Note that (\ref{eq:it}) is trivial for constant $d(n) \le O(1)$.) However, it turns out that the status of (\ref{eq:it}) was entirely unknown for all $d(n) \nleq O(1)$. Even the weakest possible separation $\CircSD(n^{O(1)},d) \nsubseteq \FormSD(n^{O(1)},d)$ (i.e.\ $\FormSD(n^{O(1)},d) \subsetneqq \CircSD(n^{O(1)},d)$) was not known to hold for any $d(n) \nleq O(1)$. In this paper, we improve this state of affairs by showing that (\ref{eq:it}) holds for all $d(n) \le \log\log\log n$ (Corollary \ref{cor:CleF}). 

At this point, we should ask: why do the previous techniques (in particular, switching lemmas \cite{hastad1987computational} and approximation by low-degree polynomials \cite{Razborov87,Smolensky87}) fail to distinguish formulas from circuits? In other words, why don't these techniques imply stronger lower bounds for depth-$d$ formula complexity vis-\`a-vis depth-$d$ circuit complexity of a given boolean function?  We suggest that this is the consequence of a certain kind of {\em bottom-up} depth-reduction argument.\footnote{
This style of lower bound has the following elements:
\begin{enumerate}[\ \ \ \hspace{.2pt}$\bullet$]
\setlength{\itemsep}{0pt}
\item
For some notion of ``simple'' functions and some notion of ``approximation'', there is a lemma of the form: if $g$ is the AND or OR of simple functions $f_1,\dots,f_{\mr{poly}(n)}$, then $g$ is approximated by a (slightly less) simple function. (For example: if $f_i$ are small decision trees, then after a random restriction, $g$ simplifies to a small decision tree with high probability; if $f_i$ are low-degree polynomials, then $g$ agrees with a low-degree polynomial up to small error.)
\item
Given a polynomial-size depth-$d$ circuit $C$ sitting on top of a layer of input variables (themselves simple functions), the approximation lemma is applied {\em independently} to all gates at the bottom level (directly above the inputs). $C$ is thus transformed into a circuit $C'$ of depth $d-1$ sitting on top of a layer of simple functions. This depth-reduction step is repeated $d$ times. Finally, we have a function which is sufficiently simple to imply a lower bound for whatever (non-simple) function one has in mind.
\end{enumerate}
It is precisely because the approximation lemma is applied {\em independently} to all bottom-level gates that the distinct between circuits and formulas is lost.}
At the same time, {\em top-down} lower bound techniques (in particular, Karchmer-Wigderson games \cite{karchmer1990monotone}) have never been successful used in the boolean setting, in contrast to the monotone setting.\footnote{One exception is a top-down lower bound for depth-$3$ circuits due to Jukna, Pudl\'ak and H{\aa}stad \cite{haastad1995top}, who pose the problem of proving {\em any} super-polynomial lower bound at depth $4$ by a purely top-down argument.} Our lower bound technique gets around the limitations of previous techniques by a novel combination of bottom-up and top-down arguments. In particular, the part of our proof which distinguishes formulas from circuits is a new top-down argument (Lemma \ref{la:halcali}).

\subsubsection*{Distance $k(n)$ Connectivity}

As with the separation of monotone formulas vs.\ circuits in \cite{karchmer1990monotone}, our separation of bounded-depth formulas vs.\ circuits comes by way of a lower bound for (a parameterized version of) st-connectivity. 
As Wigderson wrote in his excellent survey on graph connectivity \cite{wigderson1992complexity}, ``Of all computational problems, graph connectivity is the one that has been studied on the largest variety of computational models, such as Turing machines, PRAMs, Boolean circuits, decision trees and communication complexity. It has proven a fertile test case for comparing basic resources such as time vs.\ space, nondeterminism vs.\ randomness vs.\ determinism, and sequential vs.\ parallel computation.'' There has been some significant progress in the 20 years since \cite{wigderson1992complexity}. One notable result is Reingold's theorem \cite{reingold2008undirected} that $\USTCONN$ (undirected st-connectivity) $\in \DSPACE(\log n)$. However, many questions remain open. Chief among these is the space complexity of $\STCONN$. Savitch's theorem \cite{savitch1970relationships} that $\STCONN \in \DSPACE(\log^2 n)$ is still the best known upper bound. 

As for lower bounds for $\STCONN$, in addition to various results in monotone models of computation \cite{karchmer1990monotone, potechin2010bounds, raz1989probabilistic, shamir1979depth, tiwari1994direct}, there are results on structured models of computations whose basic operations manipulate pebblings on graphs.
One result of this type, due to Edmonds, Poon and Achlioptas \cite{edmonds1999tight}, gives a tight space lower bound of $\Omega(\log^2 n)$ on the NNJAG model. Another interesting result, in the unusual restricted model of arithmetic circuits with $\times$ gates of odd fan-in, is a tight lower bound of $n^{\Omega(\log n)}$ for $\STCONN$ (or more accurately its algebraic cousin, iterated matrix multiplication) was shown by Nisan and Wigderson \cite{nisan1996lower} using the method of partial derivatives.

In this paper, we consider a version of $\STCONN$ parameterized by distance. For a function $k : \N \to \N$ with $k(n) \le n$, {\em distance $k(n)$ connectivity}, denoted $\STCONN(k(n))$, is the following problem: given a directed graph with $n$ vertices and specified vertices $s$ and $t$, determine whether or not there is a path of length at most $k(n)$ from $s$ to $t$. Unlike $\STCONN$ and $\USTCONN$, the directed and undirected versions of distance $k(n)$ connectivity are essentially equivalent.\footnote{The reduction from $\STCONN(k(n))$ to $\USTCONN(k'(n'))$ converts a directed graph on $n$ vertices into a layered undirected graph on $n' = (k+1)n$ vertices where $k'(n') = k(n)$.}
It is easy to show that $\STCONN(k(n))$ has circuits (moreover, semi-unbounded monotone circuits) of size $O(kn^3)$ and depth $2\log k$ using the recursive doubling (a.k.a.\ repeated squared) method of Savitch \cite{savitch1970relationships}. (At the expense of larger depth, one can get smaller circuits of size $O(kn^{2.38})$ using fast matrix multiplication.)

An important relationship between $\STCONN$ and its parameterized version $\STCONN(k(n))$ is the fact every algorithm for $\STCONN(k(n))$ ``scales up'' to an algorithm for $\STCONN$ by recursive $k$th powering. (Conversely, every lower bound for $\STCONN$ ``scales down'' to a lower bound for $\STCONN(k(n))$. In particular, Theorem \ref{thm:KW} implies that monotone formulas solving $\STCONN(k(n))$ require size $n^{\Omega(\log k)}$.) For circuits, we have the implication:
\[
  \STCONN(k(n)) \in \CircSD(s,d) \,\Longrightarrow\, \STCONN \in \CircSD(n^{O(1)}\cdot s,\frac{\log n}{\log k}\cdot d).
\]
As noted in \cite{wigderson1992complexity}, if $\STCONN(k(n))$ has polynomial-size circuits of depth $o(\log k)$, then $\STCONN$ has polynomial-size circuits of depth $o(\log n)$ and hence $\STCONN \in \DSPACE(o(\log^2 n))$. This observation strongly motivates the following:

\begin{qu}\label{qu:min-depth}
What is the minimum depth of polynomial-size circuits solving $\STCONN(k(n))$?
\end{qu}

Furst, Saxe and Sipser \cite{FSS84} showed that $\STCONN \notin \AC^0$ via the reduction from parity to $\STCONN$. Via the same reduction, it follows from the parity lower bound of H{\aa}stad \cite{hastad1987computational} that $\STCONN(k(n)) \notin \AC^0$ for all $k(n) \nleq \log^{O(1)}n$. However, this says nothing when $k(n) \le \log^{O(1)} n$. 

Ajtai \cite{ajtai1989first} proved the first lower bound for small distances $k(n)$, showing that $\STCONN(k(n))$ $\notin \mathsf{AC^0}$ for all $k(n) \nleq O(1)$. Via an explicit vetsion of Ajtai's originally non-constructive proof, Bellantoni, Pitassi and Urquhart \cite{bellantoni1992approximation} proved a lower bound of $\Omega(\log^\ast k)$ on the depth of polynomial-size circuits solving $\STCONN(k(n))$. This was subsequently improved to $\Omega(\log\log k)$ for all $k(n) \le \log^{O(1)} n$ by Beame, Impagliazzo and Pitassi \cite{beame1998improved}, using a special-purpose ``connectivity switching lemma'' tailored to $\STCONN(k(n))$. It was left as an open problem to further narrow the gap between the $O(\log k)$ and $\Omega(\log\log k)$ upper and lower bounds. In this paper, we completely close this gap by proving a lower bound of $\Omega(\log k)$ for all $k(n) \le \log\log n$ (Corollary \ref{cor:min-depth}). (While our current proof is restricted to $k(n) \le \log\log n$, we believe this can be extended $k(n) \le \log^{O(1)}\log n$ as in \cite{beame1998improved}.) The significance of this result is that, for small but super-constant $k(n)$, we rule out the possibility of showing that $\STCONN \in \DSPACE(o(\log^2 n))$ by constructing polynomial-size circuits for $\STCONN(k(n))$ of depth $o(\log k)$.

\section{Our Results}

Our main theorem is a tight lower bound on the size of bounded-depth formulas solving distance $k(n)$ connectivity.

\begin{thm}[Main Result]\label{thm:main}
Formulas of depth $\log n/(\log\log n)^{O(1)}$ solving $\STCONN(k(n))$ have size $n^{\Omega(\log k)}$ for all $k(n) \le \log\log n$.
\end{thm}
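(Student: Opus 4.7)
The plan is to introduce a combinatorial complexity measure---call it pathset complexity---that captures how efficiently sets of partial paths of length at most $k$ in an $n$-vertex universe can be assembled from trivial building blocks (single-edge pathsets) using only union and relational join, subject to density constraints that prevent cheating by overly dense intermediate relations. The proof then splits into two mostly independent halves: a top-down reduction showing that every small bounded-depth formula solving $\STCONN(k(n))$ implicitly yields such an assembly of pathsets with small total cost, and a purely combinatorial lower bound on the pathset complexity of the genuine length-$k$ path relation.

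For the formula-to-pathset direction, I would fix a candidate formula $F$ of depth $d = \log n/(\log\log n)^{O(1)}$ and size $s$ and, along the lines of the Karchmer-Wigderson paradigm, associate to each subformula $F_v$ a relation consisting of those tuples on which $F_v$ ``certifies'' a fragment of a path when $F$ is run on a carefully designed hard input distribution (random layered graphs with a planted short path, refined by a switching-lemma-style random restriction). The fact that $F$ is a formula rather than a circuit is crucial here: because no subformula is shared, the cost of a subformula can be charged locally and then aggregated up the tree without multiplicative blowup, so that $\mathrm{OR}$ gates become unions and $\mathrm{AND}$ gates become joins in the resulting assembly of the path relation. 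Random restrictions from a distribution tailored to the small-$k$ regime are used to kill off subformulas that would produce overly dense intermediate pathsets, bringing the output into a form that meets the density constraints of the lower bound.

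For the combinatorial lower bound, the idea is to show that any assembly of the length-$k$ path relation from singletons using $\cup$ and $\bowtie$, in which every intermediate pathset is constrained to be sparse in a precise sense, must use $n^{\Omega(\log k)}$ operations. The intuition is a ``scales-don't-skip'' phenomenon: joining a pathset of length $k_1$ with one of length $k_2$ either produces a relation much denser than a genuine pathset of length $k_1+k_2$ (violating the density constraint) or loses almost all of the path information, so one is forced to construct pathsets at roughly every doubling scale from $1$ up to $k$. At each of the $\Theta(\log k)$ scales, a counting / entropy argument over the random-graph distribution forces an extra factor of $n^{\Omega(1)}$, and these multiply to give the claimed bound.

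The hard part will be choosing the density/sparsity constraints in the definition of pathset complexity so that both halves of the argument line up: the constraints must be weak enough that a random restriction of depth-$d$ formula really does enforce them with high probability, yet strong enough that every legal assembly of the length-$k$ path relation is genuinely expensive. Balancing these two requirements---especially since $k$ may be as slowly growing as a function of $n$ while we still want the full $\log k$ factor in the exponent---is what presumably forces both the restriction $k(n) \le \log\log n$ and the $(\log\log n)^{O(1)}$ slack in the depth bound, and this is where I would expect most of the technical work to lie.
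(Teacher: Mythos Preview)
Your high-level architecture is right and matches the paper's: a pathset-complexity measure built from union and join under density constraints, a reduction from bounded-depth formulas that exploits the tree structure (so costs aggregate additively up the formula), and a separate combinatorial lower bound on pathset complexity; you also correctly locate the source of the $k(n)\le\log\log n$ restriction in the tension between the two halves.

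The genuine gap is in your formula-to-pathset reduction. You propose that ``OR gates become unions and AND gates become joins'' along Karchmer--Wigderson lines, but the theorem is for \emph{non-monotone} boolean formulas, and that structural correspondence does not survive negations. The paper's mechanism is different and, crucially, gate-type-agnostic: for each node $f$, pattern graph $G$, and random below-threshold graph $\Gamma$ (not a planted-path model), one takes the pathset $\A^\Gamma_{f,G}$ of tuples $x$ such that, after fixing all coordinates outside the $G$-shaped window at $x$ according to $\Gamma$, the restricted $f$ still depends on \emph{every} coordinate in that window. For any binary gate the live coordinates of the parent lie in the union of those of the children, giving the single covering
\[
\A^\Gamma_{f,G}\ \subseteq\ \A^\Gamma_{f_1,G}\cup\A^\Gamma_{f_2,G}\cup\bigcup_{\substack{G_1,G_2\subset G\\ G_1\cup G_2=G}}\A^\Gamma_{f_1,G_1}\bowtie\A^\Gamma_{f_2,G_2}
\]
regardless of whether the gate is AND or OR; union versus join reflects how $G$ splits between the children, not the gate label. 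The switching lemma then shows these pathsets are small with high probability, and the formula's tree structure lets one unroll this covering down to the leaves with only a $2^{O(k^2)}\cdot\mathrm{depth}^k$ overhead. Your ``scales don't skip'' picture of the combinatorial half has the right flavor, but the actual argument is more structured: one first passes to pathset complexity relative to a fixed \emph{pattern} (a binary tree prescribing the join shape), losing exactly the $2^{O(2^k)}$ factor that forces $k\le\log\log n$, and then lower-bounds that via a potential function $\Phi$ on patterns governed by a coupled pair of recursive inequalities, rather than a per-scale entropy count.
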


To be precise, we get a lower bound of $n^{(1/6)\log k - O(1)}$ for formulas of depth $\smash{c\log n/\max\{k^3{\log}{}^2 k,}$ $\smash{k\log k\log\log n\}}$ 
where $c > 0$ is an absolute constant. Moreover, this lower bound is not only worst-case: it applies to formulas solving $\STCONN(k(n))$ in the natural {\em average-case} sense (see \S\ref{sec:conclusion}).

The following two corollaries of Theorem \ref{thm:main} were already mentioned in the introduction. As discussed, these corollaries answer Questions \ref{qu:max-sep} and \ref{qu:min-depth} for a limited range of $d(n)$ and $k(n)$.

\begin{cor}\label{cor:min-depth}
Polynomial-size circuits solving $\STCONN(k(n))$ require depth $\Omega(\log k)$ for all $k(n) \le \log\log n$.
\end{cor}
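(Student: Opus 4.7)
The plan is to derive Corollary \ref{cor:min-depth} directly from Theorem \ref{thm:main} by contraposition, using the standard simulation of bounded-depth circuits by bounded-depth formulas. First I will suppose for contradiction that $\STCONN(k(n))$ admits a circuit family of size $s(n) = n^{O(1)}$ and depth $d(n) = o(\log k)$, with $k(n) \le \log\log n$. My goal is to then invoke Theorem \ref{thm:main} to get a contradiction on formula size.

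The main step is the elementary simulation $\CircSD(s,d) \subseteq \FormSD(s^d,d)$ recalled in the introduction: duplicate every sub-circuit with fan-out greater than $1$ until the circuit becomes a tree. Applied to our hypothesized circuit, this produces a formula of depth $d$ and size at most $s^d = n^{O(d)} = n^{o(\log k)}$ solving $\STCONN(k(n))$. Before quoting Theorem \ref{thm:main} I need to check that its depth hypothesis is satisfied: since $k(n) \le \log\log n$ we have $\log k \le \log\log\log n$, hence $d(n) = o(\log k) = o(\log\log\log n)$, which is comfortably below the depth bound $\log n/(\log\log n)^{O(1)}$ required by Theorem \ref{thm:main} for all sufficiently large $n$.

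I can then apply Theorem \ref{thm:main} to this formula, obtaining a size lower bound of $n^{\Omega(\log k)}$. This contradicts the upper bound $n^{o(\log k)}$ from the previous paragraph. Hence no polynomial-size circuit family of depth $o(\log k)$ can solve $\STCONN(k(n))$ for $k(n) \le \log\log n$, which is the claim.

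Since this is a purely formal deduction from Theorem \ref{thm:main} plus the $s^d$ circuit-to-formula blow-up, there is no real obstacle here; the only thing to be mildly careful about is checking that the range of $d$ dictated by the contradiction hypothesis indeed falls inside the depth regime to which Theorem \ref{thm:main} applies, which is immediate from the assumption $k(n) \le \log\log n$.
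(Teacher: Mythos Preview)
Your proof is correct and follows essentially the same approach as the paper: assume a contradiction, apply the naive $\CircSD(s,d) \subseteq \FormSD(s^d,d)$ simulation to obtain a depth-$d$ formula of size $n^{o(\log k)}$, verify that $d(n) = o(\log\log\log n)$ lies within the depth regime of Theorem~\ref{thm:main}, and invoke that theorem for the contradiction.
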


\begin{proof}
For contradiction, assume $C$ is a circuit of size $s(n) = n^{O(1)}$ and depth $d(n) = o(\log k)$ solving $\STCONN(k(n))$ for some $k(n) \le \log\log n$. By the naive simulation of circuits by formulas, $C$ is equivalent to a depth-$d$ formula of size at most $s^d = n^{o(\log k)}$. But since $d(n) = o(\log\log\log n) \ll \log n/(\log\log n)^{O(1)}$, we get contradiction with Theorem \ref{thm:main}.
\end{proof}

\begin{cor}\label{cor:CleF}
It is impossible to simulate polynomial-size depth-$d$circuits by depth-$d$ formulas of size $n^{o(d)}$ (that is, we get the optimal separation $\CircSD(n^{O(1)},d) \nsubseteq \FormSD(n^{o(d)},d)$) for all $s(n) = n^{O(1)}$ and $d(n) \le \log\log\log n$.
\end{cor}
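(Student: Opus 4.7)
The plan is to deduce this separation directly from Theorem \ref{thm:main}, applied to $\STCONN(k(n))$ for a carefully chosen $k = k(n)$, in exact analogy with the proof of Corollary \ref{cor:min-depth}. The key idea is that recursive doubling places $\STCONN(k(n))$ in $\CircSD(O(kn^3), 2\log k)$, so if $k$ grows roughly like $2^{d/2}$ then $\STCONN(k(n))$ already lives in $\CircSD(n^{O(1)}, d)$, while Theorem \ref{thm:main} supplies a matching depth-$d$ formula lower bound of the form $n^{\Omega(\log k)} = n^{\Omega(d)}$, which defeats any bound of the form $n^{o(d)}$.

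Concretely, given $d(n) \le \log\log\log n$, I would choose $k(n) \defeq 2^{\lfloor d(n)/2 \rfloor}$. Then $2\log k(n) \le d(n)$, so recursive doubling gives $\STCONN(k(n)) \in \CircSD(O(kn^3), d) \subseteq \CircSD(n^{O(1)}, d)$, which witnesses the ``circuit side'' of the separation. For the ``formula side'', I would verify the two hypotheses of Theorem \ref{thm:main}: the distance bound $k(n) \le 2^{(\log\log\log n)/2} \le \log\log n$ holds by construction, and the depth bound $d(n) \le \log\log\log n$ is comfortably below $\log n / (\log\log n)^{O(1)}$. Theorem \ref{thm:main} then says that every depth-$d$ formula solving $\STCONN(k(n))$ has size $n^{\Omega(\log k)} = n^{\Omega(d)}$. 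Since $\log k(n) = \Theta(d(n))$, this size bound is not of the form $n^{o(d)}$, so $\STCONN(k(n)) \notin \FormSD(n^{o(d)}, d)$, completing the separation.

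There is no genuine obstacle beyond the bookkeeping in the previous paragraph; all the hard work is encapsulated in Theorem \ref{thm:main}. The only mildly delicate point is to see \emph{why} the hypothesis $d(n) \le \log\log\log n$ is exactly the right range: the lower-bound regime of Theorem \ref{thm:main} requires $k \le \log\log n$, while using $\STCONN(k(n))$ to witness an $n^{o(d)}$ versus $n^{O(d)}$ gap via recursive doubling forces $k \ge 2^{\Omega(d)}$, and these two constraints are simultaneously satisfiable precisely when $d(n) \le \log\log\log n$. Any strengthening of this corollary (say to $d(n) \le \log\log\log\log n \cdot \omega(1)$) would have to come from strengthening Theorem \ref{thm:main} itself, e.g.\ by pushing the allowed range of $k(n)$ beyond $\log\log n$, which the author already flags as plausible.
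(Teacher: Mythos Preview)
Your proposal is correct and follows essentially the same approach as the paper: set $k(n) = 2^{d(n)/2}$ (you take the floor, which is a bit more careful), use recursive doubling for the circuit upper bound, and invoke Theorem~\ref{thm:main} for the formula lower bound after checking $k(n) \le \log\log n$ and $d(n) \ll \log n/(\log\log n)^{O(1)}$. The only quibble is in your final aside: a strengthening would extend the result to \emph{larger} $d(n)$, so the parenthetical ``say to $d(n) \le \log\log\log\log n \cdot \omega(1)$'' is muddled, but this does not affect the proof itself.
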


\begin{proof}
The separating language is $\STCONN(k(n))$ where $k(n) = 2^{d(n)/2}$ ($\le \log\log n$). We have $\STCONN(k(n)) \in \CircSD(n^{O(1)},d)$ by the circuits (of depth $2\log k = d$) which implement recursive doubling. The lower bound $\STCONN(k(n)) \notin \FormSD(n^{o(d)},d)$ is by Theorem \ref{thm:main}, noting that $d(n) \le \log\log\log n \ll \log n/(\log\log n)^{O(1)}$.
\end{proof}

\section{Proof Overview}

Our proof technique is centered on a new notion of {\em pathset complexity}. Informally, a {\em pathset} is a subset $\A \subseteq [n]^{k+1}$ whose elements represent potential paths of length $k$ in a graph of size $n$. The {\em pathset complexity} of $\A$, denoted $\cost(\A)$, measures of the minimum number of operations required to construct $\A$ via unions ($\cup$) and relational join ($\bowtie$), subject to certain density constraints. (The formal definition of $\cost(\A)$, given in \S\ref{sec:pathset}, is not important for this overview.)

The proof of Theorem \ref{thm:main} has two parts. Part 1 shows that every bounded-depth formula $F$ solving $\STCONN(k(n))$ implies an upper bound on the pathset complexity of a certain (random) pathset $\A^\Gamma$. Part 2 is a general lower bound on $\cost(\A)$ for arbitrary pathsets $\A$. Combining these two parts, we get the desired $n^{\Omega(\log k)}$ lower bound on the size of $F$.

Before explaining Parts 1 and 2 in more detail, we state the key property of $\STCONN(k(n))$ which our proof exploits. Instances for $\STCONN(k(n))$ are directed graphs with vertex set $[n]$ and distinguished vertices $s$ and $t$ (without loss of generality $s=1$ and $t=2$). An {\em $st$-path} is a sequence $(x_0,\dots,x_k) \in [n]^{k+1}$ such that $x_0 = s$ and $x_k = t$ and $x_i \ne x_j$ for all $i \ne j$. 

Denote by $\Gamma$ the random directed graph with edge probability $1/n$. (Note that $1/n$ is below the threshold for $\STCONN(k(n))$, that is, almost surely $\Gamma$ contains no $st$-path of length $k$.) Define $\A^\Gamma$ as the set of $st$-paths $(x_0,\dots,x_k) \in [n]^{k+1}$ such that
\begin{itemize}
  \item
    $(x_0,x_1),\dots,(x_{k-1},x_k)$ are non-edges of $\Gamma$,
  \item
    $\Gamma \cup \{(x_0,x_1),\dots,(x_{k-1},x_k)\}$ contains a unique $st$-path of length $k$ (namely, $(x_0,\dots,x_k)$).
\end{itemize}
Then the (average-case) property of $\STCONN(k(n))$ that our proof exploits is:

\paragraph{{\bf Key Property }
{\bf(\S\ref{sec:output}):}} {\it
Almost surely, $\A^\Gamma$ contains 99\% of $st$-paths of length $k$.}\bigskip

We now state Parts 1 and 2 of the proof of Theorem \ref{thm:main} in more detail.

\paragraph{\bf{Part 1 
(\S\ref{sec:outline}--\ref{sec:smallness}):}} 
{\it Suppose $F$ is a formula of depth $\log n/(\log\log n)^{O(1)}$ solving $\STCONN(k(n))$. Then, almost surely (with respect to $\Gamma$),}
\begin{equation}\label{eq:part1}
  \size(F) \ge 2^{-O(k^2)} \cdot n^{-O(1)} \cdot \cost(\A^\Gamma).
\end{equation}

\paragraph{\bf{Part 2 (\S\ref{sec:patterns}--\ref{sec:lower-bound}):}}
{\it For all pathsets $\A \subseteq [n]^{k+1}$, writing $\delta(\A) \defeq |\A|/n^{k+1}$ for the density of $\A$, 
}
\begin{equation}\label{eq:part2}
  \cost(\A) \ge 
  2^{-O(2^k)} 
  \cdot
  n^{\Omega(\log k)}
  \cdot 
  \delta(\A).
\end{equation}

Combining (\ref{eq:part1}) and (\ref{eq:part2}) with $\delta(\A^\Gamma) \ge .99n^{-2}$ (by the key property), we get the lower bound $\size(F) \ge 2^{-O(2^k)} \cdot n^{\Omega(\log k)}$. Since $2^{-O(2^k)}$ is $n^{-O(1)}$ for $k(n) \le \log\log n$, Theorem \ref{thm:main} is proved.

Part 1 builds on the technique of \cite{rossman2008constant,rossman2010average}. An essential new ingredient, which distinguishes formulas from circuits, is a top-down argument (Lemma \ref{la:halcali}) relating formula size to pathset complexity. 

For Part 2, we develop a combinatorial framework for studying pathset complexity. This involves analyzing the {\em pattern} of joins which predominates the construction of a given pathset $\A$. In \S\ref{sec:patterns} we define an auxiliary notion of {\em pathset complexity with respect to a pattern}, denoted $\pcost(\A)$. Part 2 then consists of 2a and 2b:

\paragraph{\bf{Part 2a (\S\ref{sec:relationship}):}}
{\it For every pathset $\A$, there exists $\A' \subseteq \A$ such that $\cost(\A) \ge \pcost(\A')$ and 
$\delta(\A') \ge 2^{-O(2^k)} \cdot \delta(\A)$.}

\paragraph{\bf{Part 2b (\S\ref{sec:lower-bound}):}}
{\it For all pathsets $\A'$, $\pcost(\A') \ge n^{\Omega(\log k)} \cdot \delta(\A')$.}\bigskip

Part 2a is relatively straightforward. This move from $\cost$ to $\pcost$ is precisely where we lose the factor of $2^{O(2^k)}$, which is the reason that our main theorem is limited to $k(n) \le \log\log n$. (If this factor can be removed, which I believe is possible (with a lot more work) within the current framework, then Theorem \ref{thm:main} and Corollary \ref{cor:min-depth} would hold up to $k(n) \le \log^{1/3} n$ and Corollary \ref{cor:CleF} would hold up to $d(n) \le \log\log n$.)

Part 2b is the true combinatorial lower bound at the heart the paper. The proof involves an intricate induction on patterns.

\subsubsection*{Organization of the Paper}

Section \ref{sec:prelims} sets out the basic terminology and notation for the paper. Section \ref{sec:pathset} introduces the key notion of {\em pathset complexity}. Sections \ref{sec:outline}--\ref{sec:smallness} contain Part 1 of the proof of Theorem \ref{thm:main}. Sections \ref{sec:patterns}--\ref{sec:lower-bound} contain Part 2 of the proof. We state some conclusions and discuss future directions in Section \ref{sec:conclusion}. Three appendices (Sections \ref{sec:key-examples}--\ref{sec:rectangular}) contain supplementary material including key examples and relatively easier special cases of our main lower bound.

\newpage

\section{Preliminaries}\label{sec:prelims}

Let $n$ be an arbitrary positive integer (which we view as growing to infinity). Let $[n] \defeq \{1,\dots,n\}$. We note that, for all purposes in this paper, $[n]$ may be regarded as an arbitrary fixed set of size $n$. Let $k = k(n)$ and $d = d(n)$ be arbitrary functions of $n$. As parameters, $k$ represents {\em distance} and $d$ represents {\em depth}. No bound on $k$ or $d$ is assumed throughout the paper; assumptions like $k(n) \le \log\log n$ are explicitly stated where needed. All constants in asymptotic notation ($O(\cdot)$, etc.)\ are universal (with no dependence on $n,k,d$).

\prelim{Circuits and Formulas}

The {\em circuits} and {\em formulas} considered in this paper are unbounded fan-in boolean circuits and formulas with a single output node and NOT gates at the bottom level. Formally, a {\em circuit} is a finite acyclic directed graph with a unique output (node of out-degree $0$) where each input (node of in-degree $0$) is labeled by a literal (i.e.\ $X_i$ or $\BAR X_i$) and each gate (node of in-degree $\ge 1$) is labeled by AND or OR. A {\em formula} is a tree-like circuit in which every node other than the output has out-degree $1$. The {\em size} of a circuit is the number of gates, while the {\em size} of a formula is the number of leaves. (For a formula $F$, the circuit-size of $F$ equals the formula-size of $F$ minus $1$.)

\prelim{Graphs}

All {\em graphs} in this paper are directed graph $G = (V_G,E_G)$ where $V_G$ is a (possibly empty) set and $E_G \subseteq V_G \times V_G$. The edge from $v$ to $w$ is written simply as $\edge{v}{w}$ to cut down on unnecessary parentheses.

Two important graphs in this paper are $P_k$ (the directed path of length $k$) and $P_{k,n}$ (the ``complete $k$-layered graph'' with $k+1$ layers of $n$ vertices and $kn^2$ edges). Formally, let
\[
  P_k = (V_k,E_k)
  \text{ where }
  V_k = \{v_0,\dots,v_k\}
  \text{ and } 
  E_k = \{\edge{v_i}{v_{i+1}} : 0 \le i < k\}
\] 
where $v_0,\dots,v_k$ are fixed abstract vertices. We will usually omit subscripts writing simply $v$ and $\edge{v}{w}$ for arbitrary elements of $V_k$ and $E_k$. To define $P_{k,n}$, we create $(k+1)n$ fresh vertices denoted $\lift{v}{i}$ for each $v \in V_k$ and $i \in [n]$. Then
\[
  P_{k,n} = (V_{k,n},E_{k,n})
  \text{ where }
  V_{k,n} = \{\lift{v}{i} : v \in V_k,\, i \in [n]\}
  \text{ and }
  E_{k,n} = \{\edge{\lift{v}{i}}{\lift{w}{j}} : \edge{v}{w} \in E_k,\, i,j \in [n]\}.
\]

We refer to subgraphs $\Gamma \subseteq P_{k,n}$ with $V_\Gamma = V_{k,n}$ as {\em $k$-layered graphs}. Throughout the paper, $\Gamma$ consistently represents a (random) $k$-layered graph, while $G,H,K$ are reserved for subgraphs of $P_k$. We sometimes view $\Gamma$ as the input to a circuit or formula; in this case, we identify the set of layered graphs with $\{0,1\}^N$ where $N$ is a set of $kn^2$ variables indexed by elements of $E_{k,n}$.

\prelim{Layered Distance $k(n)$ Connectivity}

As with previous lower bounds for distance $k(n)$ connectivity \cite{ajtai1989first, beame1998improved}, we consider a variant of the problem on $k$-layered graphs. Let $s,t$ denote vertices $\lift{v_0}{1},\lift{v_k}{1}$ respectively. {\em Layered distance $k(n)$ connectivity} is the problem of determining whether a layered graph $\Gamma \in \{0,1\}^N$ contains a path from $s$ to $t$. Following \cite{beame1998improved}, we denote this problem by $\PATH(k,n)$.
The layered and unlayered versions of distance $k(n)$ connectivity are essentially equivalent.\footnote{Since $k$-layered graphs are graphs with $(k+1)n$ vertices, there is a trivial reduction from $\PATH(k,n)$ to $\STCONN(k'(n'))$ where $n' = (k+1)n$ and $k'(n') = k$. In the opposite direction, there is a simple reduction from $\STCONN(k(n))$ to $\PATH(k,n)$ which converts graphs to $k$-layered graphs.} This allows us to restate Theorem \ref{thm:main} as a lower bound on $\PATH(k,n)$: 

\begin{reptheorem}{thm:main}
\textup{(restated)}
\itshape
Solving $\PATH(k,n)$ on formulas of depth $\smash{\thedepth}$ requires size $n^{(1/6)\log k - O(1)}$ for all $k(n) \le \log\log n$.
\end{reptheorem}

This restatement includes explicit expressions $n^{(1/6)\log k - O(1)}$ for $n^{\Omega(\log k)}$ and $\smash{\thedepth}$ for $\log n/(\log\log n)^{O(1)}$.\footnote{As mentioned earlier, our proof actually extends to depth $O(\log n/\max\{k^3{\log}{}^2 k,\, k\log k\log\log n\})$; in particular, this is $O(\frac{\log n}{k\log k\log\log n})$ for very small $k(n) \le \log^{1/3}\log n$. We state Theorem \ref{thm:main} with depth $\thedepth$ for the sake of simplicity.}

\prelim{Boolean Functions and Restrictions}

Let $f : \{0,1\}^I \to \{0,1\}$ be a boolean function where $I$ is an arbitrary finite set (of ``variables''). We say that a variable $i \in I$ is {\em live} with respect to $f$ if there exists $x \in \{0,1\}^N$ such that $f(x) \ne f(x')$ where $x'$ equals $x$ with its $i$th coordinate flipped. Let $\Live(f) \defeq \{i \in I : i \text{ is live w.r.t.\ }f\}$.

A {\em restriction} on $I$ is any function $\theta : I \to \{0,1,\ast\}$. We denote by $f{\lceil}\theta : \{0,1\}^{\theta^{-1}(\ast)} \to \{0,1\}$ the function (over the ``unrestricted'' variables $i$ such that $\theta(i) = \ast$) obtained from $f$ by applying the restriction $\theta$.

\prelim{Probabilistic Notation}

For a finite set $I$ and $0 \le p,q \le 1$, we write:
\begin{itemize}
\item
$x \in \{0,1\}^I_p$ for the random tuple $x \in \{0,1\}^I$ where $\Pr[\,x_i = 1\,] = p$ independently for all $i \in I$ (in particular, we will consider the random layered graph $\Gamma \in \{0,1\}^N_{1/n}$),
\item
$R \subseteq_p I$ for the random subset $R$ of $I$ where $i \in R$ independently with probability $p$ for all $i \in I$,
\item
$\theta \in \mc R(p,q)$ for the random restriction $\theta : I \to \{0,1,\ast\}$ where $\Pr[\,\theta(i) = \ast\,] = q$ and $\Pr[\,\theta(i) = 1\,] = (1-q)p$ for all $i \in I$. 
\end{itemize}
Whenever we say {\em almost surely}, this is understood to mean {\em asymptotically almost surely} as $n \to \infty$ (i.e.\ with probability that goes to $1$ as $n \to \infty$).

\prelim{Tuples and Relations}

The following notation pertains to ``$V$-ary'' tuples $x \in [n]^V$ and relations $\A \subseteq [n]^V$ where $V$ is an arbitrary finite set.

\begin{df}[$V$-tuples] 
For $x \in [n]^V$ and $S \subseteq V$, we denote by $x_S \in [n]^S$ the restriction of $x$ to coordinates in $S$. For $x \in [n]^V$ and $y \in [n]^W$ where $V \cap W = \emptyset$, let $xy \in [n]^{V \cup W}$ denote the unique $z \in [n]^{V \cup W}$ such that $z_i = x_i$ for all $i \in V$ and $z_j = y_j$ for all $j \in W$; here $xy = yx$, as there is no intrinsic linear order on $V \cup W$. We adopt the convention $[n]^\emptyset = \{()\}$ where $()$ denotes the unique $\emptyset$-tuple.
\end{df}

\begin{df}[Join]
For finite sets $V$ and $W$ and $\A \subseteq [n]^V$ and $\B \subseteq [n]^W$, the {\em join} of $\A$ and $\B$ is the set
\[
   \A \bowtie \B \defeq \{x \in [n]^{V \cup W} : x_V \in \A \text{ and } x_W \in \B\}.
\]
\end{df}

The join operation $\bowtie$ is a hybrid of intersection $\cap$ and cartesian product $\times$: if $V = W$ then $\A \bowtie \B = \A \cap \B$, and if $V \cap W = \emptyset$ then $\A \bowtie \B$ is the product $\A \times \B$. Note that $\A \bowtie \emptyset = \emptyset$ and $\A \bowtie \{()\} = \A$.

\newpage

\begin{df}[Density, Projection, Restriction]\label{df:density-etc}
Let $\mc A \subseteq [n]^V$.
\begin{enumerate}[(i)]
\item
The {\em density} of $\mc A$ is defined by
$\ds\delta(\A) \defeq |\A|\,/\,n^{|V|}$.
\item
For $S \subseteq V$, the {\em $S$-projection} and {\em $S$-projection density} of $\mc A$ are defined by
\begin{align*}
  \proj_S(\A) 
    &\defeq \{x_S : x \in \A\},
  \quad\ \ 
  \pi_S(\A) 
    \defeq \delta(\proj_S(\A)).
\end{align*}
That is, $\pi_S(\A) = |\proj_S(\A)| \,/\, n^{|S|}$, as $\delta$ here refers to the density of the $S$-ary relation $\proj_S(\A) \subseteq [n]^S$.
\item
For $S \subseteq V$ and $z \in [n]^{V \setminus S}$, the {\em $S$-restriction} of $\A$ at $z$ and {\em maximum $S$-restriction density} of $\A$ are defined by
\begin{align*}
  \RHO{S}{\A}{z} 
    &\defeq \{y \in [n]^S : yz \in \A\},
  \quad\ \ 
  \mu_S(\A) 
    \defeq \max_{z \in [n]^{V \setminus S}} 
  \delta(\RHO{S}{\A}{z}).
\end{align*}
It will be convenient (later on in \S\ref{sec:proj-rest}) to extend this notation as follows: for any sets $S$ and $\BAR S$ such that $S \cap \BAR S = \emptyset$ and $V \subseteq S \cup \BAR S$ and any $z \in [n]^{\BAR S}$, let $\RHO{S}{\A}{z}$ be understood as $\RHO{V \cap S}{\A}{z'}$ where $z' = z_{V \cap \BAR S}$.
\end{enumerate}
\end{df}

We conclude this section with a lemma which gives some basic inequalities relating the densities of projections, restrictions and joins. In particular, inequality (\ref{eq:basic2}), bounding the density of a join, will play a crucial role later on.

\begin{la}\label{la:basic-ineqs}
For all $\A \subseteq [n]^V$ and $\B \subseteq [n]^W$ and $S^{-} \subseteq S \subseteq S^{+} \subseteq V$ and $T \subseteq W$,
\begin{enumerate}[\quad\,\normalfont(a)]
  \item\label{eq:basic0}
    $\mu_{S^+}(\A) 
    \le \mu_S(\A) 
    \le \mu_S(\proj_{S^+}(\A)) 
    \le \pi_S(\A) 
    \le \pi_{S^-}(\A)$,
  \item\label{eq:basic1}
    $\delta(\A) \le \pi_S(\A) \cdot  \mu_{V \setminus S}(\A)$,
  \item\label{eq:basic2}
    $\delta(\A \bowtie \B) \le \pi_S(\A) \cdot  \mu_{T \setminus S}(\proj_T(\B)) \cdot  \mu_{(V \cup W) \setminus (S \cup T)}(\A \bowtie \B)$.
\end{enumerate}
\end{la}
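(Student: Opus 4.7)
All three parts reduce to elementary counting, so the plan is more about clean bookkeeping than a deep idea. The one auxiliary identity I will invoke repeatedly is the averaging formula: for $S \subseteq S^+ \subseteq V$ and $z \in [n]^{V \setminus S^+}$,
\[
  \delta(\RHO{S^+}{\A}{z}) = \frac{1}{n^{|S^+ \setminus S|}} \sum_{y \in [n]^{S^+ \setminus S}} \delta(\RHO{S}{\A}{yz}),
\]
which is immediate from Definition \ref{df:density-etc}(iii). Since a maximum is at least an average, this yields the first inequality $\mu_{S^+}(\A) \le \mu_S(\A)$ of part (a). The second, $\mu_S(\A) \le \mu_S(\proj_{S^+}(\A))$, follows from the set inclusion $\RHO{S}{\A}{z} \subseteq \RHO{S}{\proj_{S^+}(\A)}{z_{S^+ \setminus S}}$, which holds since $yz \in \A$ implies $y z_{S^+ \setminus S} \in \proj_{S^+}(\A)$. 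The third, $\mu_S(\proj_{S^+}(\A)) \le \pi_S(\A)$, holds because every $S$-restriction of $\proj_{S^+}(\A)$ is a subset of $\proj_S(\proj_{S^+}(\A)) = \proj_S(\A)$. Finally, $\pi_S(\A) \le \pi_{S^-}(\A)$ follows from the observation that each element of $\proj_{S^-}(\A)$ has at most $n^{|S \setminus S^-|}$ lifts to $\proj_S(\A)$, so $|\proj_S(\A)| \le n^{|S \setminus S^-|} |\proj_{S^-}(\A)|$ and we divide by $n^{|S|}$.

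For (b), I will partition $\A$ by its $S$-projection: writing each $x \in \A$ uniquely as $yz$ with $y = x_S$ and $z = x_{V \setminus S}$, only $y \in \proj_S(\A)$ contribute, and the number of valid $z$ for each such $y$ is $n^{|V \setminus S|} \delta(\RHO{V \setminus S}{\A}{y}) \le n^{|V \setminus S|} \mu_{V \setminus S}(\A)$. Summing over $y \in \proj_S(\A)$ and dividing by $n^{|V|}$ gives the claim.

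For (c), the plan is to apply (b) to $\A \bowtie \B \subseteq [n]^{V \cup W}$ with the role of ``$S$'' played by $S \cup T$, which immediately reduces the statement to the bound $\pi_{S \cup T}(\A \bowtie \B) \le \pi_S(\A) \cdot \mu_{T \setminus S}(\proj_T(\B))$. The key observation is the inclusion $\proj_{S \cup T}(\A \bowtie \B) \subseteq \proj_S(\A) \bowtie \proj_T(\B)$, so it suffices to count tuples $y \in [n]^{S \cup T}$ with $y_S \in \proj_S(\A)$ and $y_T \in \proj_T(\B)$ by a two-stage choice: first select $y_S$ (at most $\pi_S(\A) \cdot n^{|S|}$ options), then extend to $y_{T \setminus S}$, which must lie in the $(T \setminus S)$-restriction of $\proj_T(\B)$ at the already-fixed $y_{S \cap T}$ (at most $\mu_{T \setminus S}(\proj_T(\B)) \cdot n^{|T \setminus S|}$ options). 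Since $|S| + |T \setminus S| = |S \cup T|$, dividing by $n^{|S \cup T|}$ gives the bound. The one point requiring care, and the main (if minor) obstacle, is the possibly nonempty overlap $S \cap T \subseteq V \cap W$: the two-stage count works precisely because once $y_S$ is fixed, the constraint on $y_T$ factors through the restriction of $\proj_T(\B)$ at $y_{S \cap T}$, which is exactly what $\mu_{T \setminus S}(\proj_T(\B))$ controls.
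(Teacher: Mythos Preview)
Your proof is correct and follows essentially the same approach as the paper. The paper dismisses (a) and (b) as ``obvious once the notation is understood'' and derives (c) by two applications of (b) (first with $S \cup T$, then with $S$ applied to $\proj_{S \cup T}(\A \bowtie \B)$) followed by the monotonicity inequalities from (a); your argument for (c) makes the same first move and then replaces the second application of (b) with the direct two-stage count via the inclusion $\proj_{S \cup T}(\A \bowtie \B) \subseteq \proj_S(\A) \bowtie \proj_T(\B)$, which is the same counting unpacked.
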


\begin{proof}
Inequalities (\ref{eq:basic0}) and (\ref{eq:basic1}) are obvious once the notation is understood. 
Inequality (\ref{eq:basic2}) is mainly derived by two applications of inequality (\ref{eq:basic1}). We first project $\A \bowtie \B$ to $S \cup T$:
\[
  \delta(\A \bowtie \B) 
  \le  
  \pi_{S \cup T}(\A \bowtie \B) 
  \cdot  
  \mu_{(V \cup W) \setminus (S \cup T)}(\A \bowtie \B).
\]
We then project $\proj_{S \cup T}(\A \bowtie \B)$ to $S$:
\[
  \pi_{S \cup T}(\A \bowtie \B) 
  \le
  \pi_S(\A \bowtie \B) 
  \cdot  
  \mu_{T \setminus S}(\proj_{S \cup T}(\A \bowtie \B)).
\]
Finally, we have $\pi_S(\A \bowtie \B) \le \pi_S(\A)$ and 
  $\mu_{T \setminus S}(\proj_{S \cup T}(\A \bowtie \B))
  \le
  \mu_{T \setminus S}(\proj_T(\A \bowtie \B))
  \le
  \mu_{T \setminus S}(\proj_T(\B))$.
Combined, these inequalities give (\ref{eq:basic2}).
\end{proof}

\newpage

\section{Pathset Complexity}\label{sec:pathset}

In this section, we define the key notion of pathset complexity, state our lower bound for pathset complexity (Theorem \ref{thm:pathset-lb}, to be proved in \S\ref{sec:patterns}--\ref{sec:lower-bound}), and present a matching upper bound (Proposition \ref{prop:upper-bound}).

\begin{df}[Pattern Graph]\
Recall that $P_k = (V_k,E_k)$ is the directed path of length $k$ where $V_k = \{v_i : 0 \le i \le k\}$ and $E_k = \{v_iv_{i+1} : 0 \le i < k\}$. A {\em pattern graph} is a subgraph of $P_k$ with no isolated vertices. That is, $G = (V_G,E_G)$ is a pattern graph if, and only if, $E_G \subseteq E_k$ and $V_G = \bigcup_{\edge{v}{w} \in E_G} \{v,w\}$. We write $\wp_k$ for the set of pattern graphs. (We appropriate the power set notation, since pattern graphs are in 1-1 correspondence with subsets of $E_k$.)

Note that every pattern graph is a (possibly empty) disjoint union of directed paths of length $\ge 1$. We refer to maximal connected subsets of $V_G$ simply as {\em components} of $G$. Two important parameters of pattern graphs are the number of components ($=$ the number of maximal paths) and the length of the longest path ($=$ the number of edges in the largest component). These are denoted by
\begin{align*}
  \c{G} &\defeq \#\text{ of components in }G \text{ ($= |V_G| - |E_G|$)},\\
  \l{G} &\defeq \text{length of the longest path in }G.
\end{align*}
\end{df}

\begin{df}[Pathset]
For a pattern graph $G$, let $\P_G$ denote the power set of $[n]^{V_G}$. We refer to elements of $\P_G$ as {\em $G$-pathsets} (or just {\em pathsets} if $G$ is clear from context).
\end{df}

The intuition for pathsets is as follows. For a pattern graph $G$, we view each $x \in [n]^{V_G}$ as corresponding to a ``lifting'' of $G$ inside the complete layered graph $P_{k,n}$, namely isomorphic copy of $G$ with vertex set $\{\lift{v}{i} \in V_{k,n} : i = x_v\}$ and edge set $\{\edge{\lift{v}{i}}{\lift{w}{j}} \in E_{k,n} : i = x_v$ and $j = x_w\}$. In this view, a pathset $\A \subseteq [n]^{V_G}$ corresponds to a set of liftings of $G$. I have chosen to define {\em pathset} as a relation (a subset of $[n]^{V_G}$) rather than a set of liftings (which better matches intuition) in order to more naturally apply operations like $\bowtie$ and $\proj_S$ and $\mu_S$, etc.

\begin{df}[$G$-small]\label{df:small}\
\begin{enumerate}[(i)]
\item
Let $\eps \defeq 1/\log k$ and $\m \defeq n^{1-\eps}$.
\item
A pathset $\A \in \P_G$ is {\em $G$-small} (we simply say {\em small} when $G$ is understood from context) if, for all $1 \le t \le \c{G}$ and $S \subseteq V_G$ such that $S$ is the union of $t$ components of $G$, $\A$ satisfies the density constraint
\[
  \mu_S(\A) \le 
  \m^{-t},
  \quad\text{ that is, }\quad
  \frac{|\{x \in \A : x_{V_G \setminus S} = y\}|}{n^{|S|}}
  \le 
  \m^{-t}
  \text{ for all }y \in [n]^{V_G \setminus S}.
\]
\item
The set of $G$-small pathsets is denoted $\smallP_G$.
\end{enumerate}
\end{df}

\noindent
A few quick remarks:
\begin{enumerate}[---\hspace{.5pt}]
\item
As the terminology suggests, $G$-smallness is a monotone decreasing property (i.e.\ if $\A$ is $G$-small, then so is every $\A' \subseteq \A$).

\item
$G$-smallness consists of $2^{\c{G}}-1$ density constraints on $\A$, corresponding to the nonempty unions of the $\c{G}$ components of $G$. Note that for $t = \c{G}$ and $S = V_G$, the constraint $\mu_S(\A) \le \m^{-t}$ is equivalent to $\delta(\A) \le \m^{-\c{G}}$. In the special case that $G$ is connected (i.e.\ $\c{G} = 1$), $\A$ is $G$-small $\iff$ $\delta(\A) \le \m^{-1}$.

\item
The precise value of $\eps$ is not important: any $\eps$ between $1/k$ and $1/2$ would suit our purposes, modulo a slight weakening in the parameters of our main theorem.\footnote{We choose $\eps = 1/\log k$ so that $\m^{\Omega(\log k)} = n^{\Omega(\log k)}$ with the same constant in the $\Omega(\log k)$. The proof of Lemma \ref{la:pre-main-rest} is the only place where $\eps$ really shows up. Outside this lemma, the difference between $\m$ and $n$ may be ignored (in particular, all statements in \S\ref{sec:patterns}--\ref{sec:lower-bound} are valid if $\m = n$.).} 
\end{enumerate}

\begin{ex}
Let $G$ be the pattern graph with components $U = \{v_1,v_2,v_3\}$ and $U' = \{v_5,v_6\}$ (i.e.\ $V_G = \{v_1,v_2,v_3,v_5,v_6\}$ and $E_G = \{\edge{v_1}{v_2},\edge{v_2}{v_3},\edge{v_5}{v_6}\}$). A pattern $\A \in \P_G$ is $G$-small if, and only if,
\[
  \delta(\A) \le \m^{-2},\quad\
  \mu_U(\A) \le \m^{-1},\quad\
  \mu_{U'}(\A) \le \m^{-1}.
\]
For example, the pathset $\A_1 \defeq \{x : x_1 = x_5 = 1\}$ is $G$-small (here $x$ ranges over $[n]^{V_G}$ and we write $x_i$ for $x_{v_i}$) since $\delta(\A_1) = n^{-2} < \m^{-2}$ and $\mu_U(\A_1) = \mu_{U'}(\A_1) = n^{-1} < \m^{-1}$. The pathset $\A_2 \defeq \{x : x_1 = x_5 \text{ and } x_2 = x_6\}$ is $G$-small as well since $\delta(\A_2) = \mu_U(\A_2) = \mu_{U'}(\A_2) = n^{-2}$. However, pathsets
\[
  \A_3 \defeq \{x : x_1 = x_2 = 1\},\quad\
  \A_4 \defeq \{x : x_1 = x_5\}
\]
are not $G$-small since $\mu_{U'}(\A_3) = 1 > \m^{-1}$ and $\delta(\A_4) = n^{-1} > \m^{-2}$.
\end{ex}

The next lemma shows that smallness is preserved under joins. (Note to the reader: Although it natural to state Lemma \ref{la:join} now, we will not use this lemma until \S\ref{sec:lower-bound}.)

\begin{la}\label{la:join}
If $\A$ is a small $G$-pathset and $\B$ is a small $H$-pathset, then $\A \bowtie \B$ is a small $G \cup H$-pathset.
\end{la}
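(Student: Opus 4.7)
The plan is to reduce the general statement to the special case $S = V_{G \cup H}$ and then handle that case with a single application of Lemma~\ref{la:basic-ineqs}(\ref{eq:basic2}) combined with a short combinatorial count on components of $K := G \cup H$. I expect the main conceptual step to lie in this counting argument; everything else is bookkeeping.

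For the reduction, I would fix an arbitrary $y \in [n]^{V_K \setminus S}$ and observe that $\RHO{S}{\A \bowtie \B}{y} = \A' \bowtie \B'$, where $\A' := \RHO{V_G \cap S}{\A}{y_{V_G \setminus S}}$ and $\B' := \RHO{V_H \cap S}{\B}{y_{V_H \setminus S}}$. Letting $G'$ (resp.\ $H'$) be the subgraph of $G$ (resp.\ $H$) consisting of those components whose vertex sets lie in $S$, the union $G' \cup H'$ has exactly $t$ components (namely the $K$-components contained in $S$), and $\A'$ and $\B'$ inherit $G'$-smallness and $H'$-smallness respectively: any union of $j$ components of $G'$ is a union of $j$ components of $G$, and restriction cannot increase $\mu_U$. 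Taking the maximum over $y$ therefore reduces the lemma to the case $V_K = V_G \cup V_H$, where it suffices to prove $\delta(\A \bowtie \B) \le \m^{-\c{K}}$.

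For this base case, I would apply Lemma~\ref{la:basic-ineqs}(\ref{eq:basic2}) with parameters $S := V_G$ and $T := V_H$. Since $(V_G \cup V_H) \setminus (V_G \cup V_H) = \emptyset$, the final factor equals $1$, and the inequality simplifies to
\[
  \delta(\A \bowtie \B) \;\le\; \delta(\A) \cdot \mu_{V_H \setminus V_G}(\B) \;\le\; \m^{-\c{G}} \cdot \mu_{V_H \setminus V_G}(\B),
\]
where the second step uses $G$-smallness. Let $h^0$ denote the number of $H$-components $D$ with $V(D) \cap V_G = \emptyset$, and let $U$ be the union of their vertex sets. Then $U \subseteq V_H \setminus V_G$ is itself a union of $h^0$ components of $H$, so monotonicity (\ref{eq:basic0}) together with $H$-smallness yields $\mu_{V_H \setminus V_G}(\B) \le \mu_U(\B) \le \m^{-h^0}$.

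It now suffices to establish the structural inequality $h^0 \ge \c{K} - \c{G}$. I would partition the components of $K$ into three types: isolated $G$-components (those $G$-components sharing no vertex with $V_H$, each forming a standalone $K$-component), isolated $H$-components (defined analogously), and \emph{mixed} components containing edges from both $G$ and $H$. Writing $g^0, h^0, t^{\mathrm{mix}}$ for the respective counts, $\c{K} = g^0 + h^0 + t^{\mathrm{mix}}$. Every mixed $K$-component must contain at least one $G$-component, which is necessarily non-isolated since it shares a vertex with some $H$-component within the same mixed $K$-component; therefore $\c{G} \ge g^0 + t^{\mathrm{mix}}$, which rearranges to $\c{K} - \c{G} \le h^0$. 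Combining with the preceding display gives $\delta(\A \bowtie \B) \le \m^{-(\c{G} + h^0)} \le \m^{-\c{K}}$, completing the proof.
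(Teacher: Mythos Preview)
Your proof is correct and uses the same core idea as the paper: partition the components of $K$ according to whether they meet $V_G$, then apply $G$-smallness to the part that does and $H$-smallness to the part that does not. The only difference is organizational---you first reduce to the case $S=V_K$ by restricting, whereas the paper handles arbitrary $S$ directly by taking $S'=S\cap V_G$ and $S''$ equal to the union of those $U_i$ disjoint from $V_G$ and bounding $\mu_S(\A\bowtie\B)\le\mu_{S'}(\A)\cdot\mu_{S''}(\B)\le\m^{-t'}\m^{-(t-t')}$ in one step; your trichotomy and the inequality $\c{G}\ge g^0+t^{\mathrm{mix}}$ are just a repackaging of the paper's observation that each $U_i$ meeting $V_G$ contributes at least one $G$-component to $S'$.
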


\begin{proof}
Assume $\A$ is a small $G$-pathset and $\B$ is a small $H$-pathset. To show that $\A \bowtie \B$ is a small $G\cup H$-pathset, consider any $1 \le t \le \c{G \cup H}$ and $S \subseteq V_G \cup V_H$ such that $S$ contains $t$ distinct components $U_1,\dots,U_t$ of $G \cup H$. We must show that $\mu_S(\A \bowtie \B) \le \m^{-t}$. 

Without loss of generality, assume $U_1,\dots,U_t$ are ordered such that, for some $t' \le t$, we have $U_i \cap V_G \ne \emptyset$ for all $1 \le i \le t'$ and $U_j \cap V_G = \emptyset$ for all $t' < j \le t$. Let $S' = S \cap V_G$ and $S'' = U_{t'+1} \cup \dots \cup U_t$. Then $S'$ contains $\ge t'$ components of $G$, since $U_i \cap V_G$ contains $\ge 1$ component of $G$ for all $1 \le i \le t'$. Next note that $U_j$ is a component of $H$ for all $t' < j \le t$, hence $S''$ is a union of $t-t'$ components of $H$. By $G$-smallness of $\A$ and $H$-smallness of $\B$, it follows that
\[
  \mu_{S'}(\A) \le \m^{-t'}
  \quad\text{ and }\quad
  \mu_{S''}(\B) \le \m^{t'-t}.
\]

Now fix $z \in [n]^{(V_G \cup V_H) \setminus S}$ which maximizes $\delta(\pRHO{S}{\A \bowtie \B}{z})$. Using the basic properties of restrictions and joins (Lemma \ref{la:basic-ineqs}(\ref{eq:basic0},\ref{eq:basic1})), we have
\begin{align*}
  \mu_S(\A \bowtie \B)
  = 
  \delta(\pRHO{S}{\A \bowtie \B}{z})
  &=
  \delta((\RHO{S'}{\A}{z}) \bowtie (\RHO{S \cap V_H}{\B}{z}))\\
  &\le
  \delta(\RHO{S'}{\A}{z}) \cdot \mu_{S \setminus V_G}(\RHO{S \cap V_H}{\B}{z})
  \\
  &\le
  \mu_{S'}(\A) \cdot \mu_{S''}(\B).
\end{align*}
It follows that $\mu_S(\A \bowtie \B) \le \m^{-t}$, which completes the proof.
\end{proof}

\begin{df}[Pathset Complexity]\label{df:pc1}
For every pattern graph $G$ and pathset $\A \in \P_G$, the {\em pathset complexity} $\chi_G(\A)$ of $\A$ with respect to $G$ is defined by the following induction:
\begin{enumerate}[(i)]
  \item
    If $G$ is the empty graph, then $\chi_G(\A) \defeq 0$.
  \item
    If $G$ consists of a single edge, then $\chi_G(\A) \defeq |\A|$.
  \item
    If $G$ has $\ge 2$ edges, then 
    \[
  \cost_G(\A) \defeq \min_{(H_i,K_i,\B_i,\C_i)_i} \sum_i \cost_{H_i}(\B_i) + \cost_{K_i}(\C_i)
    \]
    where $(H_i,K_i,\B_i,\C_i)_i$ ranges over sequences\footnote{Without loss of generality, $i$ ranges over $\N$ since $H_i=K_i=G$ and $\B_i=\C_i=\emptyset$ can occur infinitely often.} where
    \[
      H_i,K_i \subset G,
      \quad
      H_i \cup K_i = G,
      \quad
      \B_i \in \smallP_{H_i},
      \quad
      \C_i \in \smallP_{K_i}
      \quad
      \text{and}
      \quad
      {\A\subseteq\bigcup_i \B_i\bowtie\C_i.}\vphantom{\big|}
    \]
\end{enumerate}
In plain language, we consider {\bf\em coverings} of $\A$ by {\bf\em joins} of {\bf\em small} pathsets over {\bf\em proper} subgraphs of $G$. The pathset complexity $\cost_G(\A)$ is the minimum possible value---over all such coverings---of the sum of pathset complexities of the constituent small pathsets.
\end{df}

Note that pathset complexity satisfies the following inequalities:
\begin{align}
\tag{base case}
\vphantom{\big|}
  \cost_\emptyset(\{()\}) \le 0
  \ \hspace{1pt}&\hspace{-1pt}\text{and}\ 
  \cost_G(\A) \le 1
  &&\hspace{-20pt}\text{if }|E_G| = |\A|=1,\\
\tag{monotonicity}
\vphantom{\Big|}
  \cost_G(\A') &\le \cost_G(\A)
  &&\hspace{-20pt}\text{if }\A' \subseteq \A,\\
\tag{sub-additivity}\vphantom{\big|}
  \cost_G(\A_1 \cup \A_2) &\le \cost_G(\A_1) + \cost_G(\A_2)
  &&\hspace{-20pt}\text{for all }\A_1,\A_2,\\
\tag{join rule}\vphantom{\Big|}
  \cost_{G \cup H}(\A \bowtie \B) &\le \cost_G(\A) + \cost_H(\B)
  &&\hspace{-20pt}\text{if }\A \in \smallP_G,\, \B \in \smallP_H.
\end{align}
We will refer to these inequalities repeatedly throughout the paper.

\begin{rmk}\label{rmk:dual1}
Pathset complexity has a {\em dual characterization} as the unique pointwise maximal function from pairs $(G,\A)$ to $\R$ which satisfies (base case), (monotonicity), (sub-additivity) and (join rule). We will expand on this observation later in Remark \ref{rmk:dual2}.
\end{rmk}

We now state our lower bound on pathset complexity (to be proved in \S\ref{sec:patterns}--\ref{sec:lower-bound}).

\begin{thm}[Pathset Complexity Lower Bound]\label{thm:pathset-lb}
For all $\A \in \P_{P_k}$, 
$$
  \ds\cost_{P_k}(\A) \ge 
  \frac{n\vphantom{t}^{(1/6)\log k}}{2^{O(2^k)}} \cdot \delta(\A).
$$
\end{thm}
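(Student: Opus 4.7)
The plan follows the two-step strategy outlined in the introduction: introduce an auxiliary notion $\pcost$ of pathset complexity \emph{with respect to a pattern}, establish the transfer inequality $\cost_{P_k}(\A) \ge \pcost(\A')$ for some dense $\A' \subseteq \A$ (Part 2a), and separately prove the combinatorial lower bound $\pcost(\A') \ge n^{\Omega(\log k)} \cdot \delta(\A')$ (Part 2b). The two loss factors then combine multiplicatively to give Theorem \ref{thm:pathset-lb}. Before any of this, I would introduce \emph{patterns}: because $\cost_{P_k}$ is defined by recursively splitting $G = H \cup K$ into proper subgraphs, the evaluation of $\cost_{P_k}(\A)$ unfolds into a binary tree whose nodes are labeled by subgraphs of $P_k$, and such a labeled tree is exactly what a pattern should be. Since $P_k$ has only $2^k$ subgraphs and the relevant trees have size $O(2^k)$, the total number of patterns is bounded by $2^{O(2^k)}$.

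For Part 2a, I would unwind an optimal covering that witnesses $\cost_{P_k}(\A)$. Each leaf of its recursion tree contributes one summand in the union covering $\A$, and its position in the tree determines a pattern. A pigeonhole over the $2^{O(2^k)}$ patterns then produces a single pattern whose leaves collectively cover a subset $\A' \subseteq \A$ with density at least $2^{-O(2^k)} \cdot \delta(\A)$. By construction, the resulting sub-covering is a valid witness for $\pcost(\A')$, and its total cost is bounded by $\cost_{P_k}(\A)$, giving the desired inequality.

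For Part 2b, I would prove $\pcost(\A') \ge n^{\Omega(\log k)} \cdot \delta(\A')$ by induction on the pattern. The base case (a single edge of $P_k$) is immediate, since $\pcost(\A') = |\A'| = n^2 \delta(\A')$. At an internal node that splits $G$ into $H \cup K$ along the pattern, the quantitative heart of the argument is Lemma \ref{la:basic-ineqs}(\ref{eq:basic2}), which controls $\delta(\A \bowtie \B)$ by $\pi_S(\A) \cdot \mu_{T \setminus S}(\proj_T(\B)) \cdot \mu_{(V \cup W) \setminus (S \cup T)}(\A \bowtie \B)$. By choosing $S \subseteq V_H$ and $T \subseteq V_K$ to align with components of $H$ and $K$ so that every $\mu$-term is forced to be at most $\m^{-1}$ by smallness, each decomposition level contributes a factor of $\m^{\Omega(1)}$ to the ratio $\pcost(\cdot)/\delta(\cdot)$. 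Iterating over the $\Omega(\log k)$ levels of the pattern tree (balanced splittings are the tightest) yields an overall gain of $\m^{\Omega(\log k)} = n^{\Omega(\log k)}$, and combining with Part 2a produces the claimed $n^{(1/6)\log k}/2^{O(2^k)}$ bound.

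The main obstacle will be Part 2b. Components of $G$ generally split across the decomposition $G = H \cup K$, so the inductive invariant must carry the \emph{full family} of smallness-type density constraints (one per union of components of the current subgraph), not merely a scalar potential like $\pcost/\delta$. Making the right choice of $S$ and $T$ in Lemma \ref{la:basic-ineqs}(\ref{eq:basic2}) --- so that every decomposition step really does charge a full $\m^{\Omega(1)}$ factor --- requires a delicate case analysis on how the components of $G$ are distributed among those of $H$ and $K$, and on which of them are ``cut'' by the split. This component bookkeeping, together with verifying that the invariant propagates through the induction with the claimed quantitative gain at every internal node, is where the bulk of the combinatorial work of Sections \ref{sec:patterns}--\ref{sec:lower-bound} should live.
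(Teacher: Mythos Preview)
Your Part 2a is essentially correct and matches the paper: the paper defines \emph{strict patterns} (recursion trees where each node has a strictly smaller graph than its parent), shows there are $2^{O(2^k)}$ of them, and pigeonholes an optimal covering of $\A$ over strict patterns to extract a single $\alpha$ and $\A' \subseteq \A$ with $\pcost_\alpha(\A') \le \cost_{P_k}(\A)$ and $\delta(\A') \ge 2^{-O(2^k)}\delta(\A)$.

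Your Part 2b, however, has a genuine gap. The claim that ``each decomposition level contributes a factor of $\m^{\Omega(1)}$'' is precisely what the paper says does \emph{not} work directly (Remark~\ref{rmk:dual2}): the function $\m^{c_A}\cdot\delta(\A)$ fails to satisfy the join rule for any choice of constants $c_A$ that grow like $\log k$. Your parenthetical ``balanced splittings are the tightest'' is also misleading: the pattern $B_k$ (where each level splits $P_j$ into $P_{j-1}$ and its $1$-shift) has depth $k-1$, and the naive per-level gain there is $o(1)$, not $\Omega(1)$. More fundamentally, choosing $S,T$ in Lemma~\ref{la:basic-ineqs}(\ref{eq:basic2}) to align with components of the children $H,K$ is not enough: a single component of $G$ can shatter into many components of $H$ and none of $K$, so no choice of $S,T$ at that level charges a full $\m^{-1}$.

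What the paper actually does is subtler in two ways you are missing. First, the induction is allowed to \emph{jump to arbitrary sub-patterns} $A' \preceq A$ and $B' \preceq B$ (via the projection and restriction lemmas of \S\ref{sec:proj-rest}), not just the immediate children. Second, the argument bifurcates into a \emph{one-sided} case $(\dag)$ and a \emph{balanced} case $(\ddag)$; in the balanced case the gain comes from combining $\delta(\A\bowtie\B)\le\m^{-\c{C}}$ with the Lemma~\ref{la:basic-ineqs}(\ref{eq:basic2}) bound via a square root and the AM--GM inequality. These two ingredients are encoded in an auxiliary potential $\Phi_A$ defined by the minimal solution to a system of inequalities $(\dag),(\ddag)$; one lemma shows $\pcost_A(\A)\ge\m^{\Phi_A}\delta(\A)$ by the two-case induction, and a separate purely combinatorial lemma shows $\Phi_A\ge\tfrac16\log(\l{A})+\c{A}$ by an intricate case analysis on the pattern structure. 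Your proposal captures neither the sub-pattern jumping nor the $(\dag)/(\ddag)$ dichotomy, and without them the induction does not go through.
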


In particular, for $k \le \log\log n$ and non-negligible $\delta(\A) = n^{-O(1)}$, Theorem \ref{thm:pathset-lb} implies $\cost_{P_k}(\A) \ge n^{(1/6)\log k - O(1)}$. In a moment, we will give an upper bound (Proposition \ref{prop:upper-bound}) which shows that Theorem \ref{thm:pathset-lb} is tight in the regime of $k \le \log\log n$ and non-negligible $\delta(\A)$. First, a couple of remarks which give a different perspective on the definition of $\cost_G(\A)$:

\begin{rmk}[Pathset Complexity as Construction Cost]\label{rmk:construction}
Pathset complexity can be seen as a minimum construction cost. In this view, the goal is to construct a pathset $\A \in \P_G$ out of the fewest possible ``atomic'' pathsets (i.e.,\ individual edges). The rules of construction are as follows:
\begin{enumerate}[\quad(a)]\setlength{\itemsep}{1.5pt}
  \item
    A single ``atomic'' pathset of the form $\A \in \P_G$ where $|E_G| = |\A| = 1$ may be bought for unit cost.
   \item
    Once a pathset $\A$ has been constructed, we may freely discard elements from $\A$ (i.e.\ replace $\A$ with any smaller $\A' \subseteq \A$).
  \item
    Having constructed two $G$-pathsets $\A$ and $\A'$, we may merge $\A$ and $\A'$ into a single $G$-pathset $\A \cup \A'$ (i.e.\ replace $\A$ and $\A'$ with $\A \cup \A'$) at no additional cost.
  \item
    Having constructed a $G$-pathset $\A$ and a $H$-pathset $\B$, provided both $\A$ and $\B$ are {\em small}, we may join $\A$ and $\B$ into a single $G\cup H$-pathset $\A \bowtie \B$ at no additional cost.
\end{enumerate}
For a pathset $\A \in \P_G$, $\cost_G(\A)$ is equal to the minimum cost of constructing $\A$ according to these rules. Construction rules (a), (b), (c), (d) respectively correspond to inequalities (base case), (monotonicity), (sub-additivity), (join rule). Only applications of rule (a) increase cost (so minimum construction cost $=$ fewest application of rule (a)). Rule (b) can be used to convert a non-small pathset into a small pathset (in order to use rule (d), for example). Note that only rule (c) can increase the density of pathsets.
\end{rmk}

\begin{rmk}[The Role of Smallness]\label{rmk:role-of-smallness}
Suppose we modify construction rule (d) by dropping the smallness constraint on $\A$ and $\B$ (this is equivalent to substituting $\P_{H_i}$ and $\P_{K_i}$ for $\smallP_{H_i}$ and $\smallP_{K_i}$ in Definition \ref{df:pc1}(iii)). We could then construct the complete $P_k$-pathset $[n]^{V_k}$ at a total cost of $kn^2$ simply by joining pathsets $[n]^{\{v_i,v_{i+1}\}}$ for $0 \le i < k$. This shows that the smallness constraint on joins is essential to Theorem \ref{thm:pathset-lb}. Intuitively, smallness is responsible for bottlenecks which drive up the cost of constructing sufficiently dense pathsets. However, smallness is not necessarily an obstacle for very sparse pathsets like $[\sqrt n]^{P_k}$: since $[\sqrt n]^{\{v_i,v_{i+1}\}}$ are small, we can take joins showing $\cost_{P_k}([\sqrt n]^{P_k}) \le kn$.
\end{rmk}

We conclude this section with an upper bound.

\begin{prop}[Pathset Complexity Upper Bound]\label{prop:upper-bound}
For all $\A \in \P_{P_k}$, 
$$\ds
  \cost_{P_k}(\A) \le O(kn^{(1/2)\lceil\log k\rceil + 2}).
$$
\end{prop}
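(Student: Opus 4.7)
The proof is by induction on $\lceil\log k\rceil$. By the monotonicity property of pathset complexity (an immediate consequence of Definition~\ref{df:pc1}), it suffices to bound the complete pathset: $\cost_{P_k}([n]^{V_k}) \le O(kn^{(1/2)\lceil\log k\rceil + 2})$. The base case $k=1$ is immediate, since $[n]^{V_1} = [n]^2$ is obtainable as a union of $n^2$ atomic edge pathsets, giving $\cost_{P_1}([n]^2) = n^2$, which matches $n^{(1/2)(0) + 2}$.

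For the inductive step, I would split $P_k$ at the midpoint $v_{\lceil k/2\rceil}$ and cover $[n]^{V_k} = \bigcup_{c \in [n]} \B_c \bowtie \C_c$, where $\B_c \in \P_{P_{\lceil k/2\rceil}}$ and $\C_c \in \P_{P_{\lfloor k/2\rfloor}}$ are the left and right slices pinning the midpoint to $c$. Each slice has density $1/n$ on a connected pattern, which is at most $\m^{-1} = n^{-1+1/\log k}$, so each is small (in fact with room to spare). Invoking the join-rule and sub-additivity then gives $\cost_{P_k}([n]^{V_k}) \le 2n\cdot F_1(\lceil k/2\rceil)$, where I write $F_r(k')$ for the cost of a $P_{k'}$-pathset with $r$ of its endpoints fixed.

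The key step is to propagate the induction with the correct exponent via a strengthened hypothesis tracking $F_0, F_1, F_2$ simultaneously. I would use two recurrences: the enumerate-midpoint recurrence $F_1(k) \le n(F_1(k/2) + F_2(k/2))$, and the crucial \emph{free-join} recurrence $F_2(k) \le 2 F_1(k/2)$ obtained by writing $\{x : x_{v_0}=a,\, x_{v_k}=b\} = \{x_L : (x_L)_{v_0}=a\} \bowtie \{x_R : (x_R)_{v_k}=b\}$. Both halves are density-$1/n$ pathsets, hence small, so no enumeration of the shared vertex $v_{k/2}$ is required. Interleaving an enumerated level (paying a factor of $n$) with a free-join level (paying no $n$ factor) averages to a $\sqrt{n}$ factor per doubling of $k$, producing the exponent $(1/2)\lceil\log k\rceil + 2$.

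The main obstacle is that the naive chain of these recurrences (always converting $F_2 \to F_1$ and $F_1 \to F_1$) yields only the weaker bound $n^{\log k}$; to achieve the $(1/2)\log k$ exponent one must choose the inductive hypothesis so that the free-join step is available at every second level, and verify that the joint inequalities on $F_0$, $F_1$, $F_2$ close with the advertised exponents. Handling the ceiling $\lceil\log k\rceil$ and the rounding when $k$ is not a power of $2$ introduces constant-factor bookkeeping, but no essential new difficulty.
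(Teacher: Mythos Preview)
Your recurrences $F_1(k) \le n\bigl(F_1(k/2) + F_2(k/2)\bigr)$ and $F_2(k) \le 2F_1(k/2)$ are both valid, but they do \emph{not} combine to give the $\sqrt{n}$-per-halving behavior you claim. The obstruction is the $n\,F_1(k/2)$ term. If you posit $F_1(k) \le C_1 n^{\alpha\log k + \beta_1}$ and $F_2(k) \le C_2 n^{\alpha\log k + \beta_2}$ and plug into the first recurrence, the $F_1$-branch alone forces $n^{\beta_1} \ge n^{1-\alpha+\beta_1}$, i.e.\ $\alpha \ge 1$. No joint inductive hypothesis on $(F_0,F_1,F_2)$ can repair this: when you split a one-endpoint-pinned pathset at the midpoint, the far half carries no constraint whatsoever and hence cannot be small unless you enumerate, which reproduces an $F_1$-type term with a full factor of $n$. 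Your ``free-join available at every second level'' is available for $F_2$ but never for $F_1$, and the $F_1$-to-$F_1$ branch dominates the recursion tree.

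The paper avoids this via \emph{soft} constraints rather than hard pinning. It works with $\A_k \defeq \{x : x_0,x_k \le \sqrt{n}\}$, which has density $1/n$ and is therefore $P_k$-small. The crucial point is self-similarity: splitting at the midpoint $j=\lceil k/2\rceil$ and also constraining $x_j \le \sqrt{n}$ yields $\A_j \bowtie \A_{k-j}^{\shift j}$, where \emph{each half is again of the same form}. Covering $\A_k$ then requires only $\sqrt{n}$ shifted copies of this join (one per $\sqrt{n}$-block of midpoint values), giving the uniform recurrence $\cost_{P_k}(\A_k) \le \sqrt{n}\cdot\bigl(\cost_{P_j}(\A_j)+\cost_{P_{k-j}}(\A_{k-j})\bigr)$ and hence the exponent $\tfrac12\lceil\log k\rceil$. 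The $\sqrt{n}$-sized block is precisely what lets both halves be small simultaneously without one of them degenerating to the full pathset.
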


For $k \le \log\log n$ and $\A \in \P_{P_k}$ with $\delta(\A) = n^{-O(1)}$, our lower and upper bounds show that $\cost_{P_k}(\A) = n^{\Theta(\log k)}$ where the constant in $\Theta(\log k)$ is between $\frac16$ and $\frac12$. 

\begin{notn}\label{notn:shift1}
For a pattern graph $G$ and an integer $s$, we denote by $G^{\shift s}$ the {\em $s$-shifted} pattern graph with vertex set $\{v_{i+s} : v_i \in V_G\}$ and edge set $\{\edge{v_{i+s}}{v_{i+s+1}} : \edge{v_i}{v_{i+1}} \in E_G\}$. For a pathset $\A \in \P_G$, we denote by $\A^{\shift s} \in \P_{G^{\shift s}}$ the corresponding {\em $s$-shifted} pathset. Note that pathset complexity is invariant under shifts (i.e.\ $\cost_G(\A) = \cost_{G^{\shift s}}(\A^{\shift s})$).
\end{notn}

\begin{proof}[Proof of Proposition \ref{prop:upper-bound}]
For simplicity we assume $\sqrt n$ is an integer. For all $k \ge 1$, define $\A_k \in \smallP_{P_k}$ by
\[
  \A_k \defeq \{x \in [n]^{\{0,\dots,k\}} : x_0,x_k \le \sqrt n\}.
\]
(Note that $\delta(\A_k) = 1/n < 1/\m$, so $\A_k$ is indeed $P_k$-small.)

Letting $j = \lceil k/2 \rceil$, we have
\[
  \A_j \bowtie \A_{k-j}^{\shift j} = \{x \in [n]^{\{0,\dots,k\}} : x_0,x_j,x_k \le \sqrt n\}.
\]
Note that $\A_k$ is covered by $\sqrt n$ ``copies'' of $\A_j \bowtie \A_{k-j}^{\shift j}$ where, for $1 \le t \le \sqrt n$, 
\[
  \COPY_t(\A_j \bowtie \A_{k-j}^{\shift j}) 
  \defeq
  \{x \in [n]^{\{0,\dots,k\}} : x_0,x_k \le \sqrt n
  \text{ and } (t-1)\sqrt n < x_j \le t\sqrt n\}.
\]
Note that pathset complexity is invariant under ``copies'' in this sense (i.e.\ $\cost_G$ is invariant under the action of coordinate-wise permutations of $[n]$ on $\P_G$):
\begin{align*}
  &&\cost_{P_k}(\COPY_t(\A_j \bowtie \A_{k-j}^{\shift j}))
  &=
  \cost_{P_k}(\A_j \bowtie \A_{k-j}^{\shift j})
  &&\text{(invariance under ``copies'')}&&\\
  &&&=
  \cost_{P_j}(\A_j) + \cost_{P_{k-j}^{\shift j}}(\A_{k-j}^{\shift j})
  &&\text{(join rule)}\\
  &&&=
  \cost_{P_j}(\A_j) + \cost_{P_{k-j}}(\A_{k-j})
  &&\text{(invariance under shifts).}
\end{align*}
Since $\A_k \subseteq \bigcup_{1 \le t \le \sqrt n} \COPY_t(\A_j \bowtie \A_{k-j}^{\shift j})$, sub-additivity of $\cost$ implies
\[
  \cost_{P_k}(\A_k) \le \sum_{1 \le t \le \sqrt n} \cost_{P_k}(\COPY_t(\A_j \bowtie \A_{k-j}^{\shift j}))
  = \sqrt n \cdot \big(\cost_{P_j}(\A_j) + \cost_{P_{k-j}}(\A_{k-j})\big).
\]
This recurrence implies
\[
  \cost_{P_k}(\A_k) 
  \le (2\sqrt n)^{\lceil\log k\rceil} \cdot \cost_{P_1}(\A_1)
  = O(kn^{(1/2)\lceil\log k\rceil + 1}).
\]

Now note that the complete $P_k$-pathset $[n]^{V_k}$ is covered by $n$ ``copies'' of $P_k$. Therefore, by a similar argument,
\[
  \cost_{P_k}([n]^{V_k})
  \le n \cdot \cost_{P_k}(\A)
  = O(kn^{(1/2)\lceil\log k\rceil + 2}).
\]
Finally, monotonicity of $\cost$ implies that $\cost_{P_k}(\A) \le O(kn^{(1/2)\lceil\log k\rceil + 2})$ for all $\A \in \P_{P_k}$.
\end{proof}

\section{From Formulas to Pathset Complexity}\label{sec:outline}

In this section we derive our main result (Theorem \ref{thm:main}) from our lower bound on pathset complexity (Theorem \ref{thm:pathset-lb}). Let $F_0$ be a formula of depth $d(n)$ which solves $\PATH(k,n)$ where $k(n) \le \log\log n$ and $d(n) \le \log n/k^3\log\log n$. We must show that $F_0$ has size $n^{\Omega(\log k)}$.

As a first preliminary step: without loss of generality, we assume that $F_0$ has minimal size among all depth $d(n)$ formulas solving $\PATH(k,n)$. In particular, we have $\size(F_0) \le kn^{k-1}$ since $\PATH(k,n)$ has DNFs of this size.

As a second preliminary step, we convert $F_0$ into a fan-in $2$ formula $F$ by replacing each unbounded fan-in AND/OR gate by a balanced binary tree of fan-in $2$ AND/OR gates. We have
\begin{gather*}
  \size(F) = \size(F_0) \le n^k
  \quad\text{ and }\quad
  \depth(F) \le \depth(F_0) \cdot \log(\size(F_0)) \le \log^2 n.
\end{gather*}
We write $F_{\mr{in}}$ for the set of inputs (i.e.\ leaves) in $F$, and $F_{\mr{gate}}$ for the set of gates in $F$, and $f_{\mr{out}}$ for the output gate in $F$. Note that each $f \in F$ is computed by an (unbounded fan-in) formula of size $\le n^k$ and depth $\le d(n)$ (by collapsing all adjacent AND/OR gates below $f$).

In order to lower bound $\size(F)$ in terms of pathset complexity, we define a family of pathsets $\A_{f,G}^\Gamma$ associated with each $f \in F$ and $G \in \wp_k$ and $\Gamma \in \{0,1\}^N$. Recall that we identify $\{0,1\}^N$ with the set of $k$-layered graphs where
$
  N = E_{k,n} 
  = \{\edge{\lift{v}{i}}{\lift{w}{j}} : \edge{v}{w} \in E_k,\, i,j \in [n]\}. 
$

\begin{df}[Pathsets $\A^\Gamma_{f,G}$]
For all $G \in \wp_k$ and $x \in [n]^{V_G}$ and $\Gamma \in \{0,1\}^N$ and $f \in F$: 
\begin{enumerate}[(i)]
  \item
    Let 
    $N_{G,x} \defeq \{\edge{\lift{v}{i}}{\lift{w}{j}} \in N : i = x_v \text{ and } j = x_w\}$ ($= \{\edge{\lift{v}{x_v}}{\lift{w}{x_w}} : \edge{v}{w} \in E_G\}$).
  \item
    Let $\rho^\Gamma_{G,x} : N \to \{0,1,\ast\}$ be the restriction which equals $\ast$ over $N_{G,x}$ and agrees with $\Gamma$ over $N \setminus N_{G,x}$.
    In particular, applying $\rho^\Gamma_{G,x}$ to $f$, we get a function $f\lceil\rho^\Gamma_{G,x} : \{0,1\}^{N_{G,x}} \to \{0,1\}$ (whose variables correspond to edges of $G$ via the bijection $N_{G,x} \cong E_G$).
  \item
    Let $\A^\Gamma_{f,G}$ be the $G$-pathset defined by
    \[
      \A^\Gamma_{f,G} \defeq \{x \in [n]^{V_G} : 
      \Live(f{\lceil}\rho^\Gamma_{G,x}) = N_{G,x}\}.
    \]
    That is, $\A^\Gamma_{f,G}$ is the set of $x \in [n]^{V_G}$ such that the restricted function $f{\lceil}\rho^\Gamma_{G,x}$ depends on all $|N_{G,x}|$ ($=|E_G|$) of its variables.   
\end{enumerate}
\end{df}

In the next three subsections, we prove a sequence of claims about pathsets $\A^\Gamma_{f,G}$ in three cases where $f \in F_{\mr{in}}$ and $f \in F_{\mr{gate}}$ and $f = f_{\mr{out}}$. 

\begin{rmk}
Claims \ref{claim:at-inputs}, \ref{claim:at-gates}, \ref{claim:at-output} rely on few assumptions about $F$. In particular, these claims do not depend on the assumption that $F_0$ has bounded depth (i.e.\ $F$ has bounded alternations), nor even that $F$ is a formula as opposed to a circuit. In fact, these claims are valid if $F$ is any $B_2$-circuit computing $\PATH(k,n)$ where $B_2$ is the full binary basis.

Of course, we will eventually use both assumptions that (I) $F_0$ has bounded depth (i.e.\ $F$ has bounded alternations), and (II) $F$ is a formula as opposed to a circuit. Our main technical lemma (Lemma \ref{la:main-rest}) relies on (I) but not (II) (not surprisingly, since the proof uses the Switching Lemma, which does not distinguish between circuits and formulas). A second key lemma (Lemma \ref{la:halcali}) relies on (II) but not (I) (using a novel top-down argument which only works for formulas).
\end{rmk}

\subsection{Inputs of $F$}\label{sec:inputs}

Suppose $f$ is an input in $F$ labeled by a literal (i.e.\ a variable or its negation) corresponding to some $\edge{\lift{v}{i}}{\lift{w}{j}} \in N$. Then we have the following explicit description of $\A^\Gamma_{f,G}$: 
\begin{itemize}
\setlength{\itemsep}{0pt}
  \item
    if $G$ is the empty graph, then $\A^\Gamma_{f,G} = \{()\}$ (i.e.\ the singleton containing the 0-tuple),
  \item
    if $E_G = \{\edge{v}{w}\}$, then $\A^\Gamma_{f,G} = \{x\}$ for the unique $x \in [n]^{\{v,w\}}$ with $x_v = i$ and $x_w = j$,
  \item
    otherwise (i.e.\ if $|E_G| \ge 2$), $\A^\Gamma_{f,G} = \emptyset$.
\end{itemize}

By the base case conditions (i) and (ii) in Definition \ref{df:pc1} of pathset complexity, we have $\cost_\emptyset(\A) = 0$ and $\cost_G(\A) = |\A|$ if $G$ has a single edge. The upshot of these observations is the following claim.

\begin{claim}[Inputs of $F$]\label{claim:at-inputs}
For all $f \in F_{\mr{in}}$, $\ds\sum_{G \in \wp_k} \cost_G(\A^\Gamma_{G,f}) = 1$.
\end{claim}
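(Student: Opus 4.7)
The plan is to evaluate each summand $\cost_G(\A^\Gamma_{f,G})$ directly, using the explicit case analysis of $\A^\Gamma_{f,G}$ stated in the paragraph just above together with the base-case clauses of Definition \ref{df:pc1}. Let $\edge{v}{w} \in E_k$ be the edge such that the literal at $f$ corresponds to $\edge{\lift{v}{i}}{\lift{w}{j}} \in N$.

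I would split the sum over $\wp_k$ into three parts matching the case analysis. When $G$ is the empty pattern graph, $\A^\Gamma_{f,G} = \{()\}$ and Definition \ref{df:pc1}(i) contributes $0$. When $E_G = \{\edge{v}{w}\}$, the set $\A^\Gamma_{f,G}$ is a singleton $\{x\}$, and Definition \ref{df:pc1}(ii) contributes exactly $1$. In every remaining case $\A^\Gamma_{f,G} = \emptyset$, so the contribution is $\cost_G(\emptyset)$, which I will argue is $0$. Summing these contributions will then give $0 + 1 + 0 = 1$.

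The only point not immediate from the base cases is the vanishing of $\cost_G(\emptyset)$ when $|E_G| \ge 2$. For this I would use clause (iii) of Definition \ref{df:pc1} with a covering in which every small pathset $\B_i, \C_i$ is empty. By induction on $|E_G|$ applied to the proper subgraphs $H_i, K_i$, the inner costs $\cost_{H_i}(\emptyset)$ and $\cost_{K_i}(\emptyset)$ are all $0$, giving $\cost_G(\emptyset) \le 0$. Together with the obvious nonnegativity of $\cost$ (again by induction on $|E_G|$), this yields $\cost_G(\emptyset) = 0$. Beyond this minor technicality the argument is pure bookkeeping, so I do not anticipate any real obstacle.
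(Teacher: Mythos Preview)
Your proposal is correct and follows exactly the same approach as the paper, which simply notes the explicit description of $\A^\Gamma_{f,G}$ and invokes the base-case clauses of Definition~\ref{df:pc1}. In fact, you are slightly more careful than the paper in explicitly verifying $\cost_G(\emptyset)=0$ for $|E_G|\ge 2$, which the paper leaves implicit.
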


\subsection{Gates of $F$}\label{sec:gates}

Suppose $f$ is an AND or OR gate in $F$ with children $f_1$ and $f_2$. Consider any $G \in \wp_k$ and $x \in \A^\Gamma_{f,G}$ (assuming $\A^\Gamma_{f,G}$ is nonempty). By definition of $\A^\Gamma_{f,G}$, the function $f{\lceil}\rho^\Gamma_{G,x} : \{0,1\}^{N_{G,x}} \to \{0,1\}$ depends on all variables in $N_{G,x}$. Since $f{\lceil}\rho^\Gamma_{G,x}$ is the AND or OR of functions $f_1{\lceil}\rho^\Gamma_{G,x}$ and $f_2{\lceil}\rho^\Gamma_{G,x}$, each variable in $N_{G,x}$ is a live variable for one or both $f_1{\lceil}\rho^\Gamma_{G,x}$ and $f_2{\lceil}\rho^\Gamma_{G,x}$. 

Define sub-pattern graph $G_1 \subseteq G$ as follows: for each $\edge{v}{w} \in E_G$, let $\edge{v}{w}$ be an edge in $G_1$ if and only if $\edge{\lift{v}{x_v}}{\lift{w}{x_w}}$ ($\in N_{G,x}$) is a live variable for the function $f_1{\lceil}\rho^\Gamma_{G,x}$. Define $G_2 \subseteq G$ in the same way with respect to $f_2$. Since
\[
  \{\edge{\lift{v}{x_v}}{\lift{w}{x_w}} : \edge{v}{w} \in E_G\} 
  = N_{G,x} 
  = \Live(f{\lceil}\rho^\Gamma_{G,x}) 
  = \Live(f_1{\lceil}\rho^\Gamma_{G,x}) \cup \Live(f_2{\lceil}\rho^\Gamma_{G,x}),
\]
it follows that $G_1 \cup G_2 = G$.

Let $y = x_{V_{G_1}}$ be the restriction of $x$ ($\in [n]^{V_G}$) to coordinates in $V_{G_1}$. By definition of $G_1$, we have
\begin{itemize} 
\setlength{\itemsep}{0pt}
\item
$\edge{\lift{v}{y_v}}{\lift{w}{y_w}} = \edge{\lift{v}{x_v}}{\lift{w}{x_w}} \in \Live(f_1{\lceil}\rho^\Gamma_{G,x})$ for all $\edge{v}{w} \in E_{G_1}$, and
\item
$\edge{\lift{v}{x_v}}{\lift{w}{x_w}} \notin \Live(f_1{\lceil}\rho^\Gamma_{G,x})$ for all $\edge{v}{w} \in E_G \setminus E_{G_1}$.
\end{itemize}
It follows that $\Live(f_1{\lceil}\rho^\Gamma_{G_1,y}) = \Live(f_1{\lceil}\rho^\Gamma_{G,x}) = N_{G_1,y}$, hence $y \in \A^\Gamma_{f_1,G_1}$. Similarly, for $z = x_{V_{G_2}}$, we have $z \in \A^\Gamma_{f_2,G_2}$. This shows that $x \in \A^\Gamma_{f_1,G_1} \bowtie \A^\Gamma_{f_2,G_2}$. 

The observation may be succinctly expressed as
\begin{align*}
  \A^\Gamma_{f,G} \subseteq 
  \bigcup_{\substack{G_1,G_2 \subseteq G \,:\, G_1 \cup G_2 = G}}
  \AA{\Gamma}{G_1}{f_1} \bowtie \AA{\Gamma}{G_2}{f_2}.
\end{align*}
Splitting this union into the cases that $G_1 = G$ or $G_2 = G$ or $G_1,G_2 \subset G$, we have proved:

\begin{claim}[Gates of $F$]\label{claim:at-gates}
For every $f \in F_{\mr{gates}}$ with children $f_1,f_2$ and every $G \in \wp_k$,
\[
  \A^\Gamma_{f,G} \subseteq 
  \AA{\Gamma}{G}{f_1} \cup 
  \AA{\Gamma}{G}{f_2} \cup
  \bigcup_{\substack{G_1,G_2 \subset G \,:\, G_1 \cup G_2 = G}}
  \AA{\Gamma}{G_1}{f_1} \bowtie \AA{\Gamma}{G_2}{f_2}.
\]
\end{claim}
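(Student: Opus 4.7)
The plan is to fix an arbitrary $x \in \A^\Gamma_{f,G}$ (the claim is vacuous when this set is empty) and exhibit $x$ as belonging to one of the three sets on the right-hand side. Since $f$ is the AND or OR of its children $f_1,f_2$, the structural input I would use is that, for any binary AND/OR of boolean functions $g_1 \star g_2$, the live set is contained in $\Live(g_1) \cup \Live(g_2)$. Applied to $g_i \defeq f_i{\lceil}\rho^\Gamma_{G,x}$ and combined with the defining identity $\Live(f{\lceil}\rho^\Gamma_{G,x}) = N_{G,x}$, this partitions (or rather covers) the edges of $G$ between the two children.

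Concretely, I would define sub-pattern-graphs $G_1,G_2 \subseteq G$ by putting each edge $\edge{v}{w} \in E_G$ into $E_{G_i}$ exactly when $\edge{\lift{v}{x_v}}{\lift{w}{x_w}}$ is live for $g_i$ (with $V_{G_i}$ the union of endpoints, so that $G_i \in \wp_k$), and set $y_i \defeq x_{V_{G_i}}$. The observation above immediately gives $G_1 \cup G_2 = G$. The remaining task is to verify that $y_i \in \AA{\Gamma}{G_i}{f_i}$ for $i=1,2$, after which the argument finishes with a three-way case split on $(G_1,G_2)$: if $G_1 = G$ then $y_1 = x$ and $x \in \AA{\Gamma}{G}{f_1}$; symmetrically if $G_2 = G$; and otherwise $G_1,G_2 \subset G$ are both proper, so $x \in \AA{\Gamma}{G_1}{f_1} \bowtie \AA{\Gamma}{G_2}{f_2}$ because $x$ agrees with $y_i$ on $V_{G_i}$ for $i=1,2$. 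Taking the union over all $x$ yields the claimed containment.

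The step I expect to require the most care is the verification that $y_i \in \AA{\Gamma}{G_i}{f_i}$, i.e.\ that $\Live(f_i{\lceil}\rho^\Gamma_{G_i,y_i}) = N_{G_i,y_i}$. The point is that $\rho^\Gamma_{G_i,y_i}$ differs from $\rho^\Gamma_{G,x}$ only on the coordinates in $N_{G,x} \setminus N_{G_i,y_i}$, where the former freezes them to their $\Gamma$-values while the latter leaves them as $\ast$; and these are exactly the coordinates that are dead for $g_i$, by construction of $G_i$. One then needs the basic fact that fixing a dead variable of a boolean function to an arbitrary value leaves the remaining function, and in particular its live set, unchanged. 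With that little lemma in hand, the verification is immediate, and the whole argument hinges on keeping track of which restriction is being applied where and on the containment $N_{G_i,y_i} \subseteq N_{G,x}$ (which holds because $y_i$ agrees with $x$ on $V_{G_i}$).
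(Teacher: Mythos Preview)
Your proposal is correct and follows essentially the same argument as the paper: define $G_i$ by the edges whose lifted variable is live for $f_i{\lceil}\rho^\Gamma_{G,x}$, use $\Live(g_1 \star g_2) \subseteq \Live(g_1) \cup \Live(g_2)$ to conclude $G_1 \cup G_2 = G$, and verify $x_{V_{G_i}} \in \A^\Gamma_{f_i,G_i}$ via the observation that fixing dead variables does not alter the live set. The paper handles the last step slightly more tersely (asserting $\Live(f_1{\lceil}\rho^\Gamma_{G_1,y}) = \Live(f_1{\lceil}\rho^\Gamma_{G,x})$ directly), but your explicit identification of why this holds is exactly the right justification.
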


\subsection{Output of $F$}\label{sec:output}

We now use the fact that $F$ computes $\PATH(k,n)$. Our previous Claims \ref{claim:at-inputs} and \ref{claim:at-gates} applied to arbitrary $\Gamma \in \{0,1\}^N$. We now shift perspective and consider {\em random} $\Gamma \in \{0,1\}^N_{1/n}$. That is, $\Gamma$ is the random $k$-layered graph (i.e.\ subgraph of $P_{k,n}$) with edge probability $1/n$. Recall that $V_{k,n} = \{\lift{v}{i} : v \in V_k \text{ and } i \in [n]\}$ and $s,t$ are the vertices $\lift{v_0}{1},\lift{v_k}{1}$. Each $x \in \smash{[n]^{V_k}}$ corresponds to a path of length $k$ in $P_{k,n}$, where $x$ is an $st$-path if and only if $x_0 = x_k = 0$ (writing $x_i$ instead of $x_{v_i}$ for the coordinates of $x$). 

Almost surely, $\Gamma$ satisfies the following properties: 
\begin{enumerate}[(i)]\setlength{\itemsep}{0pt}
  \item
    $\Gamma$ contains no $st$-path, and
  \item
   all vertices in $\Gamma$ have total degree (in-degree plus out-degree) $\le \log^2 n$.
\end{enumerate}
Both (i) and (ii) follow from simple union bounds. For (i), the number of $st$-paths is $n^{k-1}$, and each $st$-path only has probability $n^{-k}$ of being in $\Gamma$. For (ii), the number of vertices is $kn^2$, and the probability of any given vertex having total degree $\ge \log^2 n$ is $\le \binom{2n}{\log^2 n} n^{-\log^2 n} \le (\frac{2e}{\log^2 n})^{\log^2 n} \le n^{-\omega(1)}$.

For an $st$-path $x$, we will say that $x$ is {\em $\Gamma$-independent} if $\Gamma$ contains no path from $x_i$ to $x_j$ for all $0 \le i < j \le k$.  We claim that, if $\Gamma$ satisfies (i) and (ii), then 99\% of $st$-paths are $\Gamma$-independent. To see this, consider the following greedy procedure for constructing a $\Gamma$-independent $st$-path. Sequentially, for $i = 1,\dots,k-1$, choose any $x_i$ in the $i$th layer of $V_{k,n}$ such that $\Gamma$ contains no path from $s$ to $x_i$ (this kills $\le \log{}^{2i} n$ choices for $x_i$), nor a path from $x_i$ to $t$ (this kills $\le \log{}^{2(k-i)} n$ choices), nor a path from $x_{i'}$ to $x_i$ for any $1 \le i' < i$ (this kills $\le \sum_{i'=1}^{i-1} \log{}^{2(i-i')} n$ choices). Setting $x_0 = s$ and $x_k = t$, note that $x$ is $\Gamma$-independent. In total we get $\ge (n - k^2\log{}^{2k} n)^{k-1} \ge .99n^{k-1}$ distinct $\Gamma$-independent $st$-paths.

Suppose $x$ is a $\Gamma$-independent $st$-path and let $e_1,\dots,e_k$ be the $k$ edges in $x$. We claim that $\Gamma \cup \{e_1,\dots,e_{i-1},e_{i+1},\dots,e_k\}$ contains no $st$-path for all $1 \le i \le k$. To see this, assume for the sake of contradiction that $x'$ is an $st$-path in $\Gamma \cup \{e_1,\dots,e_{i-1},e_{i+1},\dots,e_k\}$. Let $e_1',\dots,e_k'$ be the edges of $x'$. Since $e_i$ is a non-edge of $\Gamma$, we have $e_i \ne e_i'$. Starting at the endpoint of $e_i'$, we can follow the path $x'$ forwards until reaching a vertex in $x$; we can also follow $x'$ backwards from the initial vertex of $e_i'$ until reaching a vertex in $x$. This segment of $x'$ is a path in $\Gamma$ between two vertices of $x$, contradiction $\Gamma$-independence of $x$. 

Since $f_{\mr{out}}$ computes $\PATH(k,n)$, it follows that
\[
f_{\mr{out}}(\Gamma \cup \{e_1,\dots,e_k\}) = 1
\quad\text{and}\quad
f_{\mr{out}}(\Gamma \cup \{e_1,\dots,e_{i-1},e_{i+1},\dots,e_k\}) = 0 \text{ for all }1 \le i \le k.
\] 
This shows that the restricted function $f_{\mr{out}}{\lceil}\rho^\Gamma_{P_k,x}$ depends on all $k$ unrestricted variables (corresponding to the edges of $x$); in fact, $f_{\mr{out}}{\lceil}\rho^\Gamma_{P_k,x}$ is the AND function. Therefore, $x \in \A^\Gamma_{f_{\mr{out}},P_k}$ for every $\Gamma$-independent $st$-path $x$.

By this argument, we have proved:

\begin{claim}[Output of $F$]\label{claim:at-output}
$\ds\lim_{n \to \infty} \Pr_{\Gamma \in \{0,1\}^N_{1/n}} [\,\delta(\A^\Gamma_{f_{\mr{out}},P_k}) \ge .99 n^{-2} \,] = 1$.
\end{claim}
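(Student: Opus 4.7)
The plan is to prove the claim via three steps: first establish two high-probability structural properties of $\Gamma \in \{0,1\}^N_{1/n}$, then extract a large collection of ``isolated'' $st$-paths, and finally argue that each such path witnesses a full-sensitivity restriction of $f_{\mr{out}}$.

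First I would record that, almost surely over $\Gamma$, (i) $\Gamma$ contains no $st$-path of length $k$, and (ii) every vertex of $\Gamma$ has total degree at most $\log^2 n$. Both are straightforward union bounds: the expected number of $st$-paths of length $k$ in $\Gamma$ is $n^{k-1} \cdot n^{-k} = n^{-1}$, and the probability a fixed vertex exceeds degree $\log^2 n$ is at most $\binom{2n}{\log^2 n} n^{-\log^2 n} = n^{-\omega(1)}$, which survives the union over the $kn^2$ vertices.

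Next, call an $st$-path $x = (x_0, \ldots, x_k)$ \emph{$\Gamma$-independent} if $\Gamma$ contains no directed path from $x_i$ to $x_j$ for any $i < j$. Conditional on (i) and (ii), I would show that at least $99\%$ of the $n^{k-1}$ $st$-paths are $\Gamma$-independent by a greedy layer-by-layer construction. When choosing $x_i$ in layer $i$, we avoid: vertices $\Gamma$-reachable from $s$ (at most $\log^{2i} n$), vertices with a $\Gamma$-path to $t$ (at most $\log^{2(k-i)} n$), and vertices $\Gamma$-reachable from any previously chosen $x_{i'}$ (at most $\sum_{i' < i} \log^{2(i-i')} n$). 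The degree bound (ii) keeps each reachability ball tiny, and for $k \le \log\log n$ the total forbidden count at each step is $n^{o(1)} \ll n$; taking a product over the $k-1$ free layers yields $\ge (n - k^2 \log^{2k} n)^{k-1} \ge .99\, n^{k-1}$ independent paths.

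Finally, I would argue that every $\Gamma$-independent $st$-path $x$ belongs to $\A^\Gamma_{f_{\mr{out}}, P_k}$; since $|\A^\Gamma_{f_{\mr{out}}, P_k}| \ge .99\, n^{k-1}$ would give $\delta(\A^\Gamma_{f_{\mr{out}}, P_k}) \ge .99\, n^{-2}$, this closes the claim. Let $e_1, \ldots, e_k$ be the edges of $x$; correctness gives $f_{\mr{out}}(\Gamma \cup \{e_1, \ldots, e_k\}) = 1$. For each $i$, suppose for contradiction that $\Gamma \cup \{e_j : j \ne i\}$ contained an $st$-path $x'$. Since $e_i \notin \Gamma$, the edge of $x'$ crossing the layer of $e_i$ is distinct from $e_i$; tracing $x'$ forwards from the head of this edge and backwards from its tail until each hits a vertex of $x$ produces a purely-$\Gamma$ path between two vertices of $x$, contradicting $\Gamma$-independence. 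Hence $f_{\mr{out}}\lceil\rho^\Gamma_{P_k, x}$ is the AND on all $k$ of its unrestricted variables, so $x \in \A^\Gamma_{f_{\mr{out}}, P_k}$. The main subtlety is the last step: the careful use of $\Gamma$-independence to rule out ``detour'' $st$-paths is the whole point of working at edge density $1/n$, and this is what links the probabilistic structure of $\Gamma$ to the combinatorial live-variable condition defining $\A^\Gamma_{f,G}$.
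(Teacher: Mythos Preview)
Your proposal is correct and matches the paper's own argument essentially line for line: the same two high-probability properties of $\Gamma$, the same definition of $\Gamma$-independence, the same greedy layer-by-layer count of independent $st$-paths, and the same detour-tracing contradiction to show that dropping any single edge $e_i$ kills all $st$-paths. There is nothing substantive to add.
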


\subsection{Reduction to Pathset Complexity}\label{sec:the-reduction}

We now present the two main lemmas in the reduction from formula size to pathset complexity. Lemma \ref{la:main-rest}, below, is the main technical lemma (the proof, which relies in part on the switching lemma, is given in \S\ref{sec:smallness}). This lemma is the only place in the overall proof of Theorem \ref{thm:pathset-lb} which depends on the assumption that $F$ has bounded depth (though not on the fact that $F$ is a formula as opposed to a circuit).

\begin{la}[Pathsets $\A^\Gamma_{f,G}$ are Small]\label{la:main-rest}
Suppose $f : \{0,1\}^N \to \{0,1\}$ is computed by a circuit of depth $\le \thedepth{}$ and size $\le n^k$. Then, for all $G \in \wp_k$,
\[
  \Pr_{\Gamma \in \{0,1\}^N_{1/n}}
  [\,
    \A^\Gamma_{f,G} 
    \text{ is not $G$-small}
  \,] 
  \le O(n^{-2k}).
\]
\end{la}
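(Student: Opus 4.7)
The plan is to apply H{\aa}stad's switching lemma iteratively to the depth-$d$ circuit for $f$, combined with a union bound over the potential witnesses to non-smallness. Unpacking Definition~\ref{df:small}, $\A^\Gamma_{f,G}$ fails to be $G$-small precisely when there exist $1 \le t \le \c{G}$, a set $S \subseteq V_G$ that is a union of $t$ components of $G$, and a ``base'' $y \in [n]^{V_G \setminus S}$ such that
\[
  \bigl|\bigl\{x \in [n]^S : xy \in \A^\Gamma_{f,G}\bigr\}\bigr| \,>\, n^{|S|}\,\m^{-t}.
\]
Since the number of triples $(t, S, y)$ is at most $k \cdot 2^k \cdot n^{2k}$, it would suffice to bound the probability of any single fiber being too dense by, say, $n^{-5k}$, and then union-bound to obtain the claimed $O(n^{-2k})$.

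For fixed $(t, S, y)$, let $M \defeq \bigcup_{x \in [n]^S} N_{G, xy}$, a set of at most $kn^2$ variables of $N$. Every $N_{G, xy}$ for $x \in [n]^S$ lies inside $M$. I would first condition on $\Gamma$ restricted to $N \setminus M$, yielding $g := f{\lceil} \Gamma|_{N \setminus M}$, still computed by a depth-$d$ size-$\le n^k$ circuit on variables $M$. To apply the switching lemma, view the remaining random $\Gamma|_M$ as arising from a two-stage process: first sample a random restriction $\theta \in \mc R(1/n, q)$ on $M$ (so $\theta(e) = *$ with probability $q$ and $\theta(e)$ is $(1/n)$-biased otherwise), then resample the $*$-coordinates with bias $1/n$. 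The resulting $\Gamma|_M$ has the correct $(1/n)$-biased distribution, and $g{\lceil}\theta$ is the function to which H{\aa}stad's switching lemma directly applies.

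Iterating the switching lemma $d-1$ times with $*$-probabilities $q_1,\dots,q_{d-1}$ chosen so that each round shrinks the circuit's depth by one while keeping per-round failure probability at $n^{-\Omega(k)}$, the composite random restriction collapses $g$ to a decision tree of depth $D$. The hypothesis $d \le \log n/(k^3 \log\log n)$ is calibrated precisely so that one can reach, e.g., $D \le \tfrac{\eps}{2}\log n$ with total switching-failure probability $O(n^{-5k})$. Such a decision tree reads at most $2^D \le n^{\eps/2}$ distinct variables from $M$. The event $xy \in \A^\Gamma_{f,G}$ requires every variable in $\{\edge{\lift{v}{x_v}}{\lift{w}{x_w}} : \edge{v}{w} \in E_G,\ v,w \in S\}$ to lie in this read-set, which restricts $x \in [n]^S$ to a structured subset whose size, analyzed component-by-component, should come out to at most $n^{|S|}\,\m^{-t}$.

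The main obstacle is this final counting step: extracting a factor of $\m^{-1}$ per component of $S$. The $t$ components of $S$ are vertex-disjoint in $G$, so one expects an independent density penalty from each, but the decision tree's read-set $V \subseteq M$ is a single set of $\le 2^D$ variables and need not factor along components. One must argue graph-theoretically that for any $V \subseteq M$ of size $n^{\eps/2}$, the number of $x \in [n]^S$ whose ``lifted edges'' all lie in $V$ is at most $n^{|S|}\,\m^{-t}$, with the $t$-fold saving coming from the fact that $V$'s projection onto each component-layer has density at most $n^{\eps/2 - 1} \le \m^{-1}$. Carrying this bookkeeping through, while synchronizing the switching-lemma randomness with the specific $*$-pattern of $\rho^\Gamma_{G, xy}$ and keeping all failure probabilities strictly within the $n^{-2k}$ budget across the union bounds, is what forces the precise depth hypothesis stated in the lemma.
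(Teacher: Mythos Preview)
Your outline has the right ingredients --- switching lemma, union bound over $(t,S,y)$, and a final counting step --- but there is a genuine gap at the point where you convert the decision-tree bound for $g\lceil\theta$ into a bound on the fiber $\{x \in [n]^S : xy \in \A^\Gamma_{f,G}\}$.

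The claim ``$xy \in \A^\Gamma_{f,G}$ forces the lifted edges to lie in the read-set of the decision tree'' is only valid when $N_{G,xy} \subseteq \theta^{-1}(\ast)$. In that case $f\lceil\rho^\Gamma_{G,xy}$ is a further restriction of $g\lceil\theta$, and since restriction cannot create live variables, $\Live(f\lceil\rho^\Gamma_{G,xy})$ is indeed contained in the read-set. But when some edge $e \in N_{G,xy}$ has $\theta(e)\in\{0,1\}$, the restriction $\rho^\Gamma_{G,xy}$ \emph{stars} a variable that $\theta$ has already \emph{un-starred}; then $f\lceil\rho^\Gamma_{G,xy}$ is not a restriction of $g\lceil\theta$ at all, and the decision tree tells you nothing about that $x$. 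Since $\Pr[N_{G,xy}\subseteq\theta^{-1}(\ast)]=q^{|E_G|}$ is tiny, your argument as written controls only a negligible slice of the fiber. The phrase ``synchronizing the switching-lemma randomness with the specific $\ast$-pattern'' names this problem but does not solve it; it is the heart of the lemma, not bookkeeping.

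The paper closes this gap with a concentration-of-measure step that is absent from your proposal. A short descent argument first reduces non-smallness of $\A^\Gamma_{f,G}$ to the existence of a \emph{$G|_S$-critical} restricted pathset $\B^\Gamma_{S,z}$ (one violating only the top constraint $\delta>\m^{-\c{G|_S}}$ while all proper $\mu$-constraints hold). Then Janson's lower-tail inequality (Lemma~\ref{la:critical}) shows: if $\A$ is $G$-critical, then with probability $1-\exp(-\Omega(n^{\eps/2}/2^k))$ over an independent random $R\subseteq_q N$, at least $n^{\eps/2}/2$ elements $x\in\A$ satisfy $N_{G,x}\subseteq R$. Criticality is exactly what is needed to control the overlap term $\Upsilon$ in Janson's bound. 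Only after this concentration step does the switching-lemma conclusion $|\Live(f_I\lceil\theta^\Gamma_R)|\le n^{\eps/2k}/2^{k+1}$ (Lemma~\ref{la:from-switching}, applied to the composite restriction that stars $R$ and sets the rest to $\Gamma$) produce a contradiction, since it forces fewer than $n^{\eps/2}/2$ tuples $x$ with $N_{G,x}$ inside the live set. Without the Janson step there is no mechanism to transfer information from the sparse random set $\theta^{-1}(\ast)$ to the full fiber, and your counting argument --- however sharp --- bounds the wrong quantity.
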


Lemma \ref{la:halcali}, below, is the nexus between formula size and pathset complexity. The proof involves a novel top-down argument, which is key to distinguishing formulas and circuits. (Though we will apply Lemma \ref{la:halcali} to the formula $F$ which we have been considering so far, Lemma \ref{la:halcali} is stated in general terms for arbitrary boolean functions with fan-in $2$.)

\begin{la}[``Top-Down Lemma'']
\label{la:halcali}
Let $F$ be any fan-in $2$ formula and let $\Gamma \in \{0,1\}^N$. If $\A^\Gamma_{f,G} \in \smallP_G$ for all $f \in F$ and $G \in \wp_k$, then
\[
  \cost_{P_k}(\A^\Gamma_{f_{\mr{out}},P_k}) 
  \le 2^{O(k^2)} \cdot \depth(F)^k \cdot \size(F).
\]
\end{la}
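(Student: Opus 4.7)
The plan is to prove, by structural induction on $f \in F$, the strengthened claim
\[
  \Phi_G(f) \;\le\; c_{|E_G|}\cdot d(f)^{|E_G|-1}\cdot s(f) \qquad\text{for every }G\in\wp_k\text{ with }|E_G|\ge 1,
\]
where $\Phi_G(f) \defeq \cost_G(\A^\Gamma_{f,G})$, $s(f)$ denotes the size of the subformula of $F$ rooted at $f$, $d(f) \defeq \depth(F|_f) + 1 \ge 1$, and $c_m = 2^{O(m^2)}$ is the explicit sequence determined below. Specialising to $f = f_{\mr{out}}$ and $G = P_k$ yields $\cost_{P_k}(\A^\Gamma_{f_{\mr{out}},P_k}) \le c_k(\depth(F)+1)^{k-1}\size(F)$, which absorbs into the stated bound $2^{O(k^2)}\depth(F)^k\size(F)$.

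To set up the recurrence, I would combine Claim~\ref{claim:at-gates} with the (monotonicity), (sub-additivity), and (join rule) properties of $\cost$, obtaining at every internal gate $f$ with children $f_1, f_2$
\[
  \Phi_G(f) \;\le\; \Phi_G(f_1) + \Phi_G(f_2) + \sum_{\substack{G_1 \cup G_2 = G \\ G_1,\, G_2 \subsetneq G}} \bigl(\Phi_{G_1}(f_1) + \Phi_{G_2}(f_2)\bigr).
\]
The join rule is legal here precisely because the lemma's hypothesis makes every $\A^\Gamma_{f_i, G_i}$ small. Claim~\ref{claim:at-inputs} furnishes the base case: at an input $f$, $\Phi_G(f)\in\{0,1\}$ is nonzero only for the single-edge $G$ matching the input's literal, which matches the target bound at $d(f) = 1$ as soon as $c_m \ge 1$.

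For the induction step, write $m = |E_G|$, $d = d(f)$, $s = s(f)$. In any split $(G_1, G_2)$ both $G_i$ are nonempty (else the other equals $G$), so $|E_{G_i}| \in \{1,\dots,m-1\}$; the total number of splits is at most $3^m$, since each edge of $G$ may lie in $G_1$ only, $G_2$ only, or both. Applying the inductive hypothesis to $f_1, f_2$ (where $d(f_i) \ge 1$ lets me freely upper-bound $d(f_i)^{|E_{G_j}|-1}$ by $d(f_i)^{m-2}$ in the split terms) together with the tree identity $s = s(f_1) + s(f_2)$ and $d(f_i) \le d - 1$, the recurrence collapses to
\[
  \Phi_G(f) \;\le\; c_m(d-1)^{m-1}s + 3^m c_{m-1}(d-1)^{m-2}s.
\]
By convexity, $d^{m-1} - (d-1)^{m-1} \ge (m-1)(d-1)^{m-2}$, so the induction closes whenever $c_1 = 1$ and $c_m \ge 3^m c_{m-1}/(m-1)$ for $m \ge 2$; unrolling yields $c_m = 3^{m(m+1)/2-1}/(m-1)! = 2^{O(m^2)}$.

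The delicate feature, and the one at which the argument genuinely distinguishes formulas from circuits, is that the ``straight'' terms $\Phi_G(f_1) + \Phi_G(f_2)$ already use up the bound up to $c_m(d-1)^{m-1}s$, leaving only the sliver $c_m[d^{m-1}-(d-1)^{m-1}] \approx c_m(m-1)(d-1)^{m-2}$ to absorb all $3^m$ splits; this is what forces the quadratic exponent $c_m = 2^{\Theta(m^2)}$ and, ultimately, the restriction $k \le \log\log n$ in the main theorem. Equally important, the leaf-additivity $s(f) = s(f_1) + s(f_2)$ is what lets the right-hand side telescope cleanly across the whole induction: for a general DAG circuit with shared inputs the same bookkeeping would over-count each input by up to $2^{\depth(F)}$ and the induction would collapse. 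This is the sense in which Lemma~\ref{la:halcali} is a ``top-down'' argument using the tree structure of a formula, in contrast to the bottom-up, switching-lemma-based Lemma~\ref{la:main-rest}.
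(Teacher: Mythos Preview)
Your proof is correct and rests on exactly the same recurrence as the paper: starting from Claim~\ref{claim:at-gates} together with (monotonicity), (sub-additivity), and the (join rule) (the last enabled by the smallness hypothesis), one obtains at every gate
\[
  \cost_G(\A^\Gamma_{f,G}) \;\le\; \cost_G(\A^\Gamma_{f_1,G}) + \cost_G(\A^\Gamma_{f_2,G}) + \sum_{\substack{G_1 \cup G_2 = G \\ G_1,G_2 \subsetneq G}} \bigl(\cost_{G_1}(\A^\Gamma_{f_1,G_1}) + \cost_{G_2}(\A^\Gamma_{f_2,G_2})\bigr),
\]
and the crucial ingredient distinguishing formulas from circuits is the leaf-additivity $s(f) = s(f_1) + s(f_2)$.

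The packaging differs. The paper unwinds this recurrence all the way to the leaves, writes $\cost_{P_k}(\A^\Gamma_{f_{\mr{out}},P_k}) \le \sum_{f \in F_{\mr{in}},\,G} c_{f,G}\cost_G(\A^\Gamma_{f,G})$, and bounds each coefficient $c_{f,G}$ by counting ``branching histories'' $(t_0,H_0,\dots,t_i,H_i)$ recording where along the root-to-leaf branch the pattern graph drops from $H_{j-1}$ to $H_j$; this yields $c_{f,G} \le 2^{O(k^2)}\depth(F)^k$ directly. You instead prove a closed-form ansatz $\cost_G(\A^\Gamma_{f,G}) \le c_{|E_G|}\,d(f)^{|E_G|-1}\,s(f)$ by induction on $f$, using the convexity inequality $d^{m-1}-(d-1)^{m-1}\ge(m-1)(d-1)^{m-2}$ to absorb the $\le 3^m$ split terms into the increment of the polynomial. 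Your route is a bit tighter (you get $\depth(F)^{k-1}$ rather than $\depth(F)^k$, and the recursion $c_m = 3^m c_{m-1}/(m-1)$ makes the $2^{O(k^2)}$ explicit), while the paper's unwinding makes the combinatorial source of the $\binom{\depth}{i}$ and $2^{ik}$ factors more transparent. Either way, the substance is the same.
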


\begin{proof}
Assume $\A^\Gamma_{f,G} \in \smallP_G$ for all $f \in F$ and $G \in \wp_k$. Consider any $f \in F_{\mr{gates}}$ with children $f_1$ and $f_2$. By Claim \ref{claim:at-gates}, together with the key properties (monotonicity), (sub-additivity) and (join rule) of pathset complexity, we have
\begin{align*}
  \cost_G(\A^\Gamma_{f,G}) 
  &\le
  \cost_G\Big(
  \AA{\Gamma}{G}{f_1} \cup \AA{\Gamma}{G}{f_2} \cup
  \bigcup_{\substack{G_1,G_2 \subset G \,:\, G_1 \cup G_2 = G}}
  \AA{\Gamma}{G_1}{f_1} \bowtie \AA{\Gamma}{G_2}{f_2}
  \Big)
  \\
  &\le 
  \cost_G(\AA{\Gamma}{G}{f_1}) + \cost_G(\AA{\Gamma}{G}{f_2})
  + \sum_{\substack{G_1,G_2 \subset G \,:\, G_1 \cup G_2 = G}}
  \Big(
  \cost_{G_1}(\AA{\Gamma}{G_1}{f_1}) 
  + \cost_{G_2}(\AA{\Gamma}{G_2}{f_2})
  \Big)
  \\
  &\le
  \Big(
  \cost_G(\AA{\Gamma}{G}{f_1}) 
  + 2^k\sum_{H \subset G} \cost_H(\AA{\Gamma}{H}{f_1})
  \Big)
  +
  \Big(
  \cost_G(\AA{\Gamma}{G}{f_2})
  + 2^k\sum_{H \subset G} \cost_H(\AA{\Gamma}{H}{f_2})
  \Big).
\end{align*}

If we start from $\cost_{P_k}(\A^\Gamma_{f_{\mr{out}},P_k})$ and repeatedly apply the above inequality until reaching the inputs of $F$, we get a bound of the form
\[
  \cost_{P_k}(\A^\Gamma_{f_{\mr{out}},P_k})
  \le
  \sum_{\substack{f \in F_{\mr{in}},\, G \in \wp_k}} 
  c_{f,G} \cdot \cost_G(\A^\Gamma_{f,G})
\]
for some $c_{f,G} \in \Z_{\ge 0}$. We claim that 
\[
  c_{f,G}
  \le \sum_{i,H_0,\dots,H_i \,:\, 
  P_k = H_0 \supset \dots \supset H_i = G} 2^{ik} \cdot \binom{\text{depth of $f$ in $F$}}{i}
  \le 2^{O(k^2)} \cdot \depth(F)^k.
\]
To see this, consider any $f \in F_{\mr{in}}$ and $G \in \wp_k$ and let $f_{\mr{out}} = f_0,\dots,f_d = f$ be the branch in $F$ from the output gate down to $f$. Then in the expansion of $\cost_{P_k}(\A^\Gamma_{f_{\mr{out}},P_k})$, we get a contribution of $2^{ik}$ ($\le 2^{k^2}$) from each sequence $(i,t_0,H_0,t_1,H_1,\dots,t_i,H_i)$ where $0 = t_0 < \dots < t_i = d$ and $P_k = H_0 \supset \dots \supset H_i = G$; here $t_i$ is the location where the expansion of $\cost_{P_k}(\A^\Gamma_{f_{\mr{out}},P_k})$ branches as we move from $\cost_{H_{i-1}}(\A^\Gamma_{f_{i-1},H_{i-1}})$ to $2^k\cost_{H_i}(\A^\Gamma_{f_i,H_i})$. Finally, we bound the number of $(t_0,\dots,t_i)$ by $\binom{d}{i}$ ($\le \depth(F)^k$) and the number of $(H_0,\dots,H_i)$ by $2^{ik}$ ($\le 2^{k^2}$). Summing over $i$ adds only a factor of $k$, so in total we get $c_{f,G} \le 2^{O(k^2)} \cdot \depth(F)^k$.

We now use the fact that $\sum_{G \in \wp_k} \cost_G(\A^\Gamma_{f,G}) = 1$ for all $f \in F_{\mr{in}}$ (Claim \ref{claim:at-inputs}) and $\size(F) = |F_{\mr{in}}|$ (since $F$ is a formula!). Concluding the proof, we have
\[
  \cost_{P_k}(\A^\Gamma_{f_{\mr{out}},P_k}) 
  \le 2^{O(k^2)} \cdot \depth(F)^k \cdot \size(F).\qedhere
\]
\end{proof}

We conclude this section by giving the proof of Theorem \ref{thm:main} assuming our pathset complexity lower bound (Theorem \ref{thm:pathset-lb}) and main technical lemma (Lemma \ref{la:main-rest}).

\begin{redu}\label{redu:1}
\normalfont
Theorem \ref{thm:pathset-lb} and Lemma \ref{la:main-rest} 
$\Longrightarrow$
Theorem \ref{thm:main}.
\end{redu}

\begin{proof}
Assuming Theorem \ref{thm:pathset-lb} and Lemma \ref{la:main-rest}, we must show that $\size(F) \ge n^{\Omega(\log k)}$. By Claim \ref{claim:at-output} and Lemma \ref{la:main-rest}, there exists $\Gamma \in \{0,1\}^N$ such that $\delta(\A^\Gamma_{f_{\mr{out}},P_k}) \ge .99 n^{-2}$ and $\A^\Gamma_{f,G} \in \smallP_G$ for all $f \in F$ and $G \in \wp_k$. Fix any such $\Gamma$. We now have
\begin{align*}
  &&\size(F)
  &\ge
    \frac{1}{2^{O(k^2)} \cdot \depth(f)^k}
    \cdot 
    \cost_{P_k}(\A^\Gamma_{f_{\mr{out}},P_k})
  &&\text{(Lemma \ref{la:halcali})}&&\\
  \vphantom{\bigg|}&&&\ge
    \frac{1}{2^{O(k^2)} \cdot \depth(f)^k}
    \cdot
    \frac{n^{(1/6)\log k}}{2^{O(2^k)}}
    \cdot 
    \delta(\A^\Gamma_{f_{\mr{out}},P_k})
  &&\text{(Theorem \ref{thm:pathset-lb})}.
\end{align*}
Using inequalities
\[
  \depth(F) \le \log^2 n,\quad\
  \delta(\A^\Gamma_{f_{\mr{out}},P_k}) \ge .99n^{-2},\quad\
  k \le \log\log n,
\]
we get the desired bound $\size(F) \ge n^{(1/6)\log k - O(1)}$.
\end{proof}

\section{Small Pathsets from Random Restrictions}
\label{sec:smallness}

In this section, we prove Lemma \ref{la:main-rest} showing that, with high probability over random $\Gamma \in \{0,1\}^N_{1/n}$, pathsets $\A^\Gamma_{f,G}$ are small for all $f \in F$ and $G \in \wp_k$. The proof has the following scheme:
\[
  \fbox{$
  \begin{gathered}
  \underbrace{\begin{gathered}
      \text{Janson's Inequality \cite{janson1990poisson}}\\
      \Downarrow\\
      \text{Lemma \ref{la:critical}}
    \end{gathered}
    \qquad\qquad
    \begin{gathered}
      \text{Switching Lemma \cite{hastad1987computational}}\\
      \Downarrow\\
      \text{Lemma \ref{la:from-switching}}
    \end{gathered}}\\
    \Downarrow\\
    \text{Preliminary Lemma \ref{la:pre-main-rest}}\\
    \Downarrow\\
    \text{Main Technical Lemma \ref{la:main-rest}}
  \end{gathered}
  $}
\]
The central argument is contained in the proof of Preliminary Lemma \ref{la:pre-main-rest} (from which Lemma \ref{la:main-rest} essentially follows as a corollary). In the interest of presenting this central argument first, the proofs of Lemmas \ref{la:critical} and \ref{la:from-switching} are given afterwards in \S\ref{sec:critical} and \S\ref{sec:from-switching}.

\begin{rmk}
Lemma \ref{la:main-rest} is similar to the main technical lemma in the $k$-clique lower bound of \cite{rossman2008constant,rossman2010average}. One important difference is that here we require a concentration of measure inequality in a place where a mere expectation bound sufficed for the $k$-clique result.\footnote{Consider the fact that functions $f : \{0,1\}^n \to \{0,1\}$ in $\AC^0$ have low average sensitivity. This is a statement of the form $\Pr_{x \in \{0,1\}^n,\,i \in [n]}[i \in S(f,x)] \le \eps$ where $S(f,x)$ is the set $\{i \in [n] : f(x) \ne f(x \oplus i)\}$. The main technical lemma of \cite{rossman2008constant} is an inequality of the same form; the difference is that $S(f,x)$, rather than the set of sensitive coordinates, is instead the set of {\em sensitive $G$-shaped sets of coordinates} where $G$ is a pattern graph (which, in the context of $k$-clique, means a subgraph of $K_k$). By contrast, Lemma \ref{la:main-rest} is a concentration of measure inequality analogous to showing $\Pr_x[\Pr_i[i \in S(f,x)] > \eps] \le \delta$.} This makes the proof of Lemma \ref{la:main-rest} somewhat more complicated.
\end{rmk}

Recall that $G$-smallness consists of $2^{\c{G}}-1$ density constraints corresponding to the nonempty unions of components of $G$. We say that non-small pathset $\A$ is {\em $G$-critical} if it violates only the ``top'' constraint $\delta(\A) > \m^{-\c{G}}$. Formally:

\begin{df}[Critical Pathsets]
For a pattern graph $G$ and pathset $\A \in \P_G$, we say that $\A$ is {\em $G$-critical} if $\delta(\A) > \m^{-\c{G}}$ and $\mu_{V_G \setminus S}(\A) \le \m^{s-\c{G}}$ for all $1 \le s < \c{G}$ and $S \subseteq V_G$ such that $S$ intersects exactly $s$ components of $G$.
\end{df}

Our first lemma, proved in \S\ref{sec:critical}, gives a concentration of measure inequality for critical pathsets. (Recall that $\eps = 1/\log k$ from Def.\ \ref{df:small}(i).)

\begin{la}\label{la:critical}
Let $G$ be a pattern graph and suppose $\A$ is a $G$-critical pathset. Let $q = (1/n)^{1 + (\eps/2k)}$. Then
\[
  \Pr_{\X \subseteq_q N}
  \Big[\,
    \#\big\{x \in \A : N_{G,x} \subseteq \X\big\}
    \le 
    \frac{n^{\eps/2}}{2}
  \,\Big]
  \le 
  \exp\Big({-}\Omega\Big(
  \frac{n^{\eps/2}}{2^k}
  \Big)\Big).
\]
\end{la}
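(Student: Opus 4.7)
My plan is to apply Janson's inequality to the random variable $Z := \#\{x \in \A : N_{G,x} \subseteq \X\}$. Setting $\mu := \Ex[Z] = |\A|\, q^{|E_G|}$ and $\Delta := \sum_{x \ne y \in \A,\, N_{G,x} \cap N_{G,y} \ne \emptyset} q^{|N_{G,x} \cup N_{G,y}|}$, Janson's inequality gives $\Pr[Z \le \mu/2] \le \exp\bigl(-\mu^2/(8(\mu+\Delta))\bigr)$. Since $\{Z \le n^{\eps/2}/2\} \subseteq \{Z \le \mu/2\}$ whenever $\mu \ge n^{\eps/2}$, the task reduces to establishing the two inequalities $\mu \ge n^{\eps/2}$ and $\mu^2/\Delta \ge n^{\eps/2}/2^{O(k)}$.

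The lower bound on $\mu$ is immediate. The identity $|V_G| = |E_G| + \c{G}$ (since each component of $G$ is a path), combined with the critical density $\delta(\A) > \m^{-\c{G}} = n^{-(1-\eps)\c{G}}$ and $q^{|E_G|} = n^{-(1+\eps/(2k))|E_G|}$, gives $\mu = \delta(\A)\cdot n^{|V_G|} \cdot q^{|E_G|} > n^{\eps\c{G} - \eps|E_G|/(2k)} \ge n^{\eps/2}$, using $\c{G} \ge 1$ and $|E_G| \le k$.

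The substantive step is the bound on $\Delta$. I would partition pairs $(x,y) \in \A^2$ by the agreement set $V^\ast := \{v \in V_G : x_v = y_v\}$, observing that $|N_{G,x}\cap N_{G,y}|$ equals the number $e(V^\ast)$ of edges of $G$ with both endpoints in $V^\ast$, and that the number of pairs with a given $V^\ast$ is at most $|\A|\cdot\mu_W(\A)\cdot n^{|W|}$ where $W = V_G \setminus V^\ast$ (this follows from $\sum_z |\RHO{W}{\A}{z}|^2 \le \max_z |\RHO{W}{\A}{z}| \cdot \sum_z |\RHO{W}{\A}{z}|$). This yields
\[
\Delta/\mu \le \sum_{\substack{V^\ast \subsetneq V_G \\ e(V^\ast) \ge 1}} \mu_W(\A) \cdot n^{|W|} \cdot q^{|E_G|-e(V^\ast)}.
\]
I would then bound each summand by cases on the number $s$ of components of $G$ that $V^\ast$ intersects. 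When $s < \c{G}$, the critical property gives $\mu_W(\A) \le \m^{s-\c{G}}$; together with the elementary inequalities $|V^\ast| - e(V^\ast) \ge s$ and $|E_G| - e(V^\ast) \ge \c{G} - s$ (the latter since each of the $\c{G}-s$ unintersected components must contribute at least one missing edge), the summand simplifies to at most $n^{\eps(\c{G}-s)}$. When $s = \c{G}$ but $V^\ast \ne V_G$, only the trivial bound $\mu_W(\A) \le 1$ is available, but $V^\ast \ne V_G$ forces $|E_G| - e(V^\ast) \ge 1$, so the factor $(nq)^{|E_G|-e(V^\ast)} \le n^{-\eps/(2k)}$ keeps the summand below $1$. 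Summing over the $\le 2^{|V_G|} \le 2^{O(k)}$ choices of $V^\ast$, I obtain $\Delta \le \mu \cdot 2^{O(k)} \cdot n^{\eps(\c{G}-1)}$. Combined with $\mu \ge n^{\eps \c{G} - \eps/2}$, this yields $\mu^2/\Delta \ge n^{\eps/2}/2^{O(k)}$, and Janson concludes.

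The main obstacle will be the case analysis underlying the bound on $\Delta$: the smallness bounds $\m^{-t}$, the edge probability $q$, and the structural parameters $\c{G}$, $|E_G|$, $|V^\ast|$, $e(V^\ast)$ must all be balanced precisely enough that the sum over up to $2^{O(k)}$ partition types $V^\ast$ does not wash out the gap between $\mu$ and $\Delta$. The critical hypothesis on $\A$ is calibrated exactly so as to make this balancing work, which is why $G$-criticality (rather than just $\delta(\A) > \m^{-\c{G}}$) is needed.
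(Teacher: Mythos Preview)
Your proposal is correct and follows essentially the same route as the paper: apply Janson's inequality, lower bound $\mu=\lambda$ via $|\A|>n^{|E_G|+\eps\c{G}}$, and upper bound each term of the correlation sum using the criticality hypothesis $\mu_{V_G\setminus S}(\A)\le\m^{s-\c{G}}$ together with the forest identity $|V|-|E|=\#\text{components}$. The only cosmetic difference is that you partition pairs $(x,y)$ by the \emph{vertex} agreement set $V^\ast=\{v:x_v=y_v\}$, whereas the paper partitions by the \emph{edge} agreement set $T=\{vw\in E_G:x_v=y_v,\,x_w=y_w\}$ and then passes to $S=\bigcup_{vw\in T}\{v,w\}$; both partitions yield the summand bound $\lambda\cdot n^{\eps(\c{G}-1)}$ and hence $\lambda^2/\Upsilon\ge n^{\eps/2}/2^{O(k)}$. (Incidentally, your $s=\c{G}$ case already gives summand $\le 1$ directly from $|V^\ast|-e(V^\ast)\ge\c{G}$, so the extra $(nq)$ factor you invoke there is not needed.)
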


Our second lemma, proved in \S\ref{sec:from-switching}, is a straightforward corollary of H{\aa}stad's Switching Lemma \cite{hastad1987computational}.

\begin{la}\label{la:from-switching}
Suppose $f$ is a boolean function\footnote{This statement is independent of the number of variables of $f$.} computed by a circuit of size $s$ and depth $d$. For all $0 < q \le p \le 1/2$ and $\tau,r > 0$, 
\[
  d \le
  \frac{\log(p/q)}{r^{-1}\log(s/\tau)+\log(5r)}
  \ \Longrightarrow\ 
  \Pr_{\theta \in \mc R(p,q)} 
  \big[\,
  |\Live(f{\lceil}\theta)| > 2^r
  \,\big] 
  \le \tau.
\]
\end{la}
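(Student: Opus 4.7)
The strategy is to decompose the random restriction $\theta$ into $d$ successive layers and apply H{\aa}stad's Switching Lemma once per layer to peel off one level of depth at a time, until only a shallow decision tree remains; a decision tree of depth $r$ queries at most $2^r - 1 < 2^r$ distinct variables, which directly bounds $|\Live(f{\lceil}\theta)|$.

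First, I would verify the composition identity $\mc R(p, q_1) \circ \mc R(p, q_2) = \mc R(p, q_1 q_2)$ by a direct marginals calculation: conditional on being $\ast$ after the first step (probability $q_1$), a variable survives the second step with probability $q_2$, and in both steps the conditional probability of being set to $1$ given ``not $\ast$'' is exactly $p$. Using this, I would write $\theta = \theta_d \circ \dots \circ \theta_1$ where each $\theta_i \in \mc R(p, \lambda)$ with per-round star-probability $\lambda \defeq (q/p)^{1/d}$, composed with an initial layer that reduces the $\ast$-probability from $1$ down to $p$. It is precisely this per-round ratio that produces the factor $p/q$ (rather than $1/q$) in the final bound.

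At round $i$, the current circuit has depth at most $d - i + 1$, and every bottom depth-$2$ subcircuit is a width-$r$ CNF or DNF: after round $1$ this is guaranteed because the preceding decision-tree conversion forced bottom fan-in $\le r$, while in round $1$ itself one uses the trivial bound $\le s$ on bottom fan-in and absorbs the loss into the $\log(s/\tau)$ term. H{\aa}stad's Switching Lemma converts each bottom depth-$2$ subcircuit into a decision tree of depth $\le r$ except with probability at most $(5\lambda r)^r$; rewriting this decision tree as a width-$r$ CNF/DNF (dualized if needed to match the gate above) collapses one layer of alternation. A union bound over the $\le s$ bottom subcircuits in each of the $d$ rounds bounds the total failure probability by $ds(5\lambda r)^r$.

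Setting $\lambda = (q/p)^{1/d}$ and imposing $ds(5\lambda r)^r \le \tau$, rearranging gives $(5r)^d(s/\tau)^{d/r} \le p/q$ up to the stray factor of $d$ inside the logarithm; this factor is swallowed by the additive $d\log(5r)$ on the right (since $\log d \le d \log 5 \le d\log(5r)$), yielding precisely the hypothesis $d \le \log(p/q)/(r^{-1}\log(s/\tau) + \log(5r))$. Conditional on no switching failure across the $d$ rounds, $f{\lceil}\theta$ is computed by a depth-$r$ decision tree, so $|\Live(f{\lceil}\theta)| \le 2^r - 1 < 2^r$. The main obstacle is purely bookkeeping: arranging the per-round error budget and handling the unbounded initial bottom fan-in so that only $\log(s/\tau)$ (and not $\log(ds/\tau)$) appears in the denominator. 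This is done by allotting a slightly larger error budget to the first round (absorbed by the $\log(5r)$ term) rather than splitting the budget uniformly.
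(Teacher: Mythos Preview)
Your overall strategy---decompose the restriction into $d$ layers, apply the Switching Lemma once per layer, and union-bound over all gates---is exactly what the paper does. However, two details need correction.

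First, your per-round restrictions $\theta_i \in \mc R(p,\lambda)$ have bias $p$, but the Switching Lemma as stated in the paper (Lemma~\ref{la:switching-lemma}) applies only to $\mc R(1/2,q)$. The composition identity you check, $\mc R(p,q_1)\circ\mc R(p,q_2)=\mc R(p,q_1q_2)$, is correct, but it forces bias $p$ at every stage, so you cannot invoke the lemma. The paper's fix is to take $\theta_0 \in \mc R(p_0,p)$ for a carefully chosen $p_0$ (this step does no switching; literals are already depth-$1$ decision trees) and then $\theta_i \in \mc R(1/2,(q/p)^{1/d})$ for $i=1,\dots,d$. One then verifies directly that the composition $\Theta_d$ has distribution $\mc R(p,q)$.

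Second, your ``stray factor of $d$'' is a bookkeeping error, and your attempt to absorb it does not work (you would need $d \cdot r^{-1}\log d$ to be dominated by terms already present, which fails in general). The point is that the circuit has $s$ gates \emph{in total}, each sitting at a single height; the Switching Lemma is applied once to each gate, conditioned on all gates strictly below it having already collapsed to depth-$r$ decision trees. Thus the union bound is over $s$ events, not $ds$, giving failure probability $\le s\cdot(5r(q/p)^{1/d})^r$. Setting this $\le \tau$ and solving for $d$ yields exactly the stated hypothesis with no leftover factor.
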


We now give the core argument in the proof of Lemma \ref{la:main-rest}.

\begin{la}\label{la:pre-main-rest}
Suppose $f : \{0,1\}^N \to \{0,1\}$ is computed by a circuit of size $\le n^k$ and depth $\le \thedepth{}$. Let $G,H$ be pattern graphs with $V_G \cap V_H = \emptyset$ and let $y \in [n]^{V_H}$. For $\Gamma \in \{0,1\}^N$, define $G$-pathset $\A^\Gamma$ by
\[
  \A^\Gamma \defeq
  \big\{x \in [n]^{V_G} : 
  N_{G,x} \subseteq \Live(f{\lceil}\rho^\Gamma_{G \cup H,xy})\big\}.
\]
Then
$\ds
  \Pr_{\Gamma \in \{0,1\}^N_{1/n}}
  [\,\A^\Gamma \text{ is $G$-critical}\,]
  \le O(n^{-10k}).
$
\end{la}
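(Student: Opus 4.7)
The strategy is to couple $\Gamma$ with a random restriction $\theta$ of $f$ so that ``$\A^\Gamma$ is $G$-critical'' plus a side event of probability $q^{|E_H|}$ forces $|\Live(f{\lceil}\theta)|$ to be large, contradicting Lemma~\ref{la:from-switching}. Set $q := n^{-1-\eps/(2k)}$ as in Lemma~\ref{la:critical}, sample $\Gamma \in \{0,1\}^N_{1/n}$ and $R \subseteq_q N$ independently, and let $\theta(i) = \ast$ if $i \in R$ and $\theta(i) = \Gamma_i$ otherwise; then $\theta \in \mc R(1/n,q)$ and $\Gamma$ is independent of $R$. Since $V_G \cap V_H = \emptyset$, the set $N_{H,y}$ is disjoint from every $N_{G,x}$, so the events $\{N_{G,x} \subseteq R\}$ and $\{R \supseteq N_{H,y}\}$ depend on disjoint coordinates of $R$ and are independent.

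Consider the event $E := \{\A^\Gamma \text{ is }G\text{-critical}\} \cap \{\#\{x \in \A^\Gamma : N_{G,x} \subseteq R\} \ge n^{\eps/2}/2\} \cap \{R \supseteq N_{H,y}\}$. On $E$, each ``good'' $x$ has $N_{G \cup H, xy} = N_{G,x} \cup N_{H,y} \subseteq R = \theta^{-1}(\ast)$, while $\theta$ agrees with $\rho^\Gamma_{G \cup H, xy}$ on $N \setminus R$ (both equal $\Gamma$ there); hence $\theta$ has at least as many stars as $\rho^\Gamma_{G \cup H, xy}$ and $N_{G,x} \subseteq \Live(f{\lceil}\rho^\Gamma_{G\cup H, xy}) \subseteq \Live(f{\lceil}\theta)$. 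Since the injection $x \mapsto N_{G,x}$ maps to $|E_G|$-subsets of $\Live(f{\lceil}\theta)$ and $|E_G| \le k \le \log\log n$, the bound $\binom{|\Live(f{\lceil}\theta)|}{|E_G|} \ge n^{\eps/2}/2$ forces $|\Live(f{\lceil}\theta)| > 2^r$ for $r := \lfloor \eps\log n/(2k) \rfloor - 1$. By Lemma~\ref{la:critical} applied to $\A^\Gamma$ together with the independence above,
\[
  \Pr[E] \ge \Pr_\Gamma[\A^\Gamma \text{ is $G$-critical}] \cdot \bigl(1 - \exp(-\Omega(n^{\eps/2}/2^k))\bigr) \cdot q^{|E_H|}.
\]

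On the other hand $E \subseteq \{|\Live(f{\lceil}\theta)| > 2^r\}$, so Lemma~\ref{la:from-switching} with $p = 1/n$, $q$ as above, $s = n^k$, $\tau = n^{-12k}$, and this $r$ gives $\Pr[E] \le \tau$ once its depth hypothesis is checked. Substituting $\log(p/q) = \log n/(2k\log k)$, $r = \Theta(\log n/(k\log k))$, $\log(s/\tau) = O(k\log n)$, and $\log(5r) = O(\log\log n)$, the RHS of the depth condition works out to $\Omega(\log n/\max\{k^3\log^2 k,\, k\log k \log\log n\})$, which dominates $d \le \thedepth{}$ for all $k \le \log\log n$. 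Comparing the two bounds on $\Pr[E]$ and using $q^{-|E_H|} \le q^{-k} = n^{k+\eps/2}$,
\[
  \Pr_\Gamma[\A^\Gamma \text{ is $G$-critical}] \le \frac{2\tau}{q^{|E_H|}} \le 2 n^{-12k + k + \eps/2} = O(n^{-11k + \eps/2}) = O(n^{-10k}).
\]

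The main obstacle I anticipate is the parameter bookkeeping: $r$, $\tau$, $q$ must be chosen simultaneously so that (i) the injective counting argument forces $|\Live(f{\lceil}\theta)| > 2^r$ on $E$, (ii) Lemma~\ref{la:from-switching} bounds $\Pr[|\Live(f{\lceil}\theta)| > 2^r]$ by $\tau$ under exactly the stated depth $\thedepth{}$, and (iii) the factor $q^{-|E_H|}$ paid for conditioning on $N_{H,y} \subseteq R$ is absorbed by $\tau$ to leave a $O(n^{-10k})$ tail. A secondary delicacy is that the injective counting crucially uses $|E_G| \le k \le \log\log n$ to make $(n^{\eps/2}/2)^{1/k}$ large enough to exceed $2^r$, which is what pins the depth range of the main theorem to $k \le \log\log n$.
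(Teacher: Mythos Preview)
Your proof is correct and follows the same architecture as the paper's: couple $\Gamma$ with $R \subseteq_q N$ to obtain $\theta \in \mc R(1/n,q)$, use Lemma~\ref{la:critical} to show that $G$-criticality of $\A^\Gamma$ forces many $N_{G,x}$ into $R$, and invoke Lemma~\ref{la:from-switching} to cap $|\Live(f{\lceil}\theta)|$. The one genuine difference is in how the ``$H$-part'' of the restriction $\rho^\Gamma_{G\cup H,xy}$ is eliminated. The paper introduces the auxiliary functions $f_I(\Gamma)=f(\Gamma\oplus I)$ for the $2^{|E_H|}\le 2^k$ settings $I$ of the $N_{H,y}$-coordinates and uses the identity $\Live(f{\lceil}\rho^\Gamma_{G\cup H,xy})=\bigcup_I \Live(f_I{\lceil}\rho^\Gamma_{G,x})$, reducing to a union bound over $I$. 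You instead condition on the side event $N_{H,y}\subseteq R$ and use monotonicity of $\Live$ under coarsening of restrictions to get $\Live(f{\lceil}\rho^\Gamma_{G\cup H,xy})\subseteq \Live(f{\lceil}\theta)$ directly. Your route is more elementary (no auxiliary functions, no identity to verify) but pays a factor $q^{-|E_H|}\le n^{k+\eps/2}$ rather than $2^k$; this is harmlessly absorbed by your choice $\tau=n^{-12k}$ versus the paper's $n^{-11k}$.

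Two small points. First, when you verify the depth hypothesis of Lemma~\ref{la:from-switching} you appeal to ``$k\le\log\log n$'', but the lemma carries no such assumption; the paper instead observes that the statement is vacuous for $k>\log^{1/3}n$ (since then $d<1$), and you should say the same. Second, your closing remark that the injective counting ``pins the depth range of the main theorem to $k\le\log\log n$'' misattributes the restriction: the $k\le\log\log n$ bound in Theorem~\ref{thm:main} comes from the $2^{O(2^k)}$ loss in Theorem~\ref{thm:pathset-lb}, not from this lemma (which, as the paper notes, holds up to $k\le\log^{1/3}n$).
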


Note that Lemma \ref{la:pre-main-rest} does not rely on the assumption that $k \le \log\log n$. It holds up to $k \le \log^{1/3}n$ (at which point the statement is trivial). We also remark that, while the bound $O(n^{-10k})$ is sufficient for our purposes, we could easily get a stronger bound like $O(n^{-k\log k})$.

\begin{proof}
Define $\mc I \subseteq \{0,1\}^N$ by
\[
  \mc I \defeq \big\{
    I \in \{0,1\}^N : I_\nu = 0 \text{ for all } \nu \in N \setminus N_{H,y}
  \big\}.
\]
Note that $|\mc I| = 2^{|N_{H,y}|} = 2^{|E_H|} \le 2^k$. 

For $I \in \mc I$, define $f_I : \{0,1\}^N \to \{0,1\}$ by $f_I(\Gamma) = f(\Gamma \oplus I)$. For all $x \in [n]^{V_G}$ and $\Gamma \in \{0,1\}^N$, we have
\begin{align}\label{eq:mcI}
  \Live(f{\lceil}\rho^\Gamma_{G \cup H,xy})
  =
  \bigcup_{I \in \mc I} \Live(f_I{\lceil}\rho^\Gamma_{G,x}).
\end{align}

For $\X \subseteq N$, let $\theta^\Gamma_\X : N \to \{0,1,\ast\}$ be the restriction taking value $\ast$ over $\X$ and equal to $\Gamma$ over $N \setminus \X$. Define pathsets $\B_\X$ and $\C^\Gamma_\X$ by
\begin{align*}
  \B_\X &\defeq \big\{x \in [n]^{V_G} : N_{G,x} \subseteq \X\big\},\\
  \C^\Gamma_\X &\defeq \big\{x \in [n]^{V_G} : N_{G,x} \subseteq 
  \ts\bigcup_{I \in \mc I} \Live(f_I\lceil\theta^\Gamma_\X)
  \big\}.
  \vphantom{t^{\big|}}
\end{align*}
It follows from (\ref{eq:mcI}) that $\A^\Gamma \cap \B_\X \subseteq \C^\Gamma_\X$.
Also, since $|N_{G,x}| = |E_G| \le k$ and $|\mc I| \le 2^k$,
\begin{align}\label{eq:1overk}
  \big|\C^\Gamma_\X\big|{}^{1/k} 
  \le \Big|\bigcup_{I \in \mc I} \Live(f_I\lceil\theta^\Gamma_\X)\Big|
  \le 2^k\cdot \ds\max_{I \in \mc I} \big|\Live(f_I\lceil\theta^\Gamma_\X)\big|.
\end{align}

We now consider independent random $\Gamma \in \{0,1\}^N_{1/n}$ and random $\X \subseteq_q N$ where $q = (1/n)^{1+(\eps/2k)}$. Note that $\theta^\Gamma_\X$ has distribution $\mc R(1/n,q)$. Also, note that $\A^\Gamma$ and $\B_\X$ are independent, as $\A^\Gamma$ depends only on $\Gamma$ and $\B_\X$ depends only on $\X$. 

We may assume that $k \le \log^{1/3}n$, since otherwise the lemma is trivial. In particular, $2^k = o(n^{\eps/2})$ (recall that $\eps = 1/\log k$) and hence $\exp(-\Omega(n^{\eps/2}/2^k)) = o(1)$. We have
\begin{align*}
  \Pr_{\Gamma}
  \big[\,
    \A^\Gamma \text{ is $G$-critical}
  \,\big]
  &\le 
  \Pr_{\Gamma}
  \Big[\,
    \Pr_\X\ds
    \Big[\,
      |\A^\Gamma \cap \B_\X| \le \frac{n^{\eps/2}}{2}
    \,\Big]
  \le \exp\Big({-}\Omega\Big(
    \frac{n^{\eps/2}}{2^k}
    \Big)\Big)
  \,\Big]
  &&\text{(Lemma \ref{la:critical})}\\
  &= 
  \Pr_{\Gamma}
  \Big[\,
    \Pr_\X\ds
    \Big[\,
      |\A^\Gamma \cap \B_\X| > \frac{n^{\eps/2}}{2}
    \,\Big]
  \ge 1-o(1)
  \,\Big]
  \\
  &\le
  (1+o(1))
  \Pr_{\Gamma,\X}\ds
    \Big[\,
      |\A^\Gamma \cap \B_\X| > \frac{n^{\eps/2}}{2}
    \,\Big]
  &&\text{(Markov ineq.)}\\
  &\le
  (1+o(1))
  \Pr_{\Gamma,\X}\ds
    \Big[\,
      |\C^\Gamma_\X| > \frac{n^{\eps/2}}{2}
    \,\Big]
  &&\text{($\A^\Gamma \cap \B_\X \subseteq \C^\Gamma_\X$)}\\
  &\le
  (1+o(1))
  \Pr_{\Gamma,\X}\ds
    \Big[\,
      \max_{I \in \mc I} \big|\Live(f_I\lceil\theta^\Gamma_\X)\big| 
      > \frac{n^{\eps/2k}}{2^{k+1}}
    \,\Big]
  &&\text{(by (\ref{eq:1overk}))}\\
  &\le
  (1+o(1)) 
  \sum_{I \in \mc I}
  \Pr_{\Gamma,\X}\ds
    \Big[\,
      \big|\Live(f_I\lceil\theta^\Gamma_\X)\big| 
      > \frac{n^{\eps/2k}}{2^{k+1}}
    \,\Big].
\end{align*}

Using $|\mc I| \le 2^k$ and the fact that $\theta^\Gamma_\X$ has distribution $\mc R(1/n,q)$, it suffices to show that for every $I \in \mc I$ and sufficiently large $n$,
\begin{align}\label{eq:using-from}
  \Pr_{\theta \in \mc R(1/n,q)}\ds
  \Big[\,
      \big|\Live(f_I\lceil\theta)\big| 
      > \frac{n^{\eps/2k}}{2^{k+1}}
  \,\Big]
  \le
  n^{-11k}.
\end{align}
In order to apply Lemma \ref{la:from-switching}, let
\[
  p = n^{-1},\quad\
  s = n^k,\quad\
  d = \frac{\log n}{k^3\log\log n},\quad\
  \tau = n^{-11k},\quad\
  r = \log\Big(\frac{n^{\eps/2k}}{2^{k+1}}\Big).
\]
Recall that $q = (1/n)^{1+(\eps/2k)}$ and $k \le \log^{1/3} n$. We have $\log(p/q) = (\eps/2k)\log n$ and $\log(5r) = O(\log\log n)$ and
$r^{-1}\log(s/\tau)$
  $=$ 
  $((\eps/2k)\log n - k - 1)^{-1} 12k\log n$
  $=$ $O(k^2/\eps)$.
Since $\eps = 1/\log k$, we have
\[
  \frac{\log(p/q)}{r^{-1}\log(s/\tau)+\log(5r)}
  \ge
  \Omega\Big(
    \frac{\log n}{(k^3/\eps^2) + (k/\eps)\log\log n}
  \Big)
  \ge \omega(d).
\]
Note that $f_I$ has circuits of the same size ($\le s$) and depth ($\le d$) as $f$, since the operation $\Gamma \mapsto \Gamma \oplus I$ simply exchanges the positive and negative literals for variables corresponding to coordinates of $I$ with value $1$. Therefore, for sufficiently large $n$, Lemma \ref{la:from-switching} implies (\ref{eq:using-from}). This completes the proof.
\end{proof}

Finally, we derive Lemma \ref{la:main-rest} from  Lemma \ref{la:pre-main-rest}.

\begin{proof}[Proof of Lemma \ref{la:main-rest}]
Suppose $f : \{0,1\}^N \to \{0,1\}$ is computed by circuits of size $\le n^k$ and depth $\le \log n/k^3\log\log n$. Fix a pattern graph $G$. We must show
\[
  \Pr_{\Gamma \in \{0,1\}^N_{1/n}}
  [\,
    \A^\Gamma_{f,G} \text{ is not $G$-small} 
  \,] 
  \le O(n^{-2k}).
\]

Suppose $\Gamma \in \{0,1\}^N$ is any layered graph such that $\A^\Gamma_{f,G}$ is not $G$-small. We claim that there exist $S \subseteq V_G$ and $z \in [n]^{V_G \setminus S}$ such that $S$ is a nonempty union of components of $G$ and the pathset $B^\Gamma_{S,z}$ is $G|_S$-critical where $G|_S$ is the induced subgraph of $G$ on $S$ and
\[
  \B^\Gamma_{S,z} \defeq \big\{y \in [n]^S : 
  N_{G|_S,y} \subseteq \Live(f{\lceil}\rho^\Gamma_{G,yz})\big\}.
\]
We first note that it suffices to prove this claim. Since there are $2^{\c{G}}-1$ ($\le 2^k$) choices for $S$ and $\le n^k$ choices for $z$, assuming the claim we have
\begin{align*}
  \Pr_{\Gamma \in \{0,1\}^N_{1/n}}
  [\,
    \A^\Gamma_{f,G} \text{ is not $G$-small}
  \,] 
  &\le
  \Pr_{\Gamma \in \{0,1\}^N_{1/n}}
  \Big[\, \bigvee_{S,z}
    \B^\Gamma_{S,z} \text{ is $G|_S$-critical}
  \,\Big]
  \\
  &\le
  2^k n^k O(n^{-10k})
  \qquad\text{(by Lemma \ref{la:pre-main-rest})}\\
  &\le O(n^{-2k}).
\end{align*}

To see why the claim holds, assume that $\A^\Gamma_{f,G}$ is not $G$-small and consider the following procedure. Initially set $S \leftarrow V_G$ and $z \leftarrow ()$ (the empty tuple). If $\B^\Gamma_{S,z}$ is $G|_S$-critical, then we are done. Otherwise, since $\B^\Gamma_{S,z}$ is neither $G|_S$-small nor $G|_S$-critical, there is a proper subset $T \subset S$ such that $T$ is a union of $t \ge 1$ components of $V_{G|_S}$ and $\mu_T(\B^\Gamma_{S,z}) > \m^{-t}$. By definition of $\mu_T$, there exists $y \in [n]^{S \setminus T}$ such that $\delta(\RHO{T}{\B^\Gamma_{S,z}}{y}) > \m^{-t}$. Note that $T$ is a union of components of $G$ and $yz \in [n]^{V_G \setminus T}$. Also, for all $u \in [n]^T$, we have $N_{G|_T,u} \subseteq N_{G|_S,uy}$ and hence
\begin{align*}
  u \in \RHO{T}{\B^\Gamma_{S,z}}{y} 
  \,&\Longrightarrow\, uy \in \B^\Gamma_{S,z}\\
  \,&\Longrightarrow\, N_{G|_S,uy} \subseteq \Live(f{\lceil}\rho^\Gamma_{G,uyz})\\
  \,&\Longrightarrow\, N_{G|_T,u} \subseteq \Live(f{\lceil}\rho^\Gamma_{G,uyz})\\
  \,&\Longrightarrow\, u \in \B^\Gamma_{T,yz}.
\end{align*}
Therefore, $\RHO{T}{\B^\Gamma_{S,z}}{y} \subseteq \B^\Gamma_{T,yz}$. It follows that $\delta(\B^\Gamma_{T,yz}) > \m^{-t}$ and, hence, $\B^\Gamma_{T,yz}$ is not $G|_T$-small. We now update $S \leftarrow T$ and $z \leftarrow yz$. Since $\B^\Gamma_{S,z}$ is not $G|_S$-small, we may repeat this process so long as $\B^\Gamma_{S,z}$ is not $G|_S$-critical. Since $S$ shrinks with every step, eventually this process will terminate, at which point $\B^\Gamma_{S,z}$ is $G|_S$-critical (and $S$ is nonempty by definition of $G|_S$-criticality). Thus, the claim holds and the lemma is proved.
\end{proof}

\subsection{Proof of Lemma \ref*{la:critical}}\label{sec:critical}

For the proof of Lemma \ref{la:critical} we use a concentration of measure inequality due to Janson \cite{janson1990poisson}.

\begin{la}[Janson's Inequality \cite{janson1990poisson}]\label{la:janson-tail}
Let $\Omega$ be a finite universal set and let $R$ be a random subset of $\Omega$ given by $\Pr[\,r \in R\,] = p_r$, these events mutually independent over $r \in \Omega$. Let $\{S_i\}_{i \in I}$ be an indexed family of subsets of $\Omega$. Define $\lambda$ and $\Upsilon$ by
\begin{align*} 
  &&&&\lambda &\defeq \sum_{i \in I} \Pr\big[\,S_i \subseteq R\,\big], 
  &\Upsilon &\defeq \sum_{(i,j) \in I^2 \,:\, i \ne j,\: S_i \cap S_j \ne \emptyset} 
  \Pr\big[\, S_i \cup S_j \subseteq R\,\big].&&&&
\end{align*}
Then, for all $0 \le t \le \lambda$,
$\ds
  \Pr\Big[\,
    \#\big\{i \in I \,:\, S_i \subseteq R\big\} \le \lambda - t
  \,\Big]
  \le
  \exp\Big({-}\frac{t^2}{2(\lambda + \Upsilon)}\Big)$.
\end{la}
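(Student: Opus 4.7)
The plan is to prove the inequality by the exponential moment (Chernoff/Bernstein) method, with the usual independence hypothesis replaced by a correlation bound that exploits the fact that the events $\{S_i \subseteq R\}$ are increasing in the indicator vector of $R$.

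Write $X \defeq \#\{i \in I : S_i \subseteq R\} = \sum_{i \in I} Y_i$ with $Y_i \defeq \mathbf{1}[S_i \subseteq R]$, so $\mathbb{E}[X] = \lambda$. For any $u > 0$, Markov's inequality applied to $e^{-uX}$ gives
\[
  \Pr[X \le \lambda - t]
  \;=\;
  \Pr\bigl[e^{-uX} \ge e^{-u(\lambda - t)}\bigr]
  \;\le\;
  e^{u(\lambda - t)}\,\mathbb{E}[e^{-uX}],
\]
so the task is to upper bound $\mathbb{E}[e^{-uX}]$ for a well-chosen $u$. The target inequality is
\[
  \log \mathbb{E}[e^{-uX}] \;\le\; -u\lambda + \tfrac{u^2}{2}\,(\lambda + \Upsilon)
  \qquad\text{for all } 0 \le u \le 1.
\]
Given this, substituting back yields $\Pr[X \le \lambda - t] \le \exp\!\bigl(-ut + \tfrac{u^2}{2}(\lambda + \Upsilon)\bigr)$, which is minimized at $u^\ast = t/(\lambda + \Upsilon)$. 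The assumption $0 \le t \le \lambda$ ensures $u^\ast \le 1$ (since $\lambda \le \lambda + \Upsilon$), and plugging in $u^\ast$ gives exactly the claimed bound $\exp\bigl(-t^2/2(\lambda + \Upsilon)\bigr)$.

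To establish the exponential moment bound, I would expand each factor via $e^{-uY_i} = 1 - \beta Y_i$ with $\beta \defeq 1 - e^{-u} \in [0,u]$, so that
\[
  \mathbb{E}[e^{-uX}] \;=\; \mathbb{E}\Bigl[\prod_{i \in I}(1 - \beta Y_i)\Bigr].
\]
The first-order contribution is $-\beta\sum_i\mathbb{E}[Y_i] = -\beta\lambda$, the dominant term. For pairs $(i,j)$ with $S_i \cap S_j = \emptyset$ the factors are independent and contribute only a product of first-order terms (absorbed into the $-\beta\lambda$ envelope), whereas each overlapping pair with $S_i \cap S_j \neq \emptyset$ generates a second-order cross term $+\beta^2\,\mathbb{E}[Y_iY_j] = \beta^2\Pr[S_i \cup S_j \subseteq R]$, which summed over ordered pairs gives $\beta^2\Upsilon$. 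Combined with the $Y_i^2 = Y_i$ diagonal contributions (at most $\tfrac{\beta^2}{2}\lambda$) and the inequality $\log(1+x) \le x$, this yields the required bound after replacing $\beta$ by its upper bound $u$.

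\textbf{Main obstacle.} The delicate point is that the Harris--FKG inequality actually goes in the \emph{wrong} direction for $\mathbb{E}\bigl[\prod_i(1-\beta Y_i)\bigr]$: each factor $1 - \beta Y_i$ is a \emph{decreasing} function of $R$, so FKG gives a \emph{lower} bound $\mathbb{E}[\prod_i(1-\beta Y_i)] \ge \prod_i(1 - \beta\Pr[A_i])$, which is useless here. The correct argument is therefore more subtle: one decomposes $I$ by the connected components of the dependency graph (with edges $\{i,j\}$ when $S_i \cap S_j \neq \emptyset$), applies FKG \emph{across} independent components to factorize the expectation, and within each cluster uses a Suen-style truncated inclusion--exclusion expansion together with $1 - \beta Y_i \le e^{-\beta Y_i}$ to extract exactly the second-order correction $\tfrac{\beta^2}{2}\Upsilon$. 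Getting the correct constant $\tfrac12$ (rather than $1$ or $2$) in the coefficient of $\Upsilon$ requires careful bookkeeping between ordered and unordered pair contributions; this is the genuinely technical step in the proof.
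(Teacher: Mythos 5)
First, a point of context: the paper does not prove this lemma at all---it quotes Janson's inequality directly from \cite{janson1990poisson}---so your attempt has to be judged against the standard proof in the literature rather than against anything in the paper. Your outer shell is fine: the exponential-moment reduction, the target bound $\log \Ex[e^{-uX}] \le -u\lambda + \tfrac{u^2}{2}(\lambda+\Upsilon)$, and the choice $u^\ast = t/(\lambda+\Upsilon)$ do give exactly the stated tail bound (in fact no restriction $u\le 1$ is ever needed). The genuine gap is that the moment bound itself is only asserted, and the plan you sketch for it would not go through. Expanding $\Ex[\prod_i(1-\beta Y_i)]$ and keeping ``first order plus overlapping pairs'' is not a valid bound in either direction: the expansion has terms of every order with alternating signs and positively correlated $Y_i$'s, and no Bonferroni-type truncation applies. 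The decomposition by connected components of the dependency graph also buys nothing: indices in different components involve disjoint coordinates, so those factors are exactly independent (FKG is not needed there), while in the hard---and, in this paper's application, typical---case the dependency graph is a single component, so the entire difficulty is pushed into the unspecified ``Suen-style truncated inclusion--exclusion,'' which is precisely the step requiring proof and which Suen-type bounds do not deliver with the constant $\tfrac12$ in front of $\Upsilon$.

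The missing idea is that FKG is used \emph{conditionally}, applied to the derivative of the log-Laplace transform rather than to the product itself. Write $X=Y_i+Z_i+W_i$, where $Y_i$ is the indicator of $\{S_i\subseteq R\}$, $Z_i$ sums the $Y_j$ with $j\ne i$, $S_j\cap S_i\ne\emptyset$, and $W_i$ sums the $Y_j$ with $S_j\cap S_i=\emptyset$. Conditioned on $S_i\subseteq R$ the measure is still a product measure, both $e^{-uZ_i}$ and $e^{-uW_i}$ are decreasing in $R$, and Harris/FKG factorizes them; since $W_i$ does not depend on the coordinates in $S_i$ and $W_i\le X$, while $e^{-x}\ge 1-x$ handles the $Z_i$ factor, one gets
\[
  \Ex\big[\,Y_i e^{-uX}\,\big]
  \;\ge\;
  e^{-u}\Big(\Pr\big[\,S_i\subseteq R\,\big]
  \;-\;u\sum_{j\,:\,j\ne i,\ S_j\cap S_i\ne\emptyset}\Pr\big[\,S_i\cup S_j\subseteq R\,\big]\Big)
  \cdot\Ex\big[\,e^{-uX}\,\big].
\]
Summing over $i$ gives $-\frac{d}{du}\log\Ex[e^{-uX}]\ge e^{-u}\lambda-u\Upsilon$, and integrating over $[0,u]$ yields $\log\Ex[e^{-uX}]\le-\lambda(1-e^{-u})+\tfrac{u^2}{2}\Upsilon\le-u\lambda+\tfrac{u^2}{2}(\lambda+\Upsilon)$, after which your optimization step finishes the proof. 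Without this conditional FKG/derivative argument (or an equivalent replacement for it), your sketch does not establish the exponential moment bound, so the proposal as written has a real hole at its central step.
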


\begin{proof}[Proof of Lemma \ref{la:critical}]
Let $G$ be a nonempty pattern graph, let $\A$ be a $G$-critical pathset, and let $q = (1/n)^{1+(\eps/2k)}$. We must show 
\[
  \Pr_{\X \subseteq_q N}
  \Big[\,
    \#\big\{x \in \A : N_{G,x} \subseteq \X\big\}
    \le 
    \frac{n^{\eps/2}}{2}
  \,\Big]
  \le 
  \exp\Big({-}\Omega\Big(
  \frac{n^{\eps/2}}{2^k}
  \Big)\Big).
\]
As in Janson's Inequality, define $\lambda$ and $\Upsilon$ by
\begin{align*} 
  &&\lambda &\defeq \sum_{x \in \A} \Pr\big[\,N_{G,x} \subseteq \X\,\big], 
  &\Upsilon &\defeq \sum_{(x,y) \in \A^2 \,:\, x \ne y,\, N_{G,x} \cap N_{G,y} \ne \emptyset} 
  \Pr\big[\,N_{G,x} \cup N_{G,y} \subseteq \X\,\big].&&
\end{align*}
Taking $t = \lambda/2$ in Lemma \ref{la:janson-tail}, we get
\begin{equation}\label{eq:janson}
  \Pr\Big[\,
    \#\big\{
    x \in \A \,:\, N_{G,x} \subseteq \X
    \big\}
    \le 
    \frac{\lambda}{2}
  \,\Big]
  \le
  \exp\Big({-}\frac{1}{16}
  \min\Big\{\lambda,\,
  \frac{\lambda^2}{\Upsilon}
  \Big\}\Big).
\end{equation}

Recall that $\c{G} = |V_G| - |E_G|$ and $\m = n^{1-\eps}$. By $G$-criticality of $\A$,
\[
  |\A| 
  = n^{|V_G|} \delta(\A) 
  > n^{|V_G|}  \m^{-\c{G}} 
  = n^{|E_G| + \eps \c{G}}.
  \vphantom{\Big|}
\]
Note that $\Pr[\,N_{G,x} \subseteq \X\,] = q^{|E_G|}$ for all $x \in \A$. Since $|E_G| \le k$ and $\c{G} \ge 1$, it follows that
\begin{align}\label{eq:lambda}
  \lambda 
  &= |\A| \cdot  q^{|E_G|}
  > n^{\eps(\c{G} - |E_G|/2k)}
  \ge n^{\eps(\c{G} - (1/2))}
  \ge n^{\eps/2}.
  \vphantom{\Big|}
\end{align}
To complete the proof, it suffices to show that $\ds\frac{\lambda^2}{\Upsilon} \ge \frac{n^{\eps/2}}{2^k}$.

For all $(x,y) \in \A^2$, let
\[
  \Overlap{x,y} \defeq \big\{\edge{v}{w} \in E_G : x_v=y_v \text{ and } x_w=y_w\big\}.
\]
Note that $x = y$ iff $\Overlap{x,y} = E_G$, and $N_{G,x} \cap N_{G,y} \ne \emptyset$ iff $\Overlap{x,y} \ne \emptyset$, and $|N_{G,x} \cup N_{G,y}| = 2|E_G| - |\Overlap{x,y}|$. Next, note that $\Upsilon = \sum_{T \,:\, \emptyset \subset T \subset E_G} \Upsilon_T$ where
\begin{align*}
  \Upsilon_T 
  &\defeq \sum_{(x,y) \in \A^2 \,:\, \Overlap{x,y} = T} 
     \Pr\big[\, N_{G,x} \cup N_{G,y} \subseteq \X \,\big]
  =  \#\big\{(x,y) \in \A^2 : \Overlap{x,y} = T\big\} 
     \cdot q^{2|E_G|-|T|}.
\end{align*}

Now consider any fixed $\emptyset \subset T \subset E_G$. Let $S = \bigcup_{\edge{v}{w} \in T} \{v,w\}$ and let $s$ be the number of components of $G$ which $S$ intersects. Note that $1 \le s \le |S|-|T|$, since $|S|-|T|$ equals the number of components in the induced subgraph $G|_S$. We have
\begin{align*}
  \#\big\{(x,y) \in \A^2 : \Overlap{x,y} = T\big\} 
  &= \sum_{z \in [n]^S} \big|\RHO{V_G \setminus S}{\A}{z}\big|^2
     \\
  &\le
     \Big(\sum_{z \in [n]^S} \big|\RHO{V_G \setminus S}{\A}{z}\big|\Big)
     \Big(\max_{z \in [n]^S} \big|\RHO{V_G \setminus S}{\A}{z}\big|\Big)
     \\
  &= |\A| \cdot  n^{|V_G|-|S|} \cdot \mu_{V_G \setminus S}(\A)\\
  &\le |\A| \cdot  n^{|V_G|-|S|} \cdot \m^{s-\c{G}}
  \qquad\quad\text{(by $G$-criticality of $\A$)}\\
  &= |\A| \cdot  n^{|E_G| - |S| + s + \eps(\c{G}-s)}.
\end{align*}
It follows that
\begin{align}
\notag
  \Upsilon_T 
  &= \#\big\{(x,y) \in \A^2 : \Overlap{x,y} = T\big\}
     \cdot q^{2|E_G|-|T|}
     \\
\notag
  &\le  
     |\A| \cdot n^{|E_G| - |S| + s + \eps(\c{G}-s)} 
     \cdot q^{2|E_G|-|T|}
     \\
\notag
  &\le
     \lambda\cdot n^{|T|-|S|+s+\eps(\c{G}-s)}
  &&\text{(using $\lambda = |\A|\cdot q^{|E_G|}$ and $q \le n^{-1}$)}
     \\
\notag
  &\le \lambda\cdot n^{\eps(\c{G}-1)}
  &&\text{(using $1 \le s \le |S|-|T|$).}
\intertext{We now have}
\label{eq:lambda-Upsilon}
  \frac{\lambda^2}{\Upsilon}
  &= \frac{\lambda^2}{\sum_{T \,:\, \emptyset \subset T \subset E_G} \Upsilon_T}
  \ge \frac{\lambda \cdot n^{-\eps(\c{G}-1)}}{2^k}
  &&\text{(by the above)}\\
  \notag
  &\phantom{= \smash{\frac{\lambda^2}{\sum_{T \,:\, \emptyset \subset T \subset E_G} \Upsilon_T}}\hspace{3pt}}
  \ge \frac{n^{\eps/2}}{2^k}
  &&\text{(since $\lambda \ge n^{\eps(\c{G}-(1/2))}$ by (\ref{eq:lambda})).}
\end{align}
Plugging (\ref{eq:lambda}) and (\ref{eq:lambda-Upsilon}) into (\ref{eq:janson}) completes the proof.
\end{proof}

\subsection{Proof of Lemma \ref*{la:from-switching}}\label{sec:from-switching}

For a boolean function $f$, we write $\DTD(f)$ for the {\em decision-tree depth} of $f$ (i.e.\ the minimum depth of a decision tree computing $f$). Note that $|\Live(f)| \le 2^{\DTD(f)}$.

The following lemma is a special case of the original Switching Lemma of H{\aa}stad \cite{hastad1987computational}. (For simplicity, we consider depth-$r$ decision trees as opposed to $r$-DNFs and $s$-CNFs.)

\begin{la}[Switching Lemma \cite{hastad1987computational}]\label{la:switching-lemma}
Suppose $f$ is a boolean function which is an AND or OR of (arbitrary many) depth-$r$ decision trees. Then for all $q \in [0,1/2]$, 
$$\ds
	\Pr_{\theta \in \mc R(1/2,q)} [\, \DTD(f{\lceil}\theta) > r\,] \le (5qr)^r.
$$
\end{la}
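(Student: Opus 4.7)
The plan is to prove this via the classical Razborov-style encoding argument: I will injectively map each \emph{bad} restriction $\theta$ (one with $\DTD(f\lceil\theta) > r$) to a pair $(\theta',\sigma)$, where $\theta'$ is a stricter restriction extending $\theta$ by fixing $r+1$ additional variables and $\sigma$ is a short auxiliary code. Since each of the $r+1$ additional fixings pays a probability ratio of $q/((1{-}q)/2) = 2q/(1{-}q) \le 4q$ under $\mc R(1/2,q)$ (using $q \le 1/2$), summing over the code will yield $\Pr_\theta[\text{bad}] \le (4q)^{r+1} \cdot |\text{code space}|$, and careful bookkeeping of the code-space size will tighten this to $(5qr)^r$.

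Assume without loss of generality that $f = T_1 \vee \cdots \vee T_m$ is an OR of depth-$r$ decision trees (the AND case is dual). Fix a canonical total order on variables and on the $1$-leaves of each $T_j$. For each bad $\theta$, I will build the canonical decision tree of $f\lceil\theta$ greedily: at each internal node, let $\pi$ be the current partial assignment, take the smallest index $j$ with $T_j\lceil(\theta \cup \pi)$ not identically $0$, and query the first unassigned variable along the canonical leftmost $1$-leaf path of $T_j\lceil(\theta \cup \pi)$. Since $\DTD(f\lceil\theta) > r$, walking down the leftmost branch for $r+1$ steps yields queried variables $x_{i_1},\ldots,x_{i_{r+1}}$ with answers $b_1,\ldots,b_{r+1}$. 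These decompose into consecutive blocks $B_1,\ldots,B_\ell$, each corresponding to a maximal initial segment of some $T_{j_k}$'s canonical $1$-path that is traversed coherently before the branch ``deviates'' and a new $T_{j_{k+1}}$ takes over.

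For the encoding, $\theta'$ extends $\theta$ by setting each $x_{i_s}$ to the value $b_s^\ast$ it would take on the relevant $T_{j_k}$'s canonical $1$-path (rather than the actual answer $b_s$). The code $\sigma$ records, for each $s$, (i) the position of $x_{i_s}$ in the path of its block (at most $r$ choices) and (ii) a single agreement bit indicating whether $b_s = b_s^\ast$. Injectivity of $\theta \mapsto (\theta',\sigma)$ is the key subtle step: given $(\theta',\sigma)$, the smallest $j$ for which $T_j\lceil\theta'$ evaluates to $1$ must be $j_1$, because $\theta'$ by construction completes $T_{j_1}$'s canonical $1$-path; the position data in $\sigma$ then identifies which variables of that path belong to $B_1$, the agreement bits recover the true answers $b_s$ on $B_1$, and iterating on the next uncovered block reconstructs all of $\theta$.

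The main obstacle will be the careful constant tracking needed to reach exactly $(5qr)^r$ rather than a looser $(O(qr))^r$: one has to tightly bound the code space by exploiting that positions within a single block are monotonically increasing along the canonical path (so the ``which positions'' data for a block of size $b$ contributes only $\binom{r}{b}$ choices rather than $r^b$), and by absorbing the ``extra'' $(r{+}1)$-st variable into the bound because the final block need not be completed. The underlying conceptual point is simple: every way for $f\lceil\theta$ to be complicated forces $\theta$ to keep a specific set of $r{+}1$ variables unrestricted, whose joint $\ast$-probability decays geometrically in $q$, overwhelming the polynomial-in-$r$ blow-up from the auxiliary code.
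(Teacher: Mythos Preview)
The paper does not give its own proof of this statement: it is quoted as a special case of H\aa stad's Switching Lemma and used as a black box (the paper's own work in this subsection goes into Lemma~\ref{la:pre-from-switching}, which iterates the Switching Lemma through the levels of a circuit). So there is no in-paper argument to compare against.

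Your outline is the standard Razborov encoding proof and is correct in substance. One point to tighten: with the variable-at-a-time canonical tree you describe, a block $B_1$ may terminate (because the actual answer killed $T_{j_1}$) before all of $T_{j_1}$'s unassigned variables have been queried, so the assertion that ``$\theta'$ by construction completes $T_{j_1}$'s canonical $1$-path'' and hence $T_{j_1}{\lceil}\theta' = 1$ is not literally justified---a later block could set one of $T_{j_1}$'s remaining variables to a conflicting value. The clean fix (as in Beame's presentation) is to define the canonical tree so that, once $T_{j_1}$ is selected, it queries \emph{all} of $T_{j_1}$'s unassigned variables before re-selecting; then each block genuinely completes its term, and the decoding step ``the first term satisfied under $\theta'$ is $T_{j_1}$'' goes through verbatim. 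With that adjustment your sketch, including the constant-tracking you defer, is the standard argument.
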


Lemma \ref{la:from-switching} follows directly from the following lemma via the fact that $|\Live(f)| \le 2^{\DTD(f)}$. (This lemma originally appeared in the author's Ph.D.\ thesis \cite{rossman2010average}.)

\begin{la}\label{la:pre-from-switching}
Suppose $f$ is a boolean function computed by a circuit of size $s$ and depth $d$. 
For all $0 < q \le p \le 1/2$ and $\tau,r > 0$, 
\[
  d \le 
  \frac{\log(p/q)}{r^{-1}\log(s/\tau)+\log(5r)}
  \quad\Longrightarrow\quad
  \Pr_{\theta \in \mc R(p,q)} 
  \big[\,
    \DTD(f{\lceil}\theta) > r
  \,\big] 
  \le \tau.
\]
\end{la}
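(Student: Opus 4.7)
The plan is a bottom-up, layer-by-layer application of the switching lemma (Lemma \ref{la:switching-lemma}), peeling off one layer of the depth-$d$ circuit at each step. The random restriction $\theta \in \mc R(p,q)$ is decomposed as a composition $\theta = \theta_d \circ \cdots \circ \theta_1$ of $d$ successive restrictions, whose parameters are chosen so that (i) the composition has the correct marginal distribution $\mc R(p,q)$, and (ii) each intermediate restriction has star-probability $q_{*}$ satisfying $\log(1/q_{*}) \ge r^{-1}\log(s/\tau) + \log(5r)$, which is exactly the threshold that lets a single switching-lemma application succeed with probability at least $1 - \tau/s$.

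The switching-lemma invariant is maintained inductively: after applying $\theta_1, \ldots, \theta_\ell$, every gate at layer $\ell$ (counting from the leaves) has been reduced to a depth-$r$ decision tree. Literals at the leaves are trivially depth-$1$ decision trees. For the inductive step, each gate at layer $\ell+1$ is an AND or OR of the depth-$r$ decision trees just below it, hence equivalent to an $r$-CNF or $r$-DNF of literals on the unrestricted variables; the switching lemma then converts it, except with probability at most $(5 q_{*} r)^r$, into a depth-$r$ decision tree of the opposite type, which can be absorbed into the gate at layer $\ell+1$. A union bound over the $\le s$ total gates of the circuit gives total failure probability $\le s (5 q_{*} r)^r$, and by the calibrated choice of $q_{*}$ this is at most $\tau$. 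Conditional on no failure step, the output gate itself becomes a depth-$r$ decision tree, giving $\DTD(f \lceil \theta) \le r$.

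The main obstacle is the mismatch between Lemma \ref{la:switching-lemma}, which is stated only for $\mc R(1/2, q)$, and the $p$-biased restriction $\mc R(p,q)$ with general $p \le 1/2$. This mismatch is precisely why the depth bound involves $\log(p/q)$ rather than $\log(1/q)$: the effective per-step star-probability must be $q_{*} = (q/p)^{1/d}$, absorbing a factor of $1/p$ that accounts for adjusting the non-star bias from the uniform distribution assumed by the switching lemma to the $p$-biased distribution arising in $\mc R(p,q)$. The reconciliation is standard, either via the well-known $p$-biased variant of the switching lemma (whose failure bound is asymptotically the same up to constants) or via an explicit coupling between $\mc R(p,q)$ and $\mc R(1/2, q/p)$ that absorbs the bias adjustment into a single preliminary restriction step. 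Once this compatibility is established, the layer-by-layer switching and the union bound proceed cleanly, and rearranging $d \cdot \log(1/q_{*}) \ge \log(p/q)$ recovers the hypothesis on $d$ stated in the lemma.
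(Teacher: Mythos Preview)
Your proposal is correct and takes essentially the same approach as the paper: decompose $\theta \in \mc R(p,q)$ as a preliminary bias-adjusting restriction $\theta_0 \in \mc R(p_0,p)$ followed by $d$ independent restrictions $\theta_i \in \mc R(1/2,(q/p)^{1/d})$, apply the Switching Lemma at each layer, and union-bound over the $\le s$ gates. The paper carries out exactly the ``explicit coupling'' option you mention (with $p_0 = \frac{1}{1-p}\big(\frac{p+q}{2}-pq\big)$ chosen so that the composition has marginal $\mc R(p,q)$), and the arithmetic $(5r(q/p)^{1/d})^r \le \tau/s$ matches your calibration of $q_*$.
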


This lemma originally appeared in the author's Ph.D.\ thesis \cite{rossman2010average}. The proof is included here for completeness.

\begin{proof}
We generate a sequence $\theta_0,\dots,\theta_d$ of random restrictions as follows:
\begin{itemize}\setlength{\itemsep}{0pt}
  \item
    let $\theta_0 \in \mc R(p_0,p)$ where 
    $p_0 = \ds
    \frac{1}{1-p}
    \Big(\frac{p+q}{2} - pq\Big)$
    (note that $0 < p_0 \le 1$),
  \item
    for $i \in \{1,\dots,d\}$, let 
    $\theta_i \in \mc R(1/2, (q/p)^{1/d})$ 
    applied to the variables left unrestricted by $\theta_0,\dots,\theta_{i-1}$.
\end{itemize}
For $i \in \{0,\dots,d\}$, let $\Theta_i$ denote the composition of restrictions $\theta_0,\dots,\theta_i$. Note that $\Theta_d$ has distribution $\mc R(p,q)$:
\begin{align*}
   \Pr\big[\,\Theta_d = \ast\,\big]
	&= \Pr\big[\,\theta_0 = \dots = \theta_d = \ast\,\big]
	= p\Big(\Big(\frac{q}{p}\Big)^{1/d}\Big)^d
	= q,\\
  \Pr\big[\,\Theta_d = 1\,\big]
  &= \Pr\big[\,\theta_0 = 1\,\big] + 
  \sum_{i=1}^d\Pr\big[\,\theta_0 = \cdots = \theta_{i-1} = \ast \text{ and } \theta_i = 1\,\big]\\
  &= p_0(1-p) + \sum_{i=1}^d 
  \frac{p}{2}
  \Big(\frac{q}{p}\Big)^{(i-1)/d}
  \Big(1-\Big(\frac{q}{p}\Big)^{1/d}\Big)\\
  &= p_0(1-p) + \frac{p-q}{2}\\ 
  &= (1-q)p.
\end{align*}
Therefore, $\Theta_d$ has distribution $\mc R(p,q)$.

Let $C$ be the circuit of size $s$ and depth $d$ computing $f$. For each input/gate $g \in C$ at height $i$ from the bottom (where inputs have height $0$ and the output gate $g_{\mr{out}}$ has height $d$), let $X_g$ denote the event that $\DTD(g\lceil\Theta_i) \le r$. Let $C_{<g}$ denote the set of $g' \in C$ such that $g'$ lies below $g$.

If $g$ has height $0$, then $\Pr[\,X_g\,] = 1$. If $g$ has height $i \ge 1$, then
\begin{align*}
  \Pr\Big[\,\neg{X_g}\ \Big|\ \bigwedge_{g' \in C_{<g}} X_{g'} \,\Big]
    &= \Pr\Big[\,\DTD((g\lceil\Theta_{i-1}) \lceil \theta_i) > r\
       \Big|\ 
       \bigwedge_{g' \in C_{<g}} \DTD(g'\lceil\Theta_{i-1}) \le r \,\Big]
       \hspace{-2.05in}\\
	&\le \Big(5r \Big(\frac{q}{p}\Big)^{1/d} \Big)^r 
	&&\text{(Lemma \ref{la:switching-lemma})}\\
	&\le \frac{\tau}{s}
	&&\text{($d \le \frac{\log(p/q)}{\log(5r(s/\tau)^{1/r})}$).}
\end{align*}
Completing the proof, we have
\begin{align*}
  \Pr_{\theta \in \mc R(p,q)} \Big[\,\DTD(f\lceil\theta) > r\,\Big]
    &= \Pr\Big[\,\neg X_{g_{\mr{out}}}\,\Big]
    \le \Pr\Big[\,\bigvee_{g \in C} \neg{X_g}\,\Big]
    \le \sum_{g \in C}
      \Pr\Big[\,\neg X_g\ \Big|\ \bigwedge_{g' \in C_{<g}} X_g\,\Big]
	\le \tau.\qedhere
\end{align*}
\end{proof}

\newpage
\section{Patterns}\label{sec:patterns}

At this point in the paper, it only remains to prove Theorem \ref{thm:pathset-lb}, our lower bound on pathset complexity $\cost$. As the first step in the proof, we introduce the notion of {\em patterns} and {\em pathset complexity w.r.t.\ patterns}, denoted $\pcost$. Intuitively, a pattern is a blueprint for constructing a pattern graph via pairwise unions starting from individual edges. This leads to a more constrained notion of pathset complexity where the allowable joins are prescribed by a given pattern. 

Fixing the pattern of allowable joins can only increase the cost of constructing a pathset, hence $\cost \le \pcost$. Counterintuitively, the lower bound on $\cost$ is derived from a lower bound on $\pcost$. This lower bound on $\pcost$ is the true combinatorial lower bound in this paper. (Unfortunately, in the shift from $\cost$ to $\pcost$ we lose a factor of $2^{O(2^k)}$, which is the reason that Theorem \ref{thm:pathset-lb} only holds up to $k(n) \le \log\log n$.)

In this section, we present the definition of $\pcost$ and state our lower bound for $\pcost$ (Theorem \ref{thm:pattern-lb}). The reduction from Theorem \ref{thm:pattern-lb} to Theorem \ref{thm:pathset-lb} is given in \S\ref{sec:relationship}. In \S\ref{sec:proj-rest} we prove some preliminary lemmas (on properties of $\pcost$ with respect to projection and restriction). Finally we prove Theorem \ref{thm:pattern-lb} in \S\ref{sec:lower-bound}.

\begin{df}[Patterns]
A {\em pattern} is a (rooted, unordered) binary tree whose leaves are labeled by edges of $P_k$ (i.e.\ elements of $E_k = \{\edge{v_i}{v_{i+1}} : 0 \le i < k\}$). Every pattern $A$ is associated with a pattern graph denoted $G_A = (V_A,E_A)$ where $E_A$ is the set of edges of $P_k$ which label leaves in $A$. 

The {\em empty pattern} (of size $0$) is denoted $\emptyset$. Patterns of size $1$ (corresponding to elements of $E_k$) are said to be {\em atomic}. Patterns of size $\ge 2$ are {\em non-atomic}. Throughout, $A$ and $B$ represent non-empty patterns. Let $\AB$ ($=\BA$) denote the pattern with children $A$ and $B$. Note that every non-atomic pattern has the form $\AB$ for some $A$ and $B$; also, $G_{\AB} = G_A \cup G_B$.

For a pattern $A$, {\em sub-patterns} of $A$ are sub-trees of $A$ consisting a node in $A$ and all nodes below that node with the inherited labeling of leaves. The sub-pattern and strict sub-pattern relations are denoted by $\preceq$ and $\prec$ respectively.

To simplify notation, for a pattern $A$ we write $\P_A$ for $\P_{G_A}$ and $\proj_A$ for $\proj_{V_A}$ and $\l{A}$ for $\l{G_A}$, etc.\ We consistently write $\A,\B,\C$ for pathsets with underlying patterns $A,B,C$ respectively.
\end{df}

\begin{df}[Pathset Complexity w.r.t.\ Patterns]\label{df:pc2}
For every pattern $A$ and pathset $\A \in \P_A$, the {\em pathset complexity} of $\A$ with respect to $A$, denoted $\pcost_A(\A)$, is defined by the following induction:
\begin{enumerate}[(i)]
  \item
    $\pcost_\emptyset(\{()\}) \defeq 0$, that is, the pathset complexity of $\{()\}$ w.r.t.\ the empty pattern $\emptyset$ is $0$.
  \item
    If $A$ is atomic and $|\A| = 1$, then $\pcost_A(\A) \defeq 1$.
  \item
    For non-atomic $A = \{B,C\}$, 
    \[
      \pcost_A(\A) \defeq \min_{(\B_i,\C_i)_i} 
      \sum_i \pcost_B(\B_i) + \pcost_C(\C_i)
    \]
    where $(\B_i,\C_i)_i$ ranges over sequences such that $\B_i \in \smallP_B$, $\C_i \in \smallP_C$ and $\A \subseteq \bigcup_i \B_i \bowtie \C_i$.
\end{enumerate}
\end{df}

In Appendices \ref{sec:key-examples}--\ref{sec:rectangular} we present some key examples of patterns and prove upper and lower bounds for $\pcost$ with respect to some special classes of patterns. The material in these appendices is not directly needed for our main results. However, these appendices serve as a warm-up and motivation for the lower bound that follows.

The following inequalities (analogous to the inequalities following Definition \ref{df:pc1} of $\cost$) are essentially built into Definition \ref{df:pc2} of $\pcost$:
\begin{align}
\tag{base case}
\vphantom{\big|}
  \pcost_\emptyset(\{()\}) \le 0
  \ \hspace{1pt}&\hspace{-1pt}\text{and}\ 
  \pcost_A(\A) \le 1
  &&\hspace{-20pt}\text{if $A$ is atomic and }|\A|=1,\\
\tag{monotonicity}
\vphantom{\Big|}
  \pcost_A(\A') &\le \pcost_A(\A)
  &&\hspace{-20pt}\text{if }\A' \subseteq \A,\\
\tag{sub-additivity}
\vphantom{\big|}
  \pcost_A(\A_1 \cup \A_2) &\le \pcost_A(\A_1) + \pcost_A(\A_2)
  &&\hspace{-20pt}\text{for all }\A_1,\A_2,\\
\tag{join rule}
\vphantom{\Big|}
  \pcost_{\AB}(\A \bowtie \B) &\le \pcost_A(\A) + \pcost_B(\B)
  &&\hspace{-20pt}\text{if }\A \in \smallP_A,\, \B \in \smallP_B.
\end{align}

The essential difference between $\cost$ and $\pcost$ is that $\cost$ allows arbitrary joins, while $\pcost$ only allows joins as prescribed by the given pattern. Viewed as a minimum construction cost (see Remark \ref{rmk:construction}), this means that $\pcost$ has more highly constrained rules of construction compared with $\cost$. Consequently, $\cost_{G_A}(\A) \le \pcost_A(\A)$ for every pattern $A$ and $\A \in \P_G$. Note that this inequality goes in the wrong direction for the purpose of proving a lower bound on $\cost$. In \S\ref{sec:relationship} we give a different inequality between $\cost$ and $\pcost$ in the right direction. 

\begin{thm}[Lower Bound for $\pcost$]\label{thm:pattern-lb}
For every pattern $A$ and pathset $\A \in \P_A$, 
$$
  \pcost_A(\A) \ge \m^{(1/6)\log(\l{A})+\c{A}} \cdot \delta(\A).
$$
\end{thm}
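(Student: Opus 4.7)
The plan is to prove Theorem~\ref{thm:pattern-lb} using the dual characterization of $\pcost$ (Remark~\ref{rmk:dual1}): since $\pcost$ is the pointwise maximal function satisfying the four inequalities (base case), (monotonicity), (sub-additivity), and (join rule), any function $\Phi_A(\A)$ that also satisfies these four inequalities automatically gives $\pcost_A(\A) \ge \Phi_A(\A)$. I would take
\begin{equation*}
\Phi_A(\A) \defeq \m^{(1/6)\log(\l{A})+\c{A}} \cdot \delta(\A)
\end{equation*}
(with the convention $\m^{-\infty} = 0$ when $A = \emptyset$) and verify that $\Phi$ satisfies all four.

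The first three inequalities are immediate. The base cases hold since $\Phi_\emptyset(\{()\}) = 0$ and, for atomic $A$ with $|\A|=1$, $\Phi_A(\A) = \m/n^2 = n^{-1-\eps} \le 1$. Monotonicity and sub-additivity for $\Phi$ are inherited from the corresponding properties of $\delta$. The core of the proof is therefore the join rule,
\begin{equation*}
\m^{(1/6)\log\l{\{A,B\}}+\c{\{A,B\}}}\, \delta(\A \bowtie \B) \le \m^{(1/6)\log\l{A}+\c{A}}\, \delta(\A) + \m^{(1/6)\log\l{B}+\c{B}}\, \delta(\B),
\end{equation*}
to be verified for $\A \in \smallP_A$ and $\B \in \smallP_B$.

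I would prove the join rule by case analysis on how $G_A$ and $G_B$ fit together. In the disjoint case ($V_A \cap V_B = \emptyset$), we have $\l{\{A,B\}} = \max(\l{A},\l{B})$, $\c{\{A,B\}} = \c{A}+\c{B}$, and $\delta(\A \bowtie \B) = \delta(\A)\delta(\B)$; assuming WLOG $\l{A} \ge \l{B}$, smallness of $\B$ gives $\m^{\c{B}}\delta(\B) \le 1$, so the left side collapses to at most $\delta(\A)$, which is dominated by the first right-hand term. In the bridging cases, I would apply the join-density inequality Lemma~\ref{la:basic-ineqs}(c) with $S = V_A$ and $T = V_B$, yielding
\begin{equation*}
\delta(\A \bowtie \B) \le \delta(\A) \cdot \mu_{V_B \setminus V_A}(\B).
\end{equation*}
The subtlety is that $V_B \setminus V_A$ is in general not a union of components of $G_B$, so raw smallness of $\B$ does not directly bound $\mu_{V_B \setminus V_A}(\B)$; I expect the projection/restriction machinery promised in Section~\ref{sec:proj-rest} to turn this into a bound of the form $\m^{-\text{integer}} \cdot n^{O(\eps)}$, capturing how many ``partial components'' of $G_B$ lie in $V_B \setminus V_A$.

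The hardest configurations to balance will be those where many components of $G_A$ and $G_B$ merge simultaneously and the longest path in $G_{\{A,B\}}$ grows substantially; there $\c{\{A,B\}}$ is much smaller than $\c{A}+\c{B}$ while $\log\l{\{A,B\}}$ grows by up to $\log(\c{A}+\c{B})$, so the decrease in the first exponent has to be paid for by smallness savings on $\A$ and $\B$. The precise constant $1/6$ should emerge from optimizing the density estimate of Lemma~\ref{la:basic-ineqs}(c) against an AM-GM combination of the two right-hand terms in the worst such configuration, together with the $n^{O(\eps)}$ loss from the projection/restriction step. Getting this accounting tight enough — and uniform across the various bridging patterns — is where I expect most of the technical work to lie.
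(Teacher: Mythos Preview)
Your plan has a genuine gap: the function $\Phi_A(\A) = \m^{(1/6)\log(\l{A})+\c{A}}\,\delta(\A)$ does \emph{not} satisfy the join rule. The paper says so explicitly in Remark~\ref{rmk:dual2}, and here is a concrete counterexample. Take $k$ a power of $2$, $j=k/2$, $A=A_j$ (path $v_0\cdots v_j$), $B=A_j^{\shift j}$ (path $v_j\cdots v_k$), so $C=\{A,B\}=A_k$. Let $\A=\{x:x_{v_j}=1\}\in\smallP_A$ and $\B=\{y:y_{v_j}=1\}\in\smallP_B$; both are small since each graph is connected and $\delta(\A)=\delta(\B)=1/n\le\m^{-1}$. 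Then $\A\bowtie\B=\{z:z_{v_j}=1\}$ has $\delta(\A\bowtie\B)=1/n$ as well. The join rule would demand $\m^{(1/6)\log k+1}\cdot\tfrac1n \le 2\,\m^{(1/6)\log j+1}\cdot\tfrac1n$, i.e.\ $\m^{1/6}\le 2$, which fails for large $n$. The failure is exactly the configuration you flagged as ``hardest'': a join over a single shared vertex where the pathset concentrates all its mass on that vertex, so $\delta(\A\bowtie\B)$ does not drop below $\delta(\A)$.

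The paper's proof avoids the direct method. It introduces an auxiliary potential $\PHI{A}$ (depending only on the pattern) defined as the minimal function satisfying two families of inequalities $(\dag)$ and $(\ddag)$ over sub-patterns. Lemma~\ref{la:PHI} proves $\pcost_A(\A)\ge\m^{\PHI{A}}\delta(\A)$ by an induction on patterns that is \emph{not} an instance of the dual characterization: the key inequality shown is $\m^{\PHI{C}}\delta(\A\bowtie\B)\le\pcost_A(\A)+\pcost_B(\B)$, with $\pcost$ (not $\m^{\PHI{}}\delta$) on the right. This works because $\pcost_A(\A)\ge\pcost_{A'}(\proj_{A'}(\A))\ge\m^{\PHI{A'}}\pi_{A'}(\A)$ for any sub-pattern $A'\preceq A$ (Lemmas~\ref{la:proj-lemma} and~\ref{la:rest-lemma} plus the induction hypothesis), and the freedom to choose $A'$ and $B'$ is precisely what the $(\dag)$/$(\ddag)$ inequalities encode. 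In the counterexample above, for instance, one would pass to an atomic sub-pattern $A'$ avoiding $v_j$, where $\pi_{A'}(\A)=1$, recovering a much larger lower bound on $\pcost_A(\A)$ than $\m^{\PHI{A}}\delta(\A)$ provides. A separate combinatorial argument (Lemma~\ref{la:PHI-lb}, via Lemma~\ref{la:A-S}) then shows $\PHI{A}\ge\tfrac16\log(\l{A})+\c{A}$; the constant $\tfrac16$ emerges there, not from the join-density accounting you sketched.
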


The game plan for the rest of the paper is as follows: in \S\ref{sec:relationship} we derive our lower bound for $\cost$ (Theorem \ref{thm:pathset-lb}) from Thereom \ref{thm:pattern-lb}. In \S\ref{sec:proj-rest} we establish some important properties of $\pcost$. Finally, in \S\ref{sec:lower-bound} we give the proof of Theorem \ref{thm:pattern-lb}.

\begin{rmk}[Dual Characterization of $\pcost$]\label{rmk:dual2}
Similar to the dual characterization of $\cost$ mentioned in Remark \ref{rmk:dual1}, $\pcost$ has a {\em dual characterization} as the unique pointwise maximal function from $\{(A,\A) : A$ is a pattern and $\A \in \P_A\}$ to $\R$ which satisfies inequalities (base case), (monotonicity), (sub-additivity) and (join rule). This fact is established by a straightforward induction on patterns (omitted here since we don't actually use this dual characterization in our lower bound).

This dual characterization suggests an obvious ``direct method'' for proving a lower bound on $\pcost$: find an explicit function from pairs $(A,\A)$ to $\R$ and show that this function satisfies inequalities (base case), (monotonicity), (sub-additivity) and (join rule). This is analogous to the ``direct method'' of proving a formula size lower bound via a {\em complexity measure}, defined as a function $M$ from $\{$boolean functions on $n$ variables$\}$ to $\R$ satisfying inequalities $M(f \wedge g) \le M(f) + M(g)$ and $M(f \vee g) \le M(f) + M(g)$ in addition to base case inequalities $M(f) \le 0$ if $f$ is constant and $M(f) \le 1$ if $f$ is a coordinate function.

Using the direct method, we were only able to prove lower bounds on $\pcost$ for a few restricted classes patterns (see Appendix \ref{sec:easier}). For general patterns, we could not prove a lower bound along the lines of Theorem \ref{thm:pattern-lb} using the direct method. We still do not know of any explicit function which satisfies (base case), (monotonicity), (sub-additivity) and (join rule) and maps $(A,[n]^{P_k})$ to $n^{\Omega(\log k)}$ for all patterns $A$ with graph $P_k$. A priori, it is not even clear whether any such nice explicit function exists.\footnote{A natural approach is to consider functions of the form $n^{c_A} \cdot \nu(\A)$ where $c_A$ is a constant depending only on $A$ and $\nu : \P_A \to \R$ is a monotone sub-additive function, such as $\delta$ or $\mu_S$ or $\pi_S$ or any norm on $\R^{[n]^{V_A}}$ (viewing $\P_A \cong \smash{\{0,1\}^{[n]^{V_A}}}$ as a subset of $\R^{[n]^{V_A}}$). For such functions, one only needs to show (join rule); (base case) can be handled by appropriate scaling.}

The proof of Theorem \ref{thm:pattern-lb} which we present in \S\ref{sec:lower-bound} does not proceed via the direct method. In particular, neither the function $\m^{(1/6)\log(\l{A})+\c{A}} {\cdot} \delta(\A)$ nor $\m^{\PHI{A}} {\cdot} \delta(\A)$ (defined in \S\ref{sec:PHI}) satisfies inequality (join rule). Rather, our proof involves a more subtle induction on patterns. 
\end{rmk}

\section{From $\cost$ to $\pcost$}\label{sec:relationship}

In this section, we prove:
\begin{redu}\label{redu:2}
\normalfont
Theorem \ref{thm:pattern-lb} (lower bound on $\pcost$)
$\Longrightarrow$
Theorem \ref{thm:pathset-lb} (lower bound on $\cost$).
\end{redu}
The following definition of {\em strict pattern} is only needed in this section. Rather than $A,B,C$, we write $\alpha,\beta,\gamma$ for this special class of patterns.

\begin{df}
A pattern $\alpha$ is {\em strict} if $G_{\alpha''} \subset G_{\alpha'}$ for all $\alpha'' \prec \alpha' \preceq \alpha$. For a pattern graph $G$, let $\Strict{G}$ denote the set of strict patterns $\alpha$ with graph $G$.
\end{df}

It is important that the number of strict patterns with a given pattern graph is bounded (though doubly exponential in $|E_G|$).

\begin{la}\label{la:number-strict}
For every pattern graph $G$ with $r$ edges, there are only $2^{O(2^r)}$ strict patterns with graph $G$.
\end{la}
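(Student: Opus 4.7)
The plan is to set up a straightforward recursion on the number of edges and solve it. Let $f(G')$ denote the number of strict patterns with graph $G'$, and let $T_r \defeq \max\{f(G') : G' \text{ has at most } r \text{ edges}\}$. The lemma then asks for $T_r \le 2^{O(2^r)}$.

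First I would establish the base and the recursion. If $|E_{G'}| = 1$, there is a unique strict pattern (the atomic pattern labeled by that edge), so $f(G') = 1$. If $|E_{G'}| \ge 2$, any strict pattern with graph $G'$ has the form $\{\beta,\gamma\}$ where $\beta,\gamma$ are themselves strict patterns and, by the strictness condition $G_{\alpha''}\subset G_{\alpha'}$ applied to the root's children, $G_\beta$ and $G_\gamma$ are \emph{proper} subgraphs of $G'$ whose union is $G'$. Bounding unordered pairs by ordered pairs, this gives
\[
  f(G') \;\le\; \sum_{(H_1,H_2)} f(H_1)\,f(H_2),
\]
where $(H_1,H_2)$ ranges over ordered pairs of proper subgraphs of $G'$ with $H_1\cup H_2 = G'$.

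Next I would bound the number of such pairs. Each edge of $G'$ lies in at least one of $H_1,H_2$, giving three possibilities per edge ($H_1$ only, $H_2$ only, or both), hence at most $3^r$ ordered pairs. Since each $H_i$ is a proper subgraph, $|E_{H_i}|\le r-1$, so $f(H_i)\le T_{r-1}$. Combining these bounds yields the clean recurrence
\[
  T_r \;\le\; 3^r \cdot T_{r-1}^{\,2}, \qquad T_1 = 1.
\]

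Finally I would solve the recurrence by taking logarithms. Setting $t_r \defeq \log_2 T_r$, we get $t_r \le r\log_2 3 + 2t_{r-1}$, which unrolls to
\[
  t_r \;\le\; (\log_2 3)\sum_{j=0}^{r-1} 2^j\,(r-j) \;=\; (\log_2 3)\bigl(2^{r+1}-r-2\bigr) \;=\; O(2^r),
\]
so $T_r \le 2^{O(2^r)}$, completing the proof.

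There is no real obstacle here; the argument is elementary bookkeeping. The only subtlety is making sure one uses the \emph{strict} containment at the root to force proper subgraphs (so that we can decrement the edge count in the recursion), and absorbing the factor of $\tfrac12$ for unordered pairs into the already loose $3^r$ bound.
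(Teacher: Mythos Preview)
Your proof is correct and follows essentially the same approach as the paper: set up a recursion by decomposing a strict pattern into its two children (which are strict patterns on proper subgraphs), bound the number of ways to split, and solve the resulting recurrence $T_r \le c_r\, T_{r-1}^2$. The only cosmetic difference is that you count edge-splits as $3^r$ while the paper writes the recursion as $s(r)\le (r\cdot s(r-1))^2$; either multiplicative factor disappears into the $2^{O(2^r)}$ bound.
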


\begin{proof}
Denote by $s(r)$ the number of strict patterns supported on any fixed set of $r$ edges. (Note that $|\Strict{G}|$ depends only on $|E_G|$.) Then we have $s(1) = 1$ and $s(r) \le (r \cdot s(r-1))^2$ for all $r \ge 2$. Therefore,
\[
  s(r) \le  
  r^2(r-1)^4(r-2)^8 \cdots 3^{2^{r-2}} 2^{2^{r-1}} 1^{2^r}
  = 2^{O(2^r)}.\qedhere
\]
\end{proof}

We now give the main lemma needed for Reduction \ref{redu:2}.

\begin{la}\label{la:for-redu2}
For every pattern graph $G$ and pathset $\A \in \P_G$, there is an indexed family $\{\A^{(\alpha)}\}_{\alpha \in \Strict{G}}$ of sub-pathsets $\A^{(\alpha)} \subseteq \A$ such that
\[
  \A = \bigcup_{\alpha \in \Strict{G}} \A^{(\alpha)}
  \quad\text{ and }\quad
  \forall \alpha \in \Strict{G},\ \pcost_\alpha(\A^{(\alpha)}) \le \cost_G(\A).
\]
\end{la}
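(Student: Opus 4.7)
The plan is to induct on $|E_G|$. In the base cases $|E_G| \le 1$, $\Strict{G}$ contains a single pattern $\alpha$ (empty or atomic), I set $\A^{(\alpha)} \defeq \A$, and the matching base-case clauses of $\cost_G$ and $\pcost_\alpha$ immediately give $\pcost_\alpha(\A^{(\alpha)}) = \cost_G(\A)$ and trivially $\A = \A^{(\alpha)}$.

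For the inductive step, suppose $|E_G| \ge 2$. Fix an optimal covering $(H_i, K_i, \B_i, \C_i)_i$ witnessing $\cost_G(\A)$: so $H_i, K_i \subsetneq G$ with $H_i \cup K_i = G$, $\B_i \in \smallP_{H_i}$, $\C_i \in \smallP_{K_i}$, $\A \subseteq \bigcup_i \B_i \bowtie \C_i$, and $\sum_i \cost_{H_i}(\B_i) + \cost_{K_i}(\C_i) = \cost_G(\A)$. Apply the inductive hypothesis to every $\B_i \in \P_{H_i}$ and $\C_i \in \P_{K_i}$ to produce sub-pathset families $\{\B_i^{(\beta)}\}_{\beta \in \Strict{H_i}}$ and $\{\C_i^{(\gamma)}\}_{\gamma \in \Strict{K_i}}$ as in the statement of the lemma.

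Given $\alpha \in \Strict{G}$ with top children $\{\beta, \gamma\}$, strictness of $\alpha$ forces $\beta \in \Strict{G_\beta}$, $\gamma \in \Strict{G_\gamma}$, $G_\beta \cup G_\gamma = G$, and $G_\beta, G_\gamma \subsetneq G$ --- exactly the shape of a cover term. I propose to define
\[
  \A^{(\alpha)} \;\defeq\; \A \,\cap \!\!\!\bigcup_{i \,:\, \{H_i,K_i\}=\{G_\beta,G_\gamma\}} \!\!\!\Psi_i(\alpha),
\]
where $\Psi_i(\alpha) = \B_i^{(\beta)} \bowtie \C_i^{(\gamma)}$ when $(H_i,K_i) = (G_\beta, G_\gamma)$ and $\Psi_i(\alpha) = \C_i^{(\beta)} \bowtie \B_i^{(\gamma)}$ when the roles are swapped. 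To verify the covering $\A = \bigcup_\alpha \A^{(\alpha)}$, for any $x \in \A$ pick $i$ with $x \in \B_i \bowtie \C_i$; the IH supplies $\beta \in \Strict{H_i}$ and $\gamma \in \Strict{K_i}$ with $x_{V_{H_i}} \in \B_i^{(\beta)}$ and $x_{V_{K_i}} \in \C_i^{(\gamma)}$, and then $\alpha := \{\beta,\gamma\}$ lies in $\Strict{G}$ (its children $\beta,\gamma$ are strict and $G_\beta = H_i \subsetneq G$, $G_\gamma = K_i \subsetneq G$), witnessing $x \in \A^{(\alpha)}$.

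For the cost bound, monotonicity of $G$-smallness (noted after Definition \ref{df:small}) gives $\B_i^{(\beta)} \in \smallP_{G_\beta}$ and $\C_i^{(\gamma)} \in \smallP_{G_\gamma}$, so the join rule applies at the top level of $\alpha$. Combining with sub-additivity, monotonicity, and the inductive cost estimates yields
\[
  \pcost_\alpha(\A^{(\alpha)}) \;\le\; \sum_{i \in J_\alpha}\! \bigl(\pcost_\beta(\B_i^{(\beta)}) + \pcost_\gamma(\C_i^{(\gamma)})\bigr) \;\le\; \sum_{i \in J_\alpha}\! \bigl(\cost_{H_i}(\B_i) + \cost_{K_i}(\C_i)\bigr) \;\le\; \cost_G(\A),
\]
where $J_\alpha := \{i : \{H_i,K_i\} = \{G_\beta,G_\gamma\}\}$ and the last inequality uses non-negativity of the summands against the full sum $\cost_G(\A)$. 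The only real obstacle is the notational bookkeeping of the unordered pair structure $\{\beta,\gamma\}$ versus $\{H_i, K_i\}$, which is absorbed into the case split in the definition of $\Psi_i(\alpha)$ without inflating the cost.
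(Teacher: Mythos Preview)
Your proof is correct and follows essentially the same inductive argument as the paper: fix an optimal covering witnessing $\cost_G(\A)$, apply the induction hypothesis to each $\B_i$ and $\C_i$, and set $\A^{(\alpha)} = \A \cap \bigcup_i \B_i^{(\beta)}\bowtie\C_i^{(\gamma)}$ over those $i$ whose graph pair matches the children of $\alpha$. The only difference is that you are more explicit than the paper about the ordered/unordered mismatch between the sequence $(H_i,K_i)$ and the unordered pattern $\alpha=\{\beta,\gamma\}$, handling it via the case split $\Psi_i(\alpha)$; the paper glosses over this bookkeeping but the argument is the same.
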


\begin{proof}
By induction on $|E_G|$. The lemma is trivial if $|E_G| \le 1$ (since in this case $|\Strict{G}|=1$). For the induction step, suppose $G$ is a pattern graph with $\ge 2$ edges. By Definition \ref{df:pc1} of $\cost$, there exists a sequence $(H_i,K_i,\B_i,\C_i)_i$ with 
\[
  H_i,K_i \subset G,
  \quad\
  H_i \cup K_i = G,
  \quad\
  \B_i \in \smallP_{H_i},
  \quad\
  \C_i \in \smallP_{K_i}
\]
such that
\[
  \A \subseteq \bigcup_i \B_i \bowtie \C_i
  \quad\
  \text{and}
  \quad\
  \cost_G(\A) = \sum_i \cost_{H_i}(\B_i) + \cost_{K_i}(\C_i).
\]
For each $\alpha = \{\beta,\gamma\} \in \Strict{G}$, define $\A^{(\alpha)}$ inductively by
\[
  \A^{(\alpha)} \defeq \A \cap 
  \bigcup_{i \,:\, H_i = G_\beta,\, K_i = G_\gamma}
  \B_i^{(\beta)} \bowtie \C_i^{(\gamma)}.
\]

First, we show that
$
  \A = \bigcup_{\alpha \in \Strict{G}} \A^{(\alpha)}
$.
The inclusion $\supseteq$ is obvious. For the inclusion $\subseteq$, consider any $x \in \A$. Then $x$ belongs to $\B_i \bowtie \C_i$ for some $i$. This means that $x_{V_{H_i}} \in \B_i$ and $x_{V_{K_i}} \in \C_i$. By the induction hypothesis, there exist $\beta \in \Strict{H_i}$ and $\gamma \in \Strict{K_i}$ such that $x_{V_{H_i}} \in \B_i^{(\beta)}$ and $x_{V_{K_i}} \in \C_i^{(\gamma)}$. Let $\alpha = \{\beta,\gamma\}$ and note that $\alpha \in \Strict{G}$. Since $x \in \B_i^{(\beta)} \bowtie \C_i^{(\gamma)}$, it follows that $x \in \A^{(\alpha)}$, proving the inclusion $\subseteq$.

Finally, for all $\alpha \in \Strict{G}$, we show $\pcost_\alpha(\A) \le \cost_G(\A)$ as follows:
\begin{align*}
  \pcost_\alpha(\A)
  &\le \pcost_\alpha(\bigcup_{i \,:\, H_i = G_\beta,\, K_i = G_\gamma}
    \B_i^{(\beta)} \bowtie \C_i^{(\gamma)})
  &&\text{(monotonicity)}\\
  &\le \sum_{i \,:\, H_i = G_\beta,\, K_i = G_\gamma}
    \pcost_\alpha(\B_i^{(\beta)} \bowtie \C_i^{(\gamma)})
  &&\text{(sub-additivity)}
\intertext{Noting that $\B_i^{(\beta)}$ and $\C_i^{(\gamma)}$ are small (since $\B_i^{(\beta)} \subseteq \B_i \in \smallP_{H_i}$ and $\C_i^{(\gamma)} \subseteq \C_i \in \smallP_{K_i}$), we continue:}
  &\le \sum_{i \,:\, H_i = G_\beta,\, K_i = G_\gamma}
    \pcost_\beta(\B_i^{(\beta)}) + \pcost_\gamma(\C_i^{(\gamma)})
  &&\text{(join rule)}\\
  &\le \sum_{i \,:\, H_i = G_\beta,\, K_i = G_\gamma}
    \cost_{H_i}(\B_i) + \cost_{K_i}(\C_i)
  &&\text{(ind.\ hyp.)}\\  
  &\le 
  \cost_G(\A).
  &&\qedhere
\end{align*}
\end{proof}

The next corollary follows directly from Lemma \ref{la:for-redu2}.

\begin{cor}\label{cor:from-patterns}
For every pattern graph $G$ and pathset $\A \in \P_G$, there is a strict pattern $\alpha \in \Strict{G}$ and a sub-pathset $\A' \subseteq \A$ such that $\pcost_\alpha(\A') \le \cost_G(\A)$ and $\delta(\A) \le |\Strict{G}| \cdot \delta(\A')$.\qed
\end{cor}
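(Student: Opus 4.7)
The plan is to apply Lemma \ref{la:for-redu2} directly and then extract the ``densest'' component by an averaging argument. Specifically, given a pattern graph $G$ and a pathset $\A \in \P_G$, Lemma \ref{la:for-redu2} produces an indexed family $\{\A^{(\alpha)}\}_{\alpha \in \Strict{G}}$ of sub-pathsets of $\A$ which cover $\A$ (i.e., $\A = \bigcup_{\alpha} \A^{(\alpha)}$) and each satisfies the cost bound $\pcost_\alpha(\A^{(\alpha)}) \le \cost_G(\A)$.

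The next step is a one-line union bound. Since $\A$ is covered by a family indexed by $\Strict{G}$, we have
\[
  |\A| \;=\; \Big|\bigcup_{\alpha \in \Strict{G}} \A^{(\alpha)}\Big|
  \;\le\; \sum_{\alpha \in \Strict{G}} |\A^{(\alpha)}|
  \;\le\; |\Strict{G}| \cdot \max_{\alpha \in \Strict{G}} |\A^{(\alpha)}|.
\]
Pick $\alpha^\ast \in \Strict{G}$ attaining this maximum and set $\A' \defeq \A^{(\alpha^\ast)}$. Dividing by $n^{|V_G|}$ yields $\delta(\A) \le |\Strict{G}| \cdot \delta(\A')$, while the cost bound $\pcost_{\alpha^\ast}(\A') \le \cost_G(\A)$ is inherited from Lemma \ref{la:for-redu2}. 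This establishes the corollary.

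There is no real obstacle here: the work has already been done in Lemma \ref{la:for-redu2}, and the corollary simply repackages its conclusion via a pigeonhole step. (It is worth noting that the factor $|\Strict{G}|$ lost in the density is precisely what Lemma \ref{la:number-strict} bounds by $2^{O(2^{|E_G|})}$; this is the source of the $2^{O(2^k)}$ factor that ultimately restricts Theorem \ref{thm:pathset-lb}, and hence Theorem \ref{thm:main}, to $k(n) \le \log\log n$.)
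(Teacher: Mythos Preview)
Your argument is correct and is exactly what the paper intends: the corollary is marked as following directly from Lemma~\ref{la:for-redu2}, and the pigeonhole/averaging step you spell out is the only thing left to do. Your parenthetical remark about the $|\Strict{G}| \le 2^{O(2^{|E_G|})}$ loss is also on target and matches how the paper uses this corollary in the proof of Reduction~\ref{redu:2}.
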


We conclude this section with the proof of Reduction \ref{redu:2}.

\begin{proof}[Proof of Reduction \ref{redu:2}]
Assume Theorem \ref{thm:pattern-lb} and consider arbitrary $\A \in \P_{P_k}$. By Corollary \ref{cor:from-patterns}, there exist $\alpha \in \Strict{P_k}$ and $\A' \subseteq \A$ such that $\cost_\alpha(\A') \le \cost_{P_k}(\A)$ and $\delta(\A) \le |\Strict{P_k}| \cdot \delta(\A') \le 2^{O(2^k)} \cdot \delta(\A')$ (Lemma \ref{la:number-strict}). 
We now have
\begin{align*}
  &&\vphantom{\Big|}\cost_{P_k}(\A) \ge \cost_\alpha(\A')
  &\ge \m^{(1/6)\log(\l{\alpha})+\c{\alpha}} \cdot \delta(\A')
  &&\text{(Theorem \ref{thm:pattern-lb})}&&\\
  &&&\ge \frac{\m^{(1/6)\log(k) + 1}}{2^{O(2^k)}} \cdot \delta(\A)\\
  &&&\ge \frac{n^{(1/6)\log k}}{2^{O(2^k)}} \cdot \delta(\A)
  &&\text{(as $\m = n^{1-\eps} = n^{1-(1/\log k)}$).}&&
\end{align*}
This shows that Theorem \ref{thm:pathset-lb} holds, which completes the proof of the reduction.
\end{proof}

\section{Projection and Restriction}\label{sec:proj-rest}

In this section we establish two key properties of $\pcost$: it is monotone decreasing with respect to projection to sub-patterns (Lemma \ref{la:proj-lemma}) and restriction to unions of components (Lemma \ref{la:rest-lemma}). We also introduce an operation on patterns $A \ominus B$ (Definition \ref{df:ominus}), read as ``$A$ restricted away from $B$''. This notation will be extremely convenient in \S\ref{sec:lower-bound}.

\subsection{$\pcost$ Decreases Under Projection}

\begin{claim}\label{claim:pre-proj-lemma}
For every non-atomic pattern $\AB$ and pathset $\C \in \P_{\AB}$, we have $\pcost_A(\proj_A(\C)) \le \pcost_{\AB}(\C)$.
\end{claim}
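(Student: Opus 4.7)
The plan is a direct unfolding argument from the recursive definition of $\pcost$. Since $\AB$ is non-atomic, Definition \ref{df:pc2}(iii) asserts that $\pcost_{\AB}(\C)$ is the minimum of $\sum_i (\pcost_A(\A_i) + \pcost_B(\B_i))$ over sequences $(\A_i,\B_i)_i$ with $\A_i \in \smallP_A$, $\B_i \in \smallP_B$, and $\C \subseteq \bigcup_i \A_i \bowtie \B_i$. I fix any sequence attaining this minimum (the minimum is achieved because $[n]^{V_{\AB}}$ is finite, so there are only finitely many distinct decompositions up to padding by empty pathsets).

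The key observation is that the covering is preserved under projection onto $V_A$: for each $x \in \C$ there is some index $i$ with $x \in \A_i \bowtie \B_i$, and by definition of $\bowtie$ this forces $x_{V_A} \in \A_i$; hence $\proj_A(\C) \subseteq \bigcup_i \A_i$. Now chaining (monotonicity), (sub-additivity), and the trivial non-negativity $\pcost_B(\B_i) \ge 0$ (an immediate induction on patterns from the base cases $\pcost_\emptyset(\{()\}) = 0$ and $\pcost_{A'}(\{x\}) = 1$ for atomic $A'$), I conclude
\begin{align*}
  \pcost_A(\proj_A(\C))
    &\le \pcost_A\Bigl(\bigcup_i \A_i\Bigr)
    \le \sum_i \pcost_A(\A_i) \\
    &\le \sum_i \bigl(\pcost_A(\A_i) + \pcost_B(\B_i)\bigr)
    = \pcost_{\AB}(\C),
\end{align*}
which is the desired inequality.

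There is no substantive obstacle here; the content of the claim is simply that $\pcost$ is monotone under projection to a sub-pattern, and it falls out of the definition as soon as one notes that the join-cover descends to a union-cover of the projection. The only point requiring any care is the implicit appeal to non-negativity to discard the $\pcost_B(\B_i)$ terms. I expect that the companion statement for restriction (Lemma \ref{la:rest-lemma}) will require a parallel but slightly more delicate argument, since restriction interacts non-trivially with the components of $G_A \cup G_B$ and with the smallness constraints, whereas the projection case here is entirely combinatorial.
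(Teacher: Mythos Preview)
Your proof is correct and follows essentially the same approach as the paper's: fix an optimal covering $(\A_i,\B_i)_i$, observe that $\proj_A(\C) \subseteq \bigcup_i \A_i$, and then apply monotonicity and sub-additivity to conclude $\pcost_A(\proj_A(\C)) \le \sum_i \pcost_A(\A_i) \le \pcost_{\AB}(\C)$. The only cosmetic difference is that you make the non-negativity of $\pcost_B(\B_i)$ explicit, whereas the paper leaves it implicit in the final inequality.
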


\begin{proof}
By Definition \ref{df:pc2}(iii) of $\pcost_{\AB}(\C)$, there is a sequence $(\A_i,\B_i)_i$ such that 
\[
  \A_i \in \smallP_A,\quad
  \B_i \in \smallP_B,\quad
  \C \subseteq \bigcup_i \A_i \bowtie \B_i
  \quad\text{and}\quad
  \pcost_{\AB}(\C) = \sum_i \pcost_A(\A_i) + \pcost_B(\B_i).
\]
Note that
$
  \proj_A(\C) \subseteq \proj_A(\,\bigcup_i \A_i \bowtie \B_i) 
  \subseteq \bigcup_i \A_i.
$
By monotonicity and sub-additivity of $\pcost_A$, it follows that
\begin{align*}
  \pcost_A(\proj_A(\C))
  &\le \pcost_A(\,\bigcup_i \A_i)
  \le \sum_i\pcost_A(\A_i)
  \le \pcost_{\AB}(\C).
  \qedhere
\end{align*}
\end{proof}

\begin{la}[$\pcost$ decreases under projections]\label{la:proj-lemma}
For every pattern $A$ and pathset $\A \in \P_A$ and sub-pattern $A' \preceq A$, $\pcost_{A'}(\proj_{A'}(\A)) \le \pcost_A(\A)$.
\end{la}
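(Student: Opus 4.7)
The plan is to prove Lemma \ref{la:proj-lemma} by induction on the depth of $A'$ within $A$, bootstrapping from Claim \ref{claim:pre-proj-lemma}, which already handles the one-step case where $A'$ is an immediate child of the root of $A$.

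The base case is $A' = A$ (in particular, this covers the cases where $A$ is empty or atomic, since then no proper sub-patterns exist). Here the statement reduces to $\pcost_A(\A) \le \pcost_A(\A)$, which is trivial. For the inductive step, write $A = \{B,C\}$ and consider any sub-pattern $A' \prec A$. By definition of the sub-pattern relation, $A'$ sits inside one of the two children; without loss of generality $A' \preceq B$. First I would apply Claim \ref{claim:pre-proj-lemma} to obtain
\[
  \pcost_B(\proj_B(\A)) \le \pcost_{\{B,C\}}(\A) = \pcost_A(\A).
\]
Then, since $A' \preceq B$ is strictly smaller than $A$, the induction hypothesis (applied inside the pattern $B$ to the pathset $\proj_B(\A) \in \P_B$) yields
\[
  \pcost_{A'}\bigl(\proj_{A'}(\proj_B(\A))\bigr) \le \pcost_B(\proj_B(\A)).
\]

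The only remaining point is the bookkeeping identity $\proj_{A'}(\proj_B(\A)) = \proj_{A'}(\A)$. This follows directly from the conventions on projections in Definition \ref{df:density-etc}(ii), together with the fact that $V_{A'} \subseteq V_B \subseteq V_A$ (since sub-patterns inherit their vertex sets from the underlying pattern graphs): projecting first to $V_B$ and then to $V_{A'}$ is the same as projecting to $V_{A'}$ in one shot. Chaining the three inequalities gives $\pcost_{A'}(\proj_{A'}(\A)) \le \pcost_A(\A)$, completing the induction.

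There is no real obstacle here: the combinatorial work was done in Claim \ref{claim:pre-proj-lemma}, and the lemma is a routine transitive extension. The only thing to double-check is that the induction is well-founded, which it is since $A'$ is a strict sub-pattern of $A$ at each inductive step, so its size (as a binary tree) strictly decreases.
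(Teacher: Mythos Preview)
Your proposal is correct and matches the paper's own proof, which simply reads ``Induction using Claim \ref{claim:pre-proj-lemma}.'' You have spelled out exactly the intended argument: peel off one level using the claim, then recurse, with the transitivity of projection handling the bookkeeping.
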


\begin{proof}
Induction using Claim \ref{claim:pre-proj-lemma}.
\end{proof}

\subsection{$\pcost$ Decreases Under Restriction}

For a pattern $A$ and a pathset $\A \in \P_A$, Lemma \ref{la:proj-lemma} concerns projections of $\A$ of the form $\proj_{A'}(\A)$ where $A'$ is a sub-pattern of $A$. The restrictions of $\A$ that we consider next are \underline{not} restrictions of the form $\RHO{A'}{\A}{z}$ where $z \in [n]^{V_A \setminus V_{A'}}$. Note that $\RHO{A'}{\A}{z} \subseteq \proj_{A'}(\A)$, so we already have $\pcost_{A'}(\RHO{A'}{\A}{z}) \le \pcost_A(\A)$ by Lemma \ref{la:proj-lemma} and monotonicity of $\pcost_{A'}$.

Rather than restrictions over sub-patterns, we instead consider restrictions of the form $\RHO{S}{\A}{z}$ where $z \in [n]^{V_A \setminus S}$ and $S \subseteq V_A$ is a union of components of $G_A$. We define an operation of restriction on patterns; the restriction $A{\uhr}S$ is a pattern with $V_{A{\uhr}S} = S$. Even though $A{\uhr}S$ is not necessarily a sub-pattern of $A$, we will show that $\pcost_{A{\uhr}S}(\RHO{S}{\A}{z}) \le \pcost_A(\A)$.

\begin{df}[Restriction of Patterns]\label{df:restriction}
\
\begin{enumerate}[(i)]
\item
For all $S \subseteq V_k$, let $\BAR S$ denote the complement $V_k \setminus S$ of $S$ in $V_k$.
\item
For a pattern $A$, we say that $S$ is {\em $A$-respecting} if $V_A \cap S$ is a union of components of $G_A$. 

Note that $S$ is $A$-respecting $\Longleftrightarrow$ $\BAR S$ is $A$-respecting $\Longleftrightarrow$ every leaf in $A$ is labeled by an edge $\edge{v_i}{v_{i+1}} \in E_k$ such that $\{v_i,v_{i+1}\} \subseteq S$ or $\{v_i,v_{i+1}\} \subseteq \BAR S$. Also note that if $S$ is $\AB$-respecting, then it is both $A$-respecting and $B$-respecting and $\AB{\uhr}S = \{A{\uhr}S,B{\uhr}S\}$.
\item
If $S$ is $A$-respecting, we denote by $A{\uhr}S$ the pattern obtained from $A$ by pruning all leaves labeled by elements of $\BAR S \times \BAR S$. 

For example, if $A$ is the pattern $\{\{\edge{v_1}{v_2},\edge{v_5}{v_6}\},\{\edge{v_2}{v_3},\edge{v_6}{v_7}\}\}$ and $S$ is the $A$-respecting set $\{v_1,v_2,v_3\}$, then $A{\uhr}S = \{\edge{v_1}{v_2},\edge{v_2}{v_3}\}$. Note that $A{\uhr}S = \{\edge{v_1}{v_2},\edge{v_2}{v_3}\}$ also when $S$ is the $A$-respecting set $\{v_1,v_2,v_3,v_4\}$; in general, $A{\uhr}S = A{\uhr}(V_A \cap S)$. Also note that $A{\uhr}S$ is not a sub-pattern of $A$ in this example.
\end{enumerate}
\end{df}

Before stating the main lemma of this subsection, recall our convention concerning notation $\RHO{S}{\A}{z}$ (see Definition \ref{df:density-etc}): for every pattern $A$ and pathset $\A \in \P_A$ and $S \subseteq V_k$ and $z \in [n]^{\BAR S}$, the pathset $\RHO{S}{\A}{z} \in \P_{A{\uhr}S}$ is defined by
$\RHO{S}{\A}{z} \defeq \RHO{V_A \cap S}{\A}{z'} = \{y \in V_A \cap S : yz' \in \A\}$ where $z' = z_{V_A \setminus S}$.

\begin{la}[$\pcost$ decreases under restrictions]\label{la:rest-lemma}
For every pattern $A$ and pathset $\A \in \P_A$ and $A$-respecting $S \subseteq V_k$ and $z \in [n]^{\BAR S}$, we have $\pcost_{A{\uhr}S}(\RHO{S}{\A}{z}) \le \pcost_A(\A)$.
\end{la}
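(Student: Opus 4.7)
The plan is to mirror the proof of Lemma \ref{la:proj-lemma} by structural induction on $A$, with the added twist that we must simultaneously track how restrictions behave on the small $G_B$-pathsets that appear in an optimal covering. The base cases are immediate: if $A = \emptyset$ both sides are $0$; if $A$ is atomic, then $A$-respecting forces the single edge of $A$ to lie entirely in $S$ or entirely in $\BAR S$, and in the first case $A\uhr S = A$ and $\RHO{S}{\A}{z}$ is a sub-pathset of (a natural copy of) $\A$ so monotonicity applies, while in the second case $A\uhr S = \emptyset$ and the restriction is either $\emptyset$ or $\{()\}$, trivially of $\pcost \le 0$.

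For the inductive step, write $A = \AB$ and fix an optimal covering $(\B_i, \C_i)_i$ realizing $\pcost_A(\A) = \sum_i \pcost_B(\B_i) + \pcost_C(\C_i)$ with $\B_i \in \smallP_B$, $\C_i \in \smallP_C$, and $\A \subseteq \bigcup_i \B_i \bowtie \C_i$. I would first check that $\RHO{S}{\B_i}{z} \in \smallP_{B\uhr S}$: the components of $G_{B\uhr S}$ are exactly the components of $G_B$ contained in $S$, so any union $T'$ of $t'$ components of $B\uhr S$ is also a union of $t'$ components of $G_B$, and one verifies directly from the definitions that $\mu_{T'}(\RHO{S}{\B_i}{z}) \le \mu_{T'}(\B_i) \le \m^{-t'}$; the same holds for $\C_i$. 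Next I would verify the covering
\[
  \RHO{S}{\A}{z} \;\subseteq\; \bigcup_i \RHO{S}{\B_i}{z} \bowtie \RHO{S}{\C_i}{z},
\]
by taking $y \in \RHO{S}{\A}{z}$, noting $y\cdot z_{V_A\cap\BAR S} \in \A$, picking $i$ with this element in $\B_i \bowtie \C_i$, and projecting the $V_B$ and $V_C$ coordinates to see that $y_{V_B\cap S} \in \RHO{S}{\B_i}{z}$ and $y_{V_C\cap S} \in \RHO{S}{\C_i}{z}$. Assuming $A\uhr S = \{B\uhr S, C\uhr S\}$, combining monotonicity, sub-additivity, the join rule (available by smallness just verified), and the inductive hypothesis yields
\[
  \pcost_{A\uhr S}(\RHO{S}{\A}{z}) \le \sum_i \pcost_{B\uhr S}(\RHO{S}{\B_i}{z}) + \pcost_{C\uhr S}(\RHO{S}{\C_i}{z}) \le \sum_i \pcost_B(\B_i) + \pcost_C(\C_i) = \pcost_A(\A).
\]

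The main obstacle, and the only real subtlety, is the bookkeeping when one side of the pattern is entirely pruned — say $B\uhr S = \emptyset$, which happens exactly when $V_B \cap S = \emptyset$. Then $A\uhr S$ should be read as $C\uhr S$ (the empty subtree collapses), and each $\RHO{S}{\B_i}{z}$ lies in $\P_\emptyset$, so the join $\RHO{S}{\B_i}{z} \bowtie \RHO{S}{\C_i}{z}$ is either $\emptyset$ (when $z_{V_B} \notin \B_i$) or equals $\RHO{S}{\C_i}{z}$ (otherwise). In that case, sub-additivity and the inductive hypothesis give
\[
  \pcost_{C\uhr S}(\RHO{S}{\A}{z}) \le \sum_i \pcost_{C\uhr S}(\RHO{S}{\C_i}{z}) \le \sum_i \pcost_C(\C_i) \le \pcost_A(\A),
\]
where the last inequality uses non-negativity of $\pcost_B(\B_i)$. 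A symmetric argument covers $C\uhr S = \emptyset$, and the case both empty has already been handled by the base-case reduction since then $A\uhr S = \emptyset$. This completes the induction.
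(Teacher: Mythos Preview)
Your proposal is correct and follows essentially the same inductive argument as the paper's proof: take an optimal covering, observe that restrictions of small pathsets remain small (the paper isolates this as a separate lemma, Lemma~\ref{la:rest-small}, which you prove inline), note that restriction distributes over join and union, and then chain monotonicity, sub-additivity, the join rule, and the induction hypothesis. The one difference is that the paper simply writes $\AB{\uhr}S = \{A{\uhr}S, B{\uhr}S\}$ and proceeds uniformly, without separately treating the case where one child is pruned to $\emptyset$; your explicit handling of that degenerate case is a careful elaboration of a point the paper leaves implicit in its definition of the pruning operation.
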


\begin{proof}
By induction on patterns. The lemma is trivial for empty and atomic patterns. For the induction step, consider a non-atomic pattern $\AB$ and assume the lemma holds for $A$ and $B$. Let $\C \in \P_{\AB}$, let $S$ be a $\AB$-respecting subset of $V_k$, and let $z \in [n]^{\BAR S}$. By Def.\ \ref{df:pc2}(iii) of $\pcost_{\AB}(\C)$, there is a sequence $(\A_i,\B_i)_i$ such that 
\[
  \A_i \in \smallP_A,\quad
  \B_i \in \smallP_B,\quad
  \C \subseteq \bigcup_i \A_i \bowtie \B_i
  \quad\text{and}\quad
  \pcost_{\AB}(\C) = \sum_i \pcost_A(\A_i) + \pcost_B(\B_i).
\]
By Lemma \ref{la:rest-small}, $\RHO{S}{\A_i}{z} \in \smallP_{A{\uhr}S}$ and $\RHO{S}{\B_i}{z} \in \smallP_{B{\uhr}S}$. We now have
\begin{align*}
  \pcost_{\AB{\uhr}S}(\RHO{S}{\C}{z})
  &\le \vphantom{\big|}\pcost_{\AB{\uhr}S}\Big(\bigcup_i \pRHO{S}{\A_i \bowtie \B_i}{z}\Big)
  &&\text{(monotonicity)}\\
  &\le \vphantom{\Big|}\sum_i\pcost_{\AB{\uhr}S}(\pRHO{S}{\A_i \bowtie \B_i}{z})
  &&\text{(sub-additivity)}\\
  &= \vphantom{\big|}\sum_i\pcost_{\{A{\uhr}S,B{\uhr}S\}}(\RHO{S}{\A_i}{z} \bowtie \RHO{S}{\B_i}{z})\\
  &\le \vphantom{\Big|}\sum_i \pcost_{A{\uhr}S}(\RHO{S}{\A_i}{z})+\pcost_{B{\uhr}S}(\RHO{S}{\B_i}{z})
  &&\text{(join rule)}\\
  &\le \vphantom{\big|}\sum_i \pcost_A(\A_i)+\pcost_B(\B_i)
  &&\text{(ind.\ hyp.)}\\
  &= \vphantom{\big|}\pcost_{\AB}(\C).
  &&\qedhere
\end{align*}
\end{proof}

\begin{la}[Smallness is preserved under restriction]\label{la:rest-small}
For every pattern $A$ and small pathset $\A \in \smallP_A$ and $A$-respecting $S \subseteq V_k$ and $z \in [n]^{\BAR S}$, we have $\RHO{S}{\A}{z} \in \smallP_{A{\uhr}S}$.
\end{la}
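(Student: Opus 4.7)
The statement is essentially a bookkeeping exercise. The plan is to unpack the definition of $G_{A\uhr S}$-smallness for $\RHO{S}{\A}{z}$ and reduce each of its density constraints to a corresponding density constraint from the hypothesis that $\A$ is $G_A$-small. There is no real obstacle; the only thing to verify carefully is that ``components of $G_{A\uhr S}$'' matches ``components of $G_A$ contained in $S$,'' which is exactly what $A$-respecting buys us.

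First I would observe the structural fact: since $S$ is $A$-respecting, $V_A \cap S$ is a union of components of $G_A$, and by Definition \ref{df:restriction}(iii) the edges of $G_{A\uhr S}$ are precisely those edges of $G_A$ whose endpoints lie in $V_A\cap S$. Consequently the components of $G_{A\uhr S}$ are exactly the components of $G_A$ that are contained in $S$. In particular, any union $S'\subseteq V_{A\uhr S}$ of $t$ components of $G_{A\uhr S}$ is automatically a union of $t$ components of $G_A$.

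Next I would fix such an $S'$ and show $\mu_{S'}(\RHO{S}{\A}{z}) \le \mu_{S'}(\A)$ by chasing the definitions. Writing $z' \defeq z_{V_A\setminus S}$, for any $y \in [n]^{V_{A\uhr S}\setminus S'} = [n]^{(V_A\cap S)\setminus S'}$ we have
\[
  \RHO{S'}{(\RHO{S}{\A}{z})}{y}
  = \{x \in [n]^{S'} : xy \in \RHO{S}{\A}{z}\}
  = \{x \in [n]^{S'} : xyz' \in \A\}
  = \RHO{S'}{\A}{yz'},
\]
since $yz'$ is a tuple on $V_A\setminus S'$. Taking densities and then maxima over $y$ gives $\mu_{S'}(\RHO{S}{\A}{z}) \le \mu_{S'}(\A)$.

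Finally I would invoke the $G_A$-smallness of $\A$: because $S'$ is a union of $t$ components of $G_A$, we have $\mu_{S'}(\A) \le \m^{-t}$, hence $\mu_{S'}(\RHO{S}{\A}{z}) \le \m^{-t}$. Since this holds for every $1 \le t \le \c{A\uhr S}$ and every union $S'$ of $t$ components of $G_{A\uhr S}$, we conclude $\RHO{S}{\A}{z} \in \smallP_{A\uhr S}$, as required.
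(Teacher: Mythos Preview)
Your argument is correct and is precisely the unpacking of what the paper means when it says the lemma is ``immediate from Definition~\ref{df:small} of small pathsets.'' You have simply made explicit the two points that are implicit there: that every union of $t$ components of $G_{A{\uhr}S}$ is already a union of $t$ components of $G_A$ (by $A$-respectingness), and that restricting $\A$ by $z$ can only decrease $\mu_{S'}$.
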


\begin{proof}
Immediate from Definition \ref{df:small} of small pathsets.
\end{proof}

\begin{rmk}\label{rmk:proj}
Smallness is preserved under joins (Lemma \ref{la:join}) and restrictions to union of components (Lemma \ref{la:rest-small}). However, smallness is not preserved under projection to unions of components. A counterexample is the pattern $A = \{\edge{v_1}{v_2},\edge{v_3}{v_4}\}$ and pathset $\A = \{x \in [n]^{V_A} : x_1 = x_3 \text{ and } x_2 = x_4\} \in \smallP_A$. Letting $A'$ be the atomic sub-pattern $\edge{v_1}{v_2}$ of $A$, we have $\pi_{A'}(\A) = 1$, hence $\proj_{A'}(\A) \notin \smallP_{A'}$.
\end{rmk}

\subsection{The Operation $A \ominus B$}

As a convenient notation, we introduce an operation $A \ominus B$ on patterns, read as {\em $A$ restricted away from $\B$}.

\begin{df}\label{df:ominus}
For patterns $A$ and $B$, we write $A \ominus B$ for the pattern $A{\uhr}S$ where $S \subseteq V_A$ consists of the components of $G_A$ which do not intersect $V_B$. For example, if $A = \{\{\edge{v_1}{v_2},\edge{v_4}{v_5}\},\{\edge{v_2}{v_3},\edge{v_5}{v_6}\}\}$ (so $G_A$ is the union of paths $v_1v_2v_3$ and $v_4v_5v_6$) and $B = \{\edge{v_6}{v_7}\}$, then $A \ominus B = \{\edge{v_1}{v_2},\edge{v_2}{v_3}\}$.
\end{df}

We conclude this section with two basic lemmas about this operation.

\begin{la}\label{la:c}
For all patterns $C = \AB$ and $A' \preceq A$ and $B' \preceq B$,
\[
  \c{C} \le \c{A'} + \c{B' \ominus A'} + \c{C \ominus \{A',B'\}}.
\]
\end{la}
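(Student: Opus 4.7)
The plan is a direct combinatorial counting argument. Recall that $\c{G}$ equals the number of connected components of the pattern graph $G$, and note that sub-patterns inherit a subgraph: $G_{A'} \subseteq G_A$ and $G_{B'} \subseteq G_B$, so both $G_{A'}$ and $G_{B'}$ are subgraphs of $G_C = G_A \cup G_B$. The key reduction is to partition the components of $G_C$ into three classes according to their interaction with $V_{A'}$ and $V_{B'}$:
\begin{itemize}
\item[(i)] components intersecting $V_{A'}$,
\item[(ii)] components disjoint from $V_{A'}$ but intersecting $V_{B'}$,
\item[(iii)] components disjoint from both $V_{A'}$ and $V_{B'}$.
\end{itemize}
It then suffices to show that the number of class (i), (ii), (iii) components is at most $\c{A'}$, $\c{B' \ominus A'}$, and $\c{C \ominus \{A',B'\}}$ respectively.

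For class (iii), this is immediate from Definition \ref{df:ominus}: $C \ominus \{A',B'\} = C{\uhr}S$ where $S$ consists of exactly the components of $G_C$ disjoint from $V_{A'} \cup V_{B'} = V_{\{A',B'\}}$, so the count is an equality. For class (i), I would use the fact that $G_{A'} \subseteq G_C$, which means that for any vertex $v \in V_{A'}$ the entire component of $v$ in $G_{A'}$ lies inside the component of $v$ in $G_C$. Therefore the assignment sending each component of $G_{A'}$ to the unique component of $G_C$ containing it is well-defined, and its image is exactly the set of class (i) components; hence the count is at most $\c{A'}$. Class (ii) is handled analogously, this time using $G_{B' \ominus A'} \subseteq G_{B'} \subseteq G_C$ together with the observation that the components of $G_{B' \ominus A'}$ are precisely the components of $G_{B'}$ disjoint from $V_{A'}$, so each class (ii) component of $G_C$ contains at least one component of $G_{B' \ominus A'}$.

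Summing the three bounds yields the desired inequality. There is no real obstacle here; the only subtlety is keeping the definition of $\ominus$ straight (components of the left operand that avoid the vertex set of the right operand) and confirming that sub-pattern inclusion $A' \preceq A$ really does imply the subgraph inclusion $G_{A'} \subseteq G_A$, which is immediate from the definition of sub-pattern as a subtree with inherited leaf labels.
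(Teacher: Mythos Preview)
Your proposal is correct and takes essentially the same approach as the paper. The paper compresses the argument into a single sentence---each component of $G_C$ contains at least one component from one of the three vertex-disjoint graphs $G_{A'}$, $G_{B'\ominus A'}$, $G_{C\ominus\{A',B'\}}$---which is exactly your three-class partition together with the observation that the containment map from components of the three subgraphs to components of $G_C$ is surjective.
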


\begin{proof}
Each component of $G_C$ contains at least one component from at least one of the three vertex-disjoint graphs $G_{A'}$, $G_{B' \ominus A'}$ and $G_{C \ominus \{A',B'\}}$.
\end{proof}

\begin{la}\label{la:split}
For all patterns $C = \AB$ and $A' \preceq A$ and $B' \preceq B$ and pathsets $\A \in \P_{A}$ and $\B \in \P_{B}$,
\[
    \delta(\A \bowtie \B) \le 
    \pi_{A'}(\A) 
    \cdot
    \mu_{B' \ominus A'}(\proj_{B'}(\B)) 
    \cdot
    \mu_{C \ominus \{A',B'\}}(\A \bowtie \B).
\]
\end{la}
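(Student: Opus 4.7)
The plan is to reduce directly to Lemma~\ref{la:basic-ineqs}(\ref{eq:basic2}), which already provides a three-factor bound on $\delta(\A \bowtie \B)$ of precisely the form we want. Taking $V = V_A$, $W = V_B$, $S = V_{A'}$, and $T = V_{B'}$ in part (\ref{eq:basic2}) gives
\[
  \delta(\A \bowtie \B) \le
  \pi_{V_{A'}}(\A) \cdot
  \mu_{V_{B'} \setminus V_{A'}}(\proj_{V_{B'}}(\B)) \cdot
  \mu_{(V_A \cup V_B) \setminus (V_{A'} \cup V_{B'})}(\A \bowtie \B).
\]
The first factor is exactly $\pi_{A'}(\A)$. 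The task is then to replace the two remaining $\mu$-factors with $\mu_{B' \ominus A'}(\proj_{B'}(\B))$ and $\mu_{C \ominus \{A',B'\}}(\A \bowtie \B)$ respectively.

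This replacement comes down to two set-containments plus monotonicity. By Definition~\ref{df:ominus}, $V_{B' \ominus A'}$ is the union of those components of $G_{B'}$ disjoint from $V_{A'}$, so $V_{B' \ominus A'} \subseteq V_{B'} \setminus V_{A'}$; similarly, $V_{C \ominus \{A',B'\}} \subseteq V_C \setminus (V_{A'} \cup V_{B'}) = (V_A \cup V_B) \setminus (V_{A'} \cup V_{B'})$. Lemma~\ref{la:basic-ineqs}(\ref{eq:basic0}), read in the direction $\mu_{S^+}(\X) \le \mu_S(\X)$ whenever $S \subseteq S^+$, then says that passing from the larger ambient set to the smaller $\ominus$-set can only increase $\mu$. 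Therefore
\[
  \mu_{V_{B'} \setminus V_{A'}}(\proj_{V_{B'}}(\B)) \le \mu_{V_{B' \ominus A'}}(\proj_{V_{B'}}(\B)) = \mu_{B' \ominus A'}(\proj_{B'}(\B)),
\]
and an identical step upgrades the third factor to $\mu_{C \ominus \{A',B'\}}(\A \bowtie \B)$; chaining these inequalities yields the claim.

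There is no real obstacle here: once one unfolds the convention that $\mu_X$ for a set $X$ outside the ambient tuple-space reduces to $\mu_{V \cap X}$, the lemma is simply Lemma~\ref{la:basic-ineqs}(\ref{eq:basic2}) weakened by monotonicity of $\mu$. The only mildly delicate point is keeping the direction of monotonicity straight: the $\ominus$-versions are \emph{smaller} index sets, so they yield \emph{larger} $\mu$-values, which is exactly what is needed to weaken the bound in the right direction.
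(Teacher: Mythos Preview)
Your proof is correct and essentially identical to the paper's own proof: apply Lemma~\ref{la:basic-ineqs}(\ref{eq:basic2}) with $S=V_{A'}$, $T=V_{B'}$, then use the containments $V_{B'\ominus A'}\subseteq V_{B'}\setminus V_{A'}$ and $V_{C\ominus\{A',B'\}}\subseteq V_C\setminus(V_{A'}\cup V_{B'})$ together with Lemma~\ref{la:basic-ineqs}(\ref{eq:basic0}) to weaken the two $\mu$-factors. The only superfluous remark is the appeal to the extended $\mu$-convention; in fact both $V_{B'\ominus A'}$ and $V_{C\ominus\{A',B'\}}$ already lie inside the ambient vertex sets, so no unfolding is needed.
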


\begin{proof}
By Lemma \ref{la:basic-ineqs}(\ref{eq:basic2}),
\[
    \delta(\A \bowtie \B) \le 
    \pi_{A'}(\A) 
    \cdot
    \mu_{V_{B'} \setminus V_{A'}}(\proj_{B'}(\B)) 
    \cdot
    \mu_{V_C \setminus (V_{A'} \cup V_{B'})}(\A \bowtie \B).
\]
Since $V_{B' \ominus A'} \subseteq V_{B'} \setminus V_{A'}$ and $V_{C \ominus \{A',B'\}} \subseteq V_C \setminus (V_{A'} \cup V_{B'})$, by Lemma \ref{la:basic-ineqs}(\ref{eq:basic0}),
\begin{align*}
  \mu_{V_{B'} \setminus V_{A'}}(\proj_{B'}(\B)) 
  &\le \mu_{B' \ominus A'}(\proj_{B'}(\B)),\\
  \mu_{V_C \setminus (V_{A'} \cup V_{B'})}(\A \bowtie \B) 
  &\le \mu_{C \ominus \{A',B'\}}(\A \bowtie \B).
\end{align*}
Combining these inequalities finishes the proof.
\end{proof}

\section{Lower Bound for $\pcost$}\label{sec:lower-bound}

In this section we prove Theorem \ref{thm:pattern-lb}, our lower bound for $\pcost$. Recall that $\l{A}$ denote the length of the longest path in $G_A$, i.e.,\ the number of edges in the largest component of $G_A$.

\begin{reptheorem}{thm:pattern-lb}
\textup{(restated)\ }\itshape
For every pattern $A$ and pathset $\A \in \P_{A}$, 
\[
  \pcost_A(\A) \ge \m^{(1/6)\log(\l{A}) + \c{A}} \cdot \delta(\A).
\] 
\end{reptheorem}

To prove Theorem \ref{thm:pattern-lb}, first we define an auxiliary function $\PHI{} : \{\text{patterns}\} \to \R$. 
We then prove two lemmas: $\pcost_A(\A) \ge \m^{\PHI{A}} \delta(\A)$ (Lemma \ref{la:PHI}) 
and $\PHI{A} \ge \tsfrac16\log(\l{A}) + \c{A}$ 
(Lemma \ref{la:PHI-lb}).

\subsection{Definition of $\PHI{A}$}\label{sec:PHI}

\begin{df}\label{df:PHI}
Let $\PHI{} : \{$patterns$\} \to \R$ be the unique minimal function such that the following hold:
\begin{itemize}
  \item
    $\PHI{A} = 0$ if $A$ is empty, and $\PHI{A} = 2$ if $A$ is atomic,
  \item
    for every non-atomic pattern $C = \AB$ and sub-patterns $A' \preceq A$ and $B' \preceq B$,
    \begin{description}\setlength{\itemsep}{-5pt}
    \item[\ \ \ \ $(\dag)^C_{A',B\phantom{'}}$]\ \ \ \ 
      $\PHI{C} \ge \PHI{A'} + \c{B \ominus A'} + \c{C \ominus \{A',B\}}$,\\
    \item[\ \ \ \ $(\ddag)^C_{A',B'}$]\ \ \ \ 
      $\PHI{C} \ge 
      \ds\frac{\PHI{A'} + \PHI{B' \ominus A'} 
      + \c{C} + \c{C \ominus \{A',B'\}}}{2}$.
    \end{description}
\end{itemize}
We refer to ($\dag$) and ($\ddag$) as the ``one-sided'' and ``balanced'' inequalities. Note that since $\AB$ and $\BA$ are considered to be the same pattern, we also have the reverse inequalities $(\dag)^C_{B',A}$ and $(\ddag)^C_{B',A'}$. For better readability, we write $\PHI{A}$ inline and ${}^{\PHII{A}}$ in superscripts.
\end{df}

Some brief remarks on this definition:
\begin{enumerate}[---\hspace{.5pt}]
\item
Minimality of $\Phi$ among functions satisfying these inequalities means that for every non-atomic pattern $C = \AB$, at least one of the four inequalities $(\dag)^C_{A',B}$, $(\dag)^C_{B',A}$, $(\ddag)^C_{A',B'}$, $(\ddag)^C_{B',A'}$ is tight (i.e.\ holds with equality) for some $A' \preceq A$ and $B' \preceq B$. 

\item
Note that $\Phi$ is monotone decreasing with respect to sub-patterns, that is, $\PHI{A'} \le \PHI{A}$ for all $A' \preceq A$ (by inequalities ($\dag$)).

\item
$\Phi$ increases by means of the contribution of $\Delta$'s: if we remove the $\Delta$'s from $(\dag)^C_{A',B}$ and $(\ddag)^C_{A',B'}$ (replacing these inequalities by $\PHI{C} \ge \PHI{A'}$ and $\PHI{C} \ge \frac{1}{2}(\PHI{A'} + \PHI{B'})$ respectively), then we would have $\PHI{A} = 2$ for every nonempty pattern $A$. Intuitively, in the attempt to lower bound $\PHI{A}$, the objective of the game is to pick up as many $\Delta$'s as possible.

\item
For the patterns $A_k$ and $B_k$ defined in Appendix \ref{sec:key-examples}, we have $\PHI{A_k} \ge \PHI{A_{\lceil k/4 \rceil}} + 1$ by ($\dag$) and $\PHI{B_k} \ge \PHI{B_{\lceil (k-1)/2 \rceil}} + \frac{1}{2}$ by ($\ddag$) for all $k \ge 4$. It follows that $\PHI{A_k} \ge \frac{1}{2}\log k - O(1)$ and $\PHI{B_k} \ge \frac{1}{2}\log k - O(1)$.
\end{enumerate}

\subsection{Showing $\pcost_A(\A) \ge \m^{\PHI{A}} \delta(\A)$}\label{sec:first-lemma}

We now prove the most important lemma in the overall proof of Theorem \ref{thm:pattern-lb}. Lemma \ref{la:PHI} accounts for the definition of $\PHI{A}$ (essentially $\PHI{A}$ is the maximum function for which the argument of Lemma \ref{la:PHI} is valid). The two cases ($\dag$) and ($\ddag$) in the proof are inspired by the special cases proved in Appendix \ref{sec:easier}.

\begin{la}\label{la:PHI}
For every pattern $A$ and pathset $\A \in \P_{A}$, $\pcost_A(\A) \ge \m^{\PHII{A}}\delta(\A)$.
\end{la}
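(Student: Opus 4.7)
The plan is to prove Lemma~\ref{la:PHI} by induction on the size of the pattern $A$. The atomic case follows immediately since $\pcost_A(\A) = |\A| = n^2 \delta(\A) \ge \m^2 \delta(\A) = \m^{\PHI{A}} \delta(\A)$. For the inductive step at a non-atomic pattern $C = \{A,B\}$, I would fix an optimal decomposition $\C \subseteq \bigcup_i \A_i \bowtie \B_i$ with $\A_i \in \smallP_A$, $\B_i \in \smallP_B$ and $\pcost_C(\C) = \sum_i (\pcost_A(\A_i) + \pcost_B(\B_i))$. Since $\PHI{}$ is by construction the pointwise minimal function satisfying $(\dag)$ and $(\ddag)$, the value $\PHI{C}$ equals the maximum of those right-hand sides, and I would fix sub-patterns $A' \preceq A$ and $B' \preceq B$ together with an inequality type (one of $(\dag)^C_{A',B}$, $(\dag)^C_{B',A}$, or $(\ddag)^C_{A',B'}$) which attains this maximum. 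The per-index goal becomes
\[
  \pcost_A(\A_i) + \pcost_B(\B_i) \ge \m^{\PHI{C}} \, \delta(\A_i \bowtie \B_i),
\]
after which summing over $i$ together with the union bound $\delta(\C) \le \sum_i \delta(\A_i \bowtie \B_i)$ closes the induction.

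In the one-sided case $(\dag)^C_{A',B}$, I would apply Lemma~\ref{la:split} with the chosen $A'$ and with $B' = B$, and bound the three resulting factors separately: $\pi_{A'}(\A_i) \le \m^{-\PHI{A'}} \pcost_A(\A_i)$ from the induction hypothesis applied to $\proj_{A'}(\A_i)$ combined with Lemma~\ref{la:proj-lemma}; $\mu_{B \ominus A'}(\B_i) \le \m^{-\c{B \ominus A'}}$ directly from $B$-smallness of $\B_i$, since $V_{B \ominus A'}$ is a union of components of $G_B$; and $\mu_{C \ominus \{A',B\}}(\A_i \bowtie \B_i) \le \m^{-\c{C \ominus \{A',B\}}}$ from $C$-smallness of $\A_i \bowtie \B_i$ via Lemma~\ref{la:join}. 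The product of these three estimates yields $\pcost_A(\A_i) \ge \m^{\PHI{C}} \delta(\A_i \bowtie \B_i)$, which certainly implies the per-index goal.

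In the balanced case $(\ddag)^C_{A',B'}$, Lemma~\ref{la:split} is applied with both $A'$ and the true $B'$; the outer factors are handled as above, but the middle factor requires a new inductive bound $\mu_{B' \ominus A'}(\proj_{B'}(\B_i)) \le \m^{-\PHI{B' \ominus A'}} \pcost_B(\B_i)$. Multiplying then yields the product inequality
\[
  \pcost_A(\A_i) \cdot \pcost_B(\B_i) \ge
  \m^{\PHI{A'} + \PHI{B' \ominus A'} + \c{C \ominus \{A',B'\}}} \, \delta(\A_i \bowtie \B_i).
\]
Invoking additionally the smallness bound $\delta(\A_i \bowtie \B_i) \le \m^{-\c{C}}$ lets me boost this to
\[
  \pcost_A(\A_i) \cdot \pcost_B(\B_i) \ge
  \m^{\PHI{A'} + \PHI{B' \ominus A'} + \c{C \ominus \{A',B'\}} + \c{C}} \, \delta(\A_i \bowtie \B_i)^2,
\]
and then AM--GM in the form $\pcost_A(\A_i) + \pcost_B(\B_i) \ge 2\sqrt{\pcost_A(\A_i)\pcost_B(\B_i)}$ halves the exponent, producing exactly $2\m^{\PHI{C}} \delta(\A_i \bowtie \B_i) \ge \m^{\PHI{C}} \delta(\A_i \bowtie \B_i)$ (the factor of $2$ being wasted).

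The main obstacle will be justifying the middle-factor bound in the balanced case: smallness of $\B_i$ is not inherited by the projection $\proj_{B'}(\B_i)$ (Remark~\ref{rmk:proj}), so no direct density constraint is available. My plan is to route through restriction. Since $V_{B' \ominus A'}$ is a union of components of $G_{B'}$ (hence $B'$-respecting), for each maximising slice $z$ I apply Lemma~\ref{la:rest-lemma} to $\proj_{B'}(\B_i)$ followed by Lemma~\ref{la:proj-lemma} to get $\pcost_{B' \ominus A'}(\RHO{V_{B' \ominus A'}}{\proj_{B'}(\B_i)}{z}) \le \pcost_{B'}(\proj_{B'}(\B_i)) \le \pcost_B(\B_i)$, and the induction hypothesis on the strictly smaller pattern $B' \ominus A'$ then converts this into the desired $\mu$-bound. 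This manoeuvre explains both the appearance of $\PHI{B' \ominus A'}$ (rather than $\PHI{B'}$) inside $(\ddag)$ and why the half-coefficient in $(\ddag)$ arises naturally from pairing one extra power of smallness with AM--GM.
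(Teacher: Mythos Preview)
Your proposal is correct and follows essentially the same route as the paper's own proof: the same induction, the same reduction to a per-join inequality, the same case split according to which of $(\dag)$ or $(\ddag)$ is tight, the same use of Lemma~\ref{la:split} together with smallness (via Lemma~\ref{la:join}) and the projection/restriction Lemmas~\ref{la:proj-lemma} and~\ref{la:rest-lemma}, and the same AM--GM step in the balanced case. The only cosmetic difference is that the paper first takes square roots and then applies AM--GM, whereas you first multiply, square the density via $\delta(\A_i\bowtie\B_i)\le\m^{-\c{C}}$, and then apply AM--GM; these are algebraically equivalent.
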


\begin{proof}
We argue by induction on patterns. The base case where $A$ is empty or atomic is trivial. For the induction step, consider a non-atomic pattern $C = \AB$ and assume the lemma holds for all smaller patterns.

We claim that it suffices to show that
\begin{equation}\label{eq:how-to-lb}
  \m^{\PHII{C}} \delta(\A \bowtie \B) \le \pcost_A(\A) + \pcost_B(\B)
\end{equation}
for all $\A \in \smallP_A$, $\B \in \smallP_B$. To see that this suffices, consider any $\C \in \P_{C}$. By the dual characterization of pathset complexity, there exists a covering $\C \subseteq \bigcup_i \A_i \bowtie \B_i$ by joins of small pathsets $\A_i$ and $\B_i$ such that $\pcost_C(\C) = \sum_i \pcost_A(\A_i) + \pcost_B(\B_i)$. Note that
\[
  \delta(\C) 
  \le \delta(\,\bigcup_i \A_i \bowtie \B_i) 
  \le \sum_i \delta(\A_i \bowtie \B_i). 
\]
Assuming (\ref{eq:how-to-lb}) holds for all $\A_i$ and $\B_i$, we have
\[
  \m^{\PHII{C}} \delta(\C) 
  \le \sum_i \m^{\PHII{C}} \delta(\A_i \bowtie \B_i)
  \le \sum_i \pcost_A(\A_i)+\pcost_B(\B_i)
  = \pcost_C(\C).
\]

We now turn to proving inequality (\ref{eq:how-to-lb}). Fix small pathsets $\A \in \smallP_A$ and $\B \in \smallP_B$. Note that $\A \bowtie \B \in \smallP_C$ by Lemma \ref{la:join}. Recall that at least one of the four inequalities $(\dag)^C_{A',B}$, $(\ddag)^C_{A',B'}$, $(\dag)^C_{B',A}$, $(\ddag)^C_{B',A'}$ is tight for some $A' \preceq A$ and $B' \preceq B$. By symmetry of the argument, we consider only the first two possibilities without loss of generality .

\paragraph{Case $(\dag)$ (one-sided induction case):}  Assume that there exists $A' \preceq A$ such that $(\dag)^C_{A',B}$ is tight, that is,
\begin{equation}\label{eq:casedag}
  \PHI{C} = \PHI{A'} + \c{B \ominus A'} + \c{C \ominus \{A',B\}}.
\end{equation}
By Lemma \ref{la:split}, we have 
\begin{align*}
  \delta(\A \bowtie \B) 
  \le 
  \pi_{A'}(\A) 
  \cdot
  \mu_{B \ominus A'}(\B) 
  \cdot
  \mu_{C \ominus \{A',B\}}(\A \bowtie \B).
\end{align*}
Since $\B$ is $B$-small and $\A \bowtie \B$ is $C$-small, we have
\begin{align*}
  \mu_{B \ominus A'}(\B) 
  \le 
  \m^{-\cc{B \ominus A'}}
  \quad\text{ and }\quad
  \mu_{C \ominus \{A',B\}}(\A \bowtie \B)
  \le 
  \m^{-\cc{C \ominus \{A',B\}}}.
\end{align*}
Combining these inequalities (and substituting $\delta(\proj_{A'}(\A))$ for $\pi_{A'}(\A)$), we have
\begin{equation}\label{eq:delta-bound}
  \delta(\A \bowtie \B) 
  \le
  \m^{-\cc{B \ominus A'}-\cc{C \ominus \{A',B\}}} 
  \delta(\proj_{A'}(\A)).
\end{equation}
Using the fact that $\pcost$ decreases under projections, together with the induction hypothesis, we have 
\begin{align*}
  &&\m^{\PHII{C}} \delta(\A \bowtie \B)
    &= \vphantom{\big|}\m^{\PHII{A'} + \cc{B \ominus A'} + \cc{C \ominus \{A',B\}}} 
     \delta(\A \bowtie \B)
    &&\text{(by (\ref{eq:casedag}))}&&\\
  &&&\le \vphantom{\big|}\m^{\PHII{A'}} \delta(\proj_{A'}(\A))
    &&\text{(by (\ref{eq:delta-bound}))}\\
  &&&\le \vphantom{\big|}\pcost_{A'}(\proj_{A'}(\A))
    &&\text{(ind.\ hyp.)}\\
  &&&\le \vphantom{\big|}\pcost_A(\A)
    &&\text{(Lemma \ref{la:proj-lemma})}.
\end{align*}
Therefore, (\ref{eq:how-to-lb}) holds in this case.

\paragraph{Case $(\ddag)$ (balanced induction case):}  Assume that there exist $A' \preceq A$ and $B' \preceq B$ such that $(\ddag)^C_{A',B'}$ is tight, that is,
\begin{equation}\label{eq:caseddag}
  \PHI{C} = \frac{\PHI{A'} + \PHI{B' \ominus A'} + \c{C} + \c{C \ominus \{A',B'\}}}{2}.
\end{equation}
By Lemma \ref{la:split}, we have
\begin{align*}
  \delta(\A \bowtie \B) 
    \le \pi_{A'}(\A) \cdot
      \mu_{B' \ominus A'}(\proj_{B'}(\B)) \cdot
      \mu_{C \ominus \{A',B'\}}(\A \bowtie \B).
\end{align*}
By definition of $\mu_{B' \ominus A'}$, there exists $z \in [n]^{\VV{B'} \setminus \VV{B' \ominus A'}}$ such that
\begin{equation*}
  \mu_{B' \ominus A'}(\proj_{B'}(\B)) 
    = \delta(\RHO{B' \ominus A'}{\proj_{B'}(\B)}{z}).
\end{equation*}
$C$-smallness of $\A \bowtie \B$ implies both
\begin{align*}
  \delta(\A \bowtie \B) \le \m^{-\cc{C}} 
  \quad\text{ and }\quad
  \mu_{C \ominus \{A',B'\}}(\A \bowtie \B)
    \le \m^{-\cc{C \ominus \{A',B'\}}}.
\end{align*}
Taking square roots and combining these inequalities, we have
\begin{equation}\label{eq:the-delta}
  \delta(\A \bowtie \B) 
  \le
  \sqrt{\m^{- \cc{C} - \cc{C \ominus \{A',B'\}}}
  \cdot \pi_{A'}(\A) \cdot \delta(\RHO{B' \ominus A'}{\proj_{B'}(\B)}{z})}.
\end{equation}

Using the fact that $\pcost$ decreases under projections and restrictions (Lemmas \ref{la:proj-lemma} and \ref{la:rest-lemma}), together with the induction hypothesis, we have
\begin{align}
\label{eq:A-ind}
  &&\m^{\PHII{A'}} \pi_{A'}(\A)
  = \m^{\PHII{A'}} \delta(\proj_{A'}(\A))
  &\le \pcost_{A'}(\proj_{A'}(\A))
    &&\text{(ind.\ hyp.)}
    &&\\
\notag
  &&&\le \pcost_A(\A)
    &&\text{(Lemma \ref{la:proj-lemma})}
\end{align}
and also
\begin{align}
\label{eq:B-ind}
  &&\m^{\PHII{B' \ominus A'}} \delta(\RHO{B' \ominus A'}{\proj_{B'}(\B)}{z})
  &\le \pcost_{B' \ominus A'}(\RHO{B' \ominus A'}{\proj_{B'}(\B)}{z})
    &&\text{(ind.\ hyp.)}&&\\
\notag
  &&&\le \pcost_{B'}(\proj_{B'}(\B))
    &&\text{(Lemma \ref{la:rest-lemma})}\\
\notag
  &&&\le \pcost_B(\B)
    &&\text{(Lemma \ref{la:proj-lemma})}.
\end{align}

We now finish the proof using the inequality of arithmetic and geometric means (AM-GM inequality):
\begin{align*}
  \m^{\PHII{C}} \delta(\A \bowtie \B)
  &= \sqrt{\vphantom{\big|}
    \m^{\PHII{A'} + \PHII{B' \ominus A'} + \cc{C} 
    + \cc{C \ominus \{A',B'\}}}} 
    \cdot \delta(\A \bowtie \B)
  &&\text{(by (\ref{eq:caseddag}))}\\
  &\le \vphantom{\Big|}\sqrt{\m^{\PHII{A'} + \PHII{B' \ominus A'}} 
  \cdot
  \pi_{A'}(\A)
  \cdot 
  \delta(\RHO{B' \ominus A'}{\proj_{B'}(\B)}{z})}
  &&\text{(by (\ref{eq:the-delta}))}\\
  &\le \frac12\Big(\m^{\PHII{A'}} \pi_{A'}(\A) +
    \m^{\PHII{B' \ominus A'}} \delta(\RHO{B' \ominus A'}{\proj_{B'}(\B)}{z})\Big)
  &&\text{(AM-GM ineq.)}\\
  &\le \vphantom{\Big|}\frac12\Big(\pcost_A(\A) + \pcost_B(\B)\Big)
  &&\text{(by (\ref{eq:A-ind}), (\ref{eq:B-ind})).}
\end{align*}
Therefore, (\ref{eq:how-to-lb}) holds in this case also, which concludes the proof.
\end{proof}

\subsection{Showing $\PHI{A} \ge \tsfrac16\log(\l{A}) + \c{A}$}\label{sec:second-lemma}

We now complete the proof of Theorem \ref{thm:pattern-lb} by proving Lemma \ref{la:PHI-lb} ($\PHI{A} \ge \frac16\log(\l{A}) + \c{A}$ for all patterns $A$). We require one preliminary lemma.

\begin{la}\label{la:A-S}
For every pattern $A$ and $A$-respecting $S \subseteq V_k$, we have $\PHI{A} \ge \PHI{A{\uhr}S} + \c{\smash{A{\uhr}\BAR S}}$.
\end{la}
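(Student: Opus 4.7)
The plan is to induct on $|E_A|$. For $A = \emptyset$ both sides vanish; for atomic $A = \edge{v_i}{v_{i+1}}$, $A$-respectingness of $S$ puts $\{v_i,v_{i+1}\}$ into $S$ (giving RHS $= 2+0$) or into $\BAR S$ (giving RHS $= 0+1$), both $\le \PHI{A}=2$. For the inductive step, I will write the lemma's pattern as $C = \AB$, following Definition~\ref{df:PHI}. Degenerate cases $V_A \cap S = \emptyset$ (or symmetrically for $B$) I will handle separately by applying the IH to the nontrivial side, invoking $(\dag)^C_{B,A}$, and using Lemma~\ref{la:c} on $C{\uhr}\BAR S$ together with the observation $V_A \cap V_B \subseteq V_{B{\uhr}\BAR S}$ when $V_A \subseteq \BAR S$.

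In the main case $A{\uhr}S, B{\uhr}S \ne \emptyset$ we have $C{\uhr}S = \{A{\uhr}S, B{\uhr}S\}$. By the dual characterization of $\PHI{}$ (Remark~\ref{rmk:dual2}), $\PHI{C{\uhr}S}$ is attained by a tight $(\dag)^{C{\uhr}S}_{A'', B{\uhr}S}$ or $(\ddag)^{C{\uhr}S}_{A'', B''}$ (up to $A \leftrightarrow B$ symmetry) for some $A'' \preceq A{\uhr}S$ and $B'' \preceq B{\uhr}S$. It thus suffices to dominate each such RHS plus $\c{C{\uhr}\BAR S}$ by $\PHI{C}$. Every sub-pattern $A'' \preceq A{\uhr}S$ arises as $A'{\uhr}S$ for some $A' \preceq A$ (by re-attaching the pruned $\BAR S \times \BAR S$ leaves of the corresponding subtree of $A$); similarly $B'' = B'{\uhr}S$. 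I will work in these lifted variables.

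Applying the IH to the strictly smaller patterns $A'$ and $B' \ominus A'$ yields
\[
\PHI{A'{\uhr}S} \le \PHI{A'} - \c{A'{\uhr}\BAR S}, \qquad \PHI{(B'{\uhr}S) \ominus (A'{\uhr}S)} \le \PHI{B' \ominus A'} - \c{(B'{\uhr}\BAR S) \ominus (A'{\uhr}\BAR S)},
\]
via the routine identity $(B' \ominus A'){\uhr}T = (B'{\uhr}T) \ominus (A'{\uhr}T)$ for $T \in \{S, \BAR S\}$. Combining these with the $\PHI{C}$ lower bound from $(\dag)^C_{A',B}$ (in the $(\dag)$ sub-case) or $(\ddag)^C_{A',B'}$ (in the $(\ddag)$ sub-case), and using the additive splitting $\c{X} = \c{X{\uhr}S} + \c{X{\uhr}\BAR S}$ on each $S$-respecting pattern that appears (notably $X \in \{B \ominus A',\, C \ominus \{A',B\},\, C \ominus \{A',B'\}\}$), the target collapses in both sub-cases to
\[
\c{C{\uhr}\BAR S} \le \c{A'{\uhr}\BAR S} + \c{(B'{\uhr}\BAR S) \ominus (A'{\uhr}\BAR S)} + \c{(C{\uhr}\BAR S) \ominus \{A'{\uhr}\BAR S, B'{\uhr}\BAR S\}}
\]
(with $B' = B$ in the $(\dag)$ sub-case), which is exactly Lemma~\ref{la:c} applied to $C{\uhr}\BAR S = \{A{\uhr}\BAR S, B{\uhr}\BAR S\}$.

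The main obstacle is the notational bookkeeping: verifying the routine identities linking $\ominus$, $\uhr S$, and $\preceq$ (especially lifting each sub-pattern of $A{\uhr}S$ to a sub-pattern of $A$, and commuting $\ominus$ with $\uhr S$), and tracking how each $\c{\cdot}$ term splits additively between its $S$ and $\BAR S$ parts. Once this is settled, the inductive step reduces to a single application of Lemma~\ref{la:c} on $C{\uhr}\BAR S$.
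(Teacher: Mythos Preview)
Your proposal is correct and follows essentially the same approach as the paper: induct on patterns, identify the tight $(\dag)$ or $(\ddag)$ inequality for $C{\uhr}S$, lift its sub-patterns $A'{\uhr}S$, $B'{\uhr}S$ back to $A' \preceq A$, $B' \preceq B$, apply the induction hypothesis, and reduce the remaining inequality to Lemma~\ref{la:c} applied to $C{\uhr}\BAR S$. Your treatment is in fact slightly more careful than the paper's, since you explicitly handle the degenerate case where one of $A{\uhr}S$, $B{\uhr}S$ is empty (so that $C{\uhr}S$ is not literally $\{A{\uhr}S, B{\uhr}S\}$), which the paper glosses over.
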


\begin{proof}
We argue by induction on patterns. The lemma is trivial when $A$ is empty or atomic. For the induction step, consider any non-atomic pattern $C = \AB$ and assume the lemma holds for all smaller patterns. Let $S$ be any $C$-respecting subset of $V_k$. 

Noting that $C{\uhr}S = \{A{\uhr}S, B{\uhr}S\}$ and every sub-pattern of $A{\uhr}S$ has the form $A'{\uhr}S$ where $A' \preceq A$ (and similarly for $B{\uhr}S$), it follows that that at least one the four inequalities 
\[
  (\dag)^{C{\uhr}S}_{A'{\uhr}S,B{\uhr}S},\quad\
  (\ddag)^{C{\uhr}S}_{A'{\uhr}S,B'{\uhr}S},\quad\
  (\dag)^{C{\uhr}S}_{B'{\uhr}S,A{\uhr}S},\quad\
  (\ddag)^{C{\uhr}S}_{B'{\uhr}S,A'{\uhr}S}
\]
is tight for some $A' \preceq A$ and $B' \preceq B$. Once again, without loss of generality, we consider just the first two possibilities.

First, consider the case that there exists $A' \preceq A$ for which $(\dag)^{C{\uhr}S}_{A'{\uhr}S,B{\uhr}S}$ is tight, that is,
\begin{equation}\label{eq:case1}
  \PHI{C{\uhr}S} = \PHI{A'{\uhr}S} + \c{(B \ominus A'){\uhr}S} 
  + \c{(C \ominus \{A',B\}){\uhr}S}.
\end{equation} 
In this case, we have
\begin{align*}
  \PHI{C}
  &\ge
  \PHI{A'} 
    + \c{B \ominus A'} 
    + \c{C \ominus \{A',B\}}
  &&\text{(by $(\dag)^C_{A',B}$)}\\
  &\ge
  \PHI{A'} 
    + \c{B \ominus A'} 
    + \c{C \ominus \{A',B\}}\\
  &\mathrel{\phantom{=}}
    + \c{C{\uhr}\BAR S} 
    - \c{A'{\uhr}\BAR S} 
    - \c{(B \ominus A'){\uhr}\BAR S} 
    - \c{(C \ominus \{A',B\}){\uhr}\BAR S}
  &&\text{(Lemma \ref{la:c})}\\
  &=
  \PHI{A'}  
    - \c{A'{\uhr}\BAR S}
    + \c{B \ominus A'{\uhr}S} 
    + \c{C \ominus \{A',B\}{\uhr}S}
    + \c{C{\uhr}\BAR S}\\
  &\ge
  \PHI{A'{\uhr}S} 
    + \c{B \ominus A'{\uhr}S} 
    + \c{C \ominus \{A',B\}{\uhr}S}
    + \c{C{\uhr}\BAR S}
  &&\text{(ind.\ hyp.)}\\
  &\ge
  \PHI{C{\uhr}S}
    + \c{C{\uhr}\BAR S}
  &&\text{(by (\ref{eq:case1})).}
\end{align*}

Finally, consider the alternative that there exist $A' \preceq A$ and $B' \preceq B$ for which $(\ddag)^{C{\uhr}S}_{A'{\uhr}S,B'{\uhr}S}$ is tight, that is,
\begin{equation}\label{eq:case2}
  \PHI{C{\uhr}S} 
  = \frac{\PHI{A'{\uhr}S} + \PHI{(B' \ominus A'){\uhr}S} + \c{C{\uhr}S} + \c{(C \ominus \{A',B'\}){\uhr}S}}{2}.
\end{equation}
In this case, we have
\begin{align*}
  \PHI{C}
  &\ge
  \tsfrac12\big(
      \PHI{A'} 
      + \PHI{B' \ominus A'} 
      + \c{C} 
      + \c{C \ominus \{A',B'\}}
    \big)
  &&\text{(by $(\ddag)^C_{A',B'}$)}\\
  &\ge
  \tsfrac12
  \big(
      \PHI{A'} 
      + \PHI{B' \ominus A'} 
      + (\c{C{\uhr}S} + \c{C{\uhr}\BAR S}) 
      + \c{C \ominus \{A',B'\}}\big) + \mbox{}\\
  &\mathrel{\phantom{=}}
      \tsfrac12
      \big(\c{C{\uhr}\BAR S} 
      - \c{A'{\uhr}\BAR S} 
      - \c{(B' \ominus A'){\uhr}\BAR S} 
      - \c{(C \ominus \{A',B'\}){\uhr}\BAR S}
    \big)
  &&\text{(Lemma \ref{la:c})}\\  
  &=
  \tsfrac12\big(
      \PHI{A'}
      - \c{A'{\uhr}\BAR S} 
      + \PHI{B' \ominus A'}
      - \c{(B' \ominus A'){\uhr}\BAR S}
      + \c{C{\uhr}S}  
      + \c{(C \ominus \{A',B'\}){\uhr}S}
    \big)
    + \c{C{\uhr}\BAR S}\\
  &\ge
  \tsfrac12\big(
      \PHI{A'{\uhr}S} 
      + \PHI{(B' \ominus A'){\uhr}S}
      + \c{C{\uhr}S} 
      + \c{(C \ominus \{A',B'\}){\uhr}S}
    \big)
    + \c{C{\uhr}\BAR S}
  &&\text{(ind.\ hyp.)}\\
  &\ge
  \PHI{C{\uhr}S}
    + \c{C{\uhr}\BAR S}
  &&\text{(by (\ref{eq:case2})).}\vphantom{\big|}
\end{align*}
Having shown $\PHI{C} \ge \PHI{C{\uhr}S} + \c{C{\uhr}\BAR S}$ in both cases, we are done.
\end{proof}

\begin{la}\label{la:PHI-lb}
For every pattern $A$, $\PHI{A} \ge \tsfrac16\log(\l{A}) + \c{A}$.
\end{la}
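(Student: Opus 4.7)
My plan is to prove $\PHI{A} \ge \tsfrac16 \log \l{A} + \c{A}$ by induction on the pattern size $|A|$ (the number of leaves in the pattern tree of $A$), with the inductive hypothesis being the same bound for every strictly smaller pattern. The base cases (empty and atomic $A$) are immediate: $\PHI{\emptyset} = 0$ and $\PHI{A} = 2 \ge 1 = \tsfrac16 \log 1 + 1$ for atomic $A$.

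First I will reduce to the connected case $\c{A} = 1$ using Lemma~\ref{la:A-S}. Take $S \subseteq V_k$ to be the vertex set of a component of $G_A$ realizing the longest path; then $S$ is $A$-respecting, $\l{A\uhr S} = \l{A}$, $\c{A\uhr S} = 1$, and $\c{A\uhr \BAR S} = \c{A}-1$. When $\c{A} \ge 2$ the pruned pattern $A\uhr S$ has strictly fewer leaves (at least one leaf labeled by an edge outside $S$ gets pruned), so the inductive hypothesis yields $\PHI{A\uhr S} \ge \tsfrac16 \log \l{A} + 1$, and Lemma~\ref{la:A-S} then gives $\PHI{A} \ge \PHI{A\uhr S} + \c{A}-1 \ge \tsfrac16 \log \l{A} + \c{A}$, which completes this case.

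It then remains to prove $\PHI{A} \ge \tsfrac16 \log \ell + 1$ when $\c{A}=1$, where $\ell := \l{A}$. Since $\PHI{A} \ge 2$ for every nonempty pattern, this is trivial unless $\ell > 64$. Write $A = \{L, R\}$. If $\l{L} = \ell$ (symmetrically for $\l{R}$) then $G_L = G_A$ forces $\c{L}=1$, the inductive hypothesis gives $\PHI{L} \ge \tsfrac16 \log \ell + 1$, and $(\dag)^A_{L,R}$ yields $\PHI{A} \ge \PHI{L}$. Otherwise $\l{L}, \l{R} < \ell$ and the longest path $P$ interleaves $L$-edges and $R$-edges. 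I will choose sub-patterns $L' \preceq L$ (and, when useful, $R' \preceq R$) and apply either $(\dag)^A_{L',R}$ or $(\ddag)^A_{L',R'}$, combining the inductive hypothesis applied to $L'$ and to $R' \ominus L'$ with the $\Delta$-bonuses supplied by Lemma~\ref{la:c}, to produce $\PHI{A} \ge \tsfrac16 \log \ell + 1$.

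The principal obstacle is arranging such a workable choice for every tree structure of $L$ and $R$, since sub-patterns are constrained to be complete subtrees and one cannot freely isolate the longest $L$-run or $R$-run appearing in $P$. My plan is to split on whether $L$'s pattern tree admits a sub-pattern $L'$ whose vertex set $V_{L'}$ is disjoint from a nontrivial portion of $V_R$: in that regime the bonus $\c{R \ominus L'} \ge 1$ through $(\dag)^A_{L',R}$ covers any slight loss in $\l{L'}$ relative to $\l{L}$. Otherwise---typically when $L$'s tree is adversarial, e.g., a caterpillar whose every subtree meets $V_R$---the short monochromatic runs in $P$ force $\c{L}+\c{R}$ to be large, and then the inductive bounds $\PHI{L} \ge \c{L}$ or $\PHI{R} \ge \c{R}$ already dominate $\tsfrac16 \log \ell + 1$ directly through $(\dag)^A_{L,R}$ or $(\dag)^A_{R,L}$. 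Reconciling these regimes in a single clean case analysis is the delicate part of the argument; the loose constant $\tfrac{1}{6}$---compared with the $\tfrac{1}{2}\log k$ attained by the key examples $A_k, B_k$---leaves enough slack to absorb the losses at each inductive step.
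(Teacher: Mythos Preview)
Your reduction to the connected case via Lemma~\ref{la:A-S} and the handling of the base cases are correct and match the paper. The gap is in the connected case: your proposed dichotomy is not exhaustive, and the missing regime is precisely the one that forces the use of $(\ddag)$.

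Consider $A = B_k$, where $L = B_{k-1}$ and $R = B_{k-1}^{\shift 1}$. Here $G_L$ is the path on $\{v_0,\dots,v_{k-1}\}$ and $G_R$ is the path on $\{v_1,\dots,v_k\}$, so both are connected: $\c{L}=\c{R}=1$. Your second branch (``$\c{L}+\c{R}$ large'') therefore does not fire. But your first branch also fails: since $G_R$ is connected, $\c{R\ominus L'}\ge 1$ requires $V_{L'}\cap V_R=\emptyset$, i.e.\ $V_{L'}\subseteq\{v_0\}$, which is impossible for a nonempty pattern. Every sub-pattern of $L$ touches $V_R$. More generally, your ``monochromatic runs'' intuition presumes $E_L$ and $E_R$ are nearly disjoint, but patterns need not be end-joining; for $B_k$ they overlap in all but two edges.

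The paper handles exactly this regime with $(\ddag)$, not $(\dag)$: after reducing to the assumption that every sub-pattern with $\ge k/8$ edges is connected, it locates sub-patterns $\parent{B}$ and $\parent{Z}$ of size $\ge k/8$ anchored at $v_0$ and $v_k$ respectively. When both have length $<k/2$ their vertex sets are disjoint, and one applies $(\ddag)$ at their least common ancestor $Y$ to get $\PHI{A}\ge\tfrac12(\PHI{\parent{B}}+\PHI{\parent{Z}})+\tfrac12$, which with the inductive bound on each half yields $\tfrac16\log(k/8)+\tfrac32=\tfrac16\log k+1$. The remaining cases (one of $\parent{B},\parent{Z}$ long) then descend further into the tree to find a connected sub-pattern $D$ of size in $[k/8,k/4)$ that misses a specific vertex, and only there does a $(\dag)$-style argument close out. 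Your plan mentions $(\ddag)$ in passing but never deploys it; without it, patterns like $B_k$ cannot be handled, and the constant $\tfrac16$ (rather than $\tfrac12$) is tight precisely at the $(\ddag)$ step.
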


\begin{proof}
We argue by induction on patterns. The base case where $A$ is empty or atomic is trivial. For the induction step, let $A$ be a non-atomic pattern and assume the lemma holds for all smaller patterns. We will consider a sequence of cases. In each case, after showing that $\PHI{A} \ge \tsfrac16\log(\l{A}) + \c{A}$ under a given hypothesis, we will proceed assuming the negation of that hypothesis. The sequences of cases is summarized at the end of the proof.

First, consider the case that $G_A$ is disconnected (i.e.\ $\c{A} \ge 2$). Let $S$ be the largest component of $G_A$. We have
\begin{align*}
  &&&&\PHI{A} 
    &\ge \PHI{A{\uhr}S} + \c{A{\uhr}\BAR S}
    &&\text{(Lemma \ref{la:A-S})}&&&&\\
  &&&&&\ge \tsfrac16\log(\l{A{\uhr}S}) + \c{A{\uhr}S} + \c{A{\uhr}\BAR S}
    &&\text{(ind.\ hyp.)}\\
  &&&&&= \tsfrac16\log(\l{A}) + \c{A}.
\end{align*}
This proves the lemma in the case where $G_A$ is disconnected.

Therefore, we proceed under the assumption that $G_A$ is connected (i.e.\ $\c{A} = 1$). Without loss of generality, we assume that $G_A = P_k$ (i.e.\ $\l{A} = k$). Our goal is to show that
\[
  \PHI{A} \ge \tsfrac16\log(k)+1.
\]
Since $\PHI{A} \ge 2$ for all nonempty patterns, we may further assume that $k > 2^6$. (Below, we will only need the assumption that $k > 8$.)

Consider the case that there exists a sub-pattern $A' \preceq A$ such that $|E_{A'}| \ge k/8$ and $\c{A'} \ge 2$. Note that $\l{A'} \ge |E_{A'}|/\c{A'}$ (i.e.\ the number of edges in the largest component of $G_{A'}$ is at least the number of edges in $G_{A'}$ divided by the number of components in $G_{A'}$). We have
\begin{align*}
  \PHI{A} \ge \PHI{A'}
  &\ge \tsfrac16\log(\l{A'}) + \c{A'}
  &&\text{(ind.\ hyp.)}\\
  &\ge \tsfrac16\log(k) - \tsfrac12 - \tsfrac16\log(\c{A'}) + \c{A'}
  &&\text{($\l{A'} \ge |E_{A'}|/\c{A'} \ge k/8\c{A'}$)}\\
  &\ge \tsfrac16\log(k) - \tsfrac12 - \tsfrac16\log(2) + 2
  &&\text{($\c{A'} \ge 2$)}\\
  &= \tsfrac16\log(k) + \tsfrac43\\
  &> \tsfrac16\log(k) + 1.
\end{align*}
This proves the lemma in this case.

Therefore, we proceed under the following assumption:
\begin{equation}\tag{$\hspace{-1pt}{\divideontimes}\hspace{-1pt}$}
\label{eq:ast}
 \text{for all $A' \preceq A$, if 
 $|E_{A'}| \ge k/8$
 then }\c{A'} = 1.
\end{equation}
Going forward, the following notation will be convenient: for a proper sub-pattern $B \prec A$, let $\parent{B}$ denote the parent of $B$ in $A$, and let $\sib{B}$ denote the sibling of $B$ in $A$. Note that $\parent{B} = \{B,\sib{B}\} \preceq A$.

It is easy to see that there exist proper sub-patterns $B,Z \prec A$ such that 
\[
  v_0 \in V_B,\qquad
  v_k \in V_Z,\qquad
  |E_B|,|E_Z| < k/8,\qquad
  |E_{\parent{B}}|,|E_{\parent{Z}}| \ge k/8.
\]
Fix any choice of such $B$ and $Z$. Note that $G_{\parent{B}}$ and $G_{\parent{Z}}$ are connected by (\ref{eq:ast}). In particular, $G_{\parent{B}}$ is a path of length $|E_{\parent{B}}|$ with initial endpoint $v_0$, and $G_{\parent{Z}}$ is a path of length $|E_{\parent{Z}}|$ with final endpoint $v_k$.

Consider the case that $\l{\parent{B}} < k/2$ and $\l{\parent{Z}} < k/2$. Note that $V_{\parent{B}}$ and $V_{\parent{Z}}$ are disjoint and hence $\parent{Z} \ominus \parent{B} = \parent{Z}$. Let $Y$ denote the least common ancestor of $\parent{B}$ and $\parent{Z}$ in $A$. We have
\begin{align*}
  \PHI{A} \ge \PHI{Y}
  &\ge \tsfrac12\big(\PHI{\parent{B}} + \PHI{\parent{Z} \ominus \parent{B}} + \c{Y} + \c{Y \ominus \{\parent{B},\parent{Z}\}}\big)
  &&\text{(by $(\ddag)^Y_{\parent{B},\parent{Z}}$)}\\
  &= \tsfrac12\big(\PHI{\parent{B}} + \PHI{\parent{Z}}\big) + \tsfrac12
  &&\text{($\c{Y} \ge 1$)}\\
  &\ge \tsfrac12\big(\tsfrac16\log(\l{\parent{B}}) + \c{\parent{B}} + \tsfrac16\log(\l{\parent{Z}}) + \c{\parent{Z}}\big) + \tsfrac12
  &&\text{(ind.\ hyp.)}\\
  &\ge \tsfrac16\log(k/8) + \tsfrac32\\
  &= \tsfrac16\log(k) + 1.
\end{align*}
(We remark that this is the only place in the proof where the inequality $(\ddag)$ is used and the only tight case which forces $1/6$.) 

Therefore, we proceed under the assumption that $\l{\parent{B}} \ge k/2$ or $\l{\parent{Z}} \ge k/2$. Without loss of generality, we assume that $\l{\parent{B}} \ge k/2$. (We now forget about $Z$ and $\parent{Z}$.)

Before continuing, let's take stock of the assumptions we have made so far:
\[
  G_A = P_k,\quad\
  \text{(\ref{eq:ast})},\quad\
  B \preceq A,\quad\
  v_0 \in V_B,\quad\
  |E_B| < k/8,\quad\
  |E_{\parent{B}}| = \l{\parent{B}} \ge k/2.
\]
Going forward, we will define vertices $v_r,v_s,v_t$ where $0 < r < s < t \le k$. 

We first define $v_r \in B$ and $v_t \in \sib{B}$ as follows: Let $\{v_0,\dots,v_r\}$ be the component of $G_B$ containing $v_0$. (That is, the component of $v_0$ in $G_B$ is a path whose initial vertex is $v_0$; let $v_r$ be the final vertex in this path.) Let $v_t$ be the vertex in $V_{\sib{B}}$ with maximal index $t$ (i.e.\ furthest away from $v_0$).

Note that $E_B$ contains edges $\edge{v_i}{v_{i+1}}$ for all $i \in \{0,\dots,r-1\} \cup \{t,\dots,\lceil k/2 \rceil-1\}$. (In the event that $t < k/2$, since $G_{\parent{B}} = G_B \cup G_{\sib{B}}$ is a path of length $\ge k/2$ and $G_{\sib{B}}$ does not contain vertices $v_{t+1},\dots,v_{\lceil k/2 \rceil}$, it follows that $G_B$ contains all edges between $v_t$ and $v_{\lceil k/2 \rceil}$.) Therefore, $r + (k/2) - t \le |E_B| < k/8$. It follows that
\[
  t-r > 3k/8.
\]

Next, note that $|E_{\sib{B}}| \ge |E_{\parent{B}}| - |E_B| \ge (k/2) - (k/8) > k/8$. It follows that there exists a proper sub-pattern $C \prec \sib{B}$ such that
\[
  v_t \in V_C,\qquad
  |E_C| < k/8,\qquad 
  |E_{\parent{C}}| \ge k/8.
\]
Fix any choice of such $C$.

Consider the case that $\l{\parent{C}} < 3k/8$. Since $G_{\parent{C}}$ is connected (by (\ref{eq:ast})) and $v_t \in V_{\parent{C}}$ and $t - r > 3k/8$, it follows that $V_{\parent{C}} \cap \{v_0,\dots,v_r\} = \emptyset$ and hence $\c{B \ominus \parent{C}} \ge 1$. We have
\begin{align*}
  &&\PHI{A} \ge \PHI{\parent{B}}
  &\ge \PHI{\parent{C}} + \c{B \ominus \parent{C}} + \c{\parent{B} \ominus \{B,\parent{C}\}}
    &&\text{(by $(\dag)^{\parent{B}}_{\parent{C},B}$)}&&\\
  &&&\ge \PHI{\parent{C}} + 1\\
  &&&\ge \tsfrac16\log(\l{\parent{C}}) + \c{\parent{C}} + 1
    &&\text{(ind.\ hyp.)}\\
  &&&\ge \tsfrac16\log(k/8) + 2\\
  &&&> \tsfrac16\log(k) + 1.
\end{align*}

Therefore, we proceed under the assumption that $\l{\parent{C}} \ge 3k/8$. Since $E_{\parent{C}} = E_C \cup E_{\sib{C}}$, we have
\[
  |E_{\sib{C}}| 
  \ge |E_{\parent{C}}| - |E_C| 
  > (3k/8) - (k/8) = k/4.
\]
We now define vertex $v_s \in V_C$. Since $v_t$ is the vertex of $G_{\sib{B}}$ with maximal index, it follows that $\edge{v_t}{v_{t+1}} \notin E_{\sib{B}}$ and hence $\edge{v_t}{v_{t+1}} \notin E_C$ (since $C \prec \sib{B}$). Therefore, the component of $G_C$ containing $v_t$ is a path with final vertex $v_t$; let $v_s$ be the initial vertex in this path. That is, $\{v_s,\dots,v_t\}$ is the component of $G_C$ which contains $v_t$.

Recall that $t-r > 3k/8$ and note that $t-s \le |E_C| < k/8$. Therefore,
\[
  s - r = (t - r) - (t - s) > (3k/8) - (k/8) = k/4.
\]
We now claim that there exists a proper sub-pattern $D \prec \sib{C}$ such that 
\[
  k/8 \le |E_D| < k/4.
\]
To see this, note that there exists a chain of sub-patterns $\sib{C} = D_0 \succ D_1 \succ \dots \succ D_j$ such that $D_j$ is atomic and $D_i = \parent{D_{i-1}}$ and $|E_{D_i}| \ge |E_{\sib{D}_i}|$ for all $i \in \{1,\dots,j\}$. Since $|E_{D_0}| > k/4$ and $|E_{D_j}| = 1$ and $|E_{D_{i-1}}| = |E_{D_i}| + |E_{\sib{D}_i}| \le 2|E_{D_i}|$, it must be the case that there exists $i \in \{1,\dots,j\}$ such that $k/8 \le |E_{D_i}| < k/4$.

Since $|E_D| \ge k/8$, (\ref{eq:ast}) implies that $G_D$ is connected. Since $|E_D| < k/4$ and $s-r > k/4$, it follows that $V_D$ cannot contain both $v_r$ and $v_s$. We are now down to our final two cases: either $v_r \notin V_D$ or $v_s \notin V_D$.

First, suppose that $v_r \notin V_D$. We have $\c{B \ominus D} \ge 1$ and hence
\begin{align*}
  &&\PHI{A} \ge \PHI{\parent{B}}
    &\ge \PHI{D} + \c{B \ominus D} + \c{\parent{B} \ominus \{B,D\}}
    &&\text{(by $(\dag)^{\parent{B}}_{D,B}$)}&&\\
  &&&\ge \PHI{D} + 1\\
  &&&\ge \tsfrac16\log(\l{D}) + \c{D} + 1
    &&\text{(ind.\ hyp.)}\\
  &&&\ge \tsfrac16\log(k/8) + 2\\
  &&&> \tsfrac16\log(k) + 1.
\end{align*}
Finally, we are left with the alternative that $v_s \notin V_D$. In this case $\c{C \ominus D} \ge 1$ and hence (substituting $C$ for $B$ in the above), we have
\begin{align*}
  \PHI{A} \ge \PHI{\parent{C}}
  \ge \PHI{D} + \c{C \ominus D} + \c{\parent{C} \ominus \{C,D\}}
  \ge \PHI{D} + 1
  > \tsfrac16\log(k) + 1.
\end{align*}

We have now covered all cases. In summary, we considered cases in the following sequence:
\begin{enumerate}[\quad\ \,1.\ ]
\setlength{\itemsep}{0pt}
  \item
    \makebox[2.9in]{$\c{A} \ge 2$\hfill}
    else assume wlog $G_A = P_k$,    
 \item
    \makebox[2.9in]{$\exists A' \prec A$ with $\c{A'} \ge 2$ and $\l{A'} \ge k/8$\hfill} 
    else assume (\ref{eq:ast}),
  \item
    \makebox[2.9in]{$|E_{\parent{B}}| < k/2$ and $|E_{\parent{Z}}| < k/2$\hfill} 
    else assume wlog $|E_{\parent{B}}| \ge k/2$,
  \item
    \makebox[2.9in]{$|E_{\parent{C}}| < 3k/8$\hfill}
    else assume $|E_{\parent{C}}| \ge 3k/8$,
  \item
    $v_r \notin E_D$ or $v_s \notin E_D$.
\end{enumerate}
Since $\PHI{A} \ge \tsfrac16\log(\l{A}) + \c{A}$ in each case, the proof is complete.
\end{proof}

As we have now proved Lemmas \ref{la:PHI} and \ref{la:PHI-lb}, this completes the proof of Theorem \ref{thm:pattern-lb} and hence also of Theorem \ref{thm:main}.

\section{Conclusion}\label{sec:conclusion}

We proved the first super-polynomial separation in the power of bounded-depth boolean formulas vs.\ circuits via technique based on the notion of pathset complexity. The most obvious question for future research is whether pathset complexity can be used to derive lower bounds for distance $k(n)$ connectivity in other models of computation. 

We conclude with a comment extending our results to the {\em average-case} setting. Let $p(n) = {\Theta(n^{-\frac{k+1}{k}})}$ be the exact threshold function such that
\[
  \Pr_{G = G(n,p)} [\, G \in \STCONN(k(n)) \,] = \frac{1}{2}
\]
where $G(n,p)$ is the Erd\Horig{o}s-R\'enyi random graph with edge probability $p(n)$. Our proof of Theorem \ref{thm:main} is easily adapted to give the same $n^{(1/6)\log k - O(1)}$ lower bound for bounded-depth formulas $F$ which satisfy
\[
   \Pr_{G = G(n,p)} [\, F(G) = 1 \iff G \in \STCONN(k(n)) \,] \ge \frac{1}{2} + \eps
\]
for any constant $\eps > 0$. Using the idea behind Proposition \ref{prop:upper-bound}, we can construct formulas $F$ of size $n^{(1/2)\log k + O(1)}$ (the best worst-case upper bound I know of is size $n^{\log k + O(1)}$) and depth $O(\log k)$ which solve $\STCONN(k(n))$ in a strong average-case sense:
\[
   \Pr_{G = G(n,p)} [\, F(G) = 1 \iff G \in \STCONN(k(n)) \,] \ge 1 - \exp(-n^{\Omega(1)}).
\]
It would be interesting to close the gap between $\smash{\frac16\log k}$ and $\smash{\frac12\log k}$ in these bounds.

\section*{Acknowledgements}

I want to thank Osamu Watanabe and Rahul Santhanam for many helpful discussions, Stasys Jukna for his interest in this work and stimulating discussions at a time when I was badly stuck on the proof, and Igor Carboni Oliveira for valuable feedback on an earlier draft of this paper.


\appendix{}

\section{Key Examples}\label{sec:key-examples}

We introduce two key examples of patterns, denoted $A_k$ and $B_k$, and present upper bounds for $\pcost$ with respect to these patterns. In the next section, we prove lower bounds for two classes of patterns which generalize $A_k$ and $B_k$. The arguments in these special cases show up in the two cases ($\dag$) and ($\ddag$) of our main lower bound (Theorem \ref{thm:pattern-lb}).

\begin{notn}\label{notn:shift2}
Recall Notation \ref{notn:shift1} for {\em $s$-shifted} pattern graphs $G^{\shift s}$ and pathsets $\A^{\shift s}$. For a pattern $A$ and integer $s$, we define the {\em $s$-shifted} pattern $A^{\shift s}$ analogously by replacing each label $\edge{v_i}{v_{i+1}}$ with the label $\edge{v_{i+s}}{v_{i+s+1}}$.
\end{notn}

\begin{df}[Patterns $A_k$ and $B_k$]
We define patterns $A_k$ and $B_k$ for all $k \ge 1$ by the following induction. Let $A_1 = B_1 \defeq$ the atomic pattern labeled by $\edge{v_0}{v_1}$. For $k \ge 2$, let $A_k \defeq \{A_j^{\vphantom{\shift j}},A_{k-j}^{\shift j}\}$ where $j = \lceil k/2 \rceil$, and let $B_k \defeq \{B_{k-1}^{\vphantom{\shift 1}},B_{k-1}^{\shift 1}\}$. For example, the explicit pictures of $A_8$ and $B_4$ are:
\begin{center}
    \mbox{}\hfill\hfill
    \Tree [.$A_8$ [ [ \makebox[15pt]{$\edge{v_0}{v_1}$} \makebox[15pt]{$\edge{v_1}{v_2}$} ] 
	       [ \makebox[15pt]{$\edge{v_2}{v_3}$} \makebox[15pt]{$\edge{v_3}{v_4}$} ] ]
           [ [ \makebox[15pt]{$\edge{v_4}{v_5}$} \makebox[15pt]{$\edge{v_5}{v_6}$} ] 
           [ \makebox[15pt]{$\edge{v_6}{v_7}$} \makebox[15pt]{$\edge{v_7}{v_8}$} ] ] ]
    \hfill\hfill\hfill
    \Tree [.$B_4$ [ [ \makebox[15pt]{$\edge{v_0}{v_1}$} \makebox[15pt]{$\edge{v_1}{v_2}$} ] 
	       [ \makebox[15pt]{$\edge{v_1}{v_2}$} \makebox[15pt]{$\edge{v_2}{v_3}$} ] ]
           [ [ \makebox[15pt]{$\edge{v_1}{v_2}$} \makebox[15pt]{$\edge{v_2}{v_3}$} ] 
           [ \makebox[15pt]{$\edge{v_2}{v_3}$} \makebox[15pt]{$\edge{v_3}{v_4}$} ] ] ].
     \hfill\hfill\mbox{}
\end{center}
\end{df}

Intuitively, the pattern $A_k$ corresponds to the recursive doubling algorithm for $\PATH(k,n)$. Note that we have essentially already encountered this pattern in the proof of Proposition \ref{prop:upper-bound} (our upper bound for $\cost_{P_k}$). In fact, this proof shows:
\begin{cor}\label{cor:upper-bound-A}
For all $\A \in \P_{A_k}$, 
$\pcost_{A_k}(\A) \le O(kn^{(1/2)\lceil\log k\rceil + 2})$.\qed
\end{cor}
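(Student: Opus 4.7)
The plan is to directly adapt the proof of Proposition \ref{prop:upper-bound}, observing that every join used there already respects the recursive splitting prescribed by the pattern $A_k$. By monotonicity of $\pcost_{A_k}$, it suffices to bound $\pcost_{A_k}([n]^{V_k})$ since every $\A \in \P_{A_k}$ is a subset of $[n]^{V_k}$.

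First, I would define, for each $k \ge 1$, the pathset $\A_k^\star \defeq \{x \in [n]^{V_k} : x_0,x_k \le \sqrt n\}$, which is $A_k$-small because $G_{A_k} = P_k$ is connected and $\delta(\A_k^\star) = 1/n \le 1/\m$. For $k \ge 2$, let $j = \lceil k/2 \rceil$, so that $A_k = \{A_j,\, A_{k-j}^{\shift j}\}$ by definition. Exactly as in Proposition \ref{prop:upper-bound}, partition $\A_k^\star$ into $\sqrt n$ ``copies'' according to the value of $x_j$; each copy is of the form $\COPY_t(\A_j^\star \bowtie (\A_{k-j}^\star)^{\shift j})$. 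Using sub-additivity of $\pcost_{A_k}$, the join rule (valid since $\A_j^\star \in \smallP_{A_j}$ and $(\A_{k-j}^\star)^{\shift j} \in \smallP_{A_{k-j}^{\shift j}}$), and the invariance of $\pcost$ under shifts and under coordinate-wise permutations of $[n]$ (both of which follow immediately from the definition of $\pcost$ together with Notation \ref{notn:shift2}), one obtains the recurrence
\[
   \pcost_{A_k}(\A_k^\star) \le \sqrt n \cdot \big(\pcost_{A_j}(\A_j^\star) + \pcost_{A_{k-j}}(\A_{k-j}^\star)\big).
\]
With base case $\pcost_{A_1}(\A_1^\star) \le |\A_1^\star| = n$ from Definition \ref{df:pc2}(ii) together with sub-additivity, unrolling yields $\pcost_{A_k}(\A_k^\star) \le O(kn^{(1/2)\lceil\log k\rceil + 1})$.

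Finally, cover the full pathset $[n]^{V_k}$ by $n$ ``copies'' of $\A_k^\star$ obtained by translating the value of $x_0$; another application of sub-additivity and permutation invariance yields $\pcost_{A_k}([n]^{V_k}) \le O(kn^{(1/2)\lceil\log k\rceil + 2})$, and monotonicity finishes the proof for arbitrary $\A$.

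No serious obstacle is anticipated: the essential point is that the join performed at each recursive step in Proposition \ref{prop:upper-bound} matches the split $A_k = \{A_j,\, A_{k-j}^{\shift j}\}$ with $j = \lceil k/2 \rceil$ built into the definition of the pattern $A_k$, so the bound from Definition \ref{df:pc2}(iii) applies without modification. The only minor items to verify are the shift- and permutation-invariance of $\pcost$, but both are immediate syntactic consequences of the inductive definition, mirroring the corresponding properties of $\cost$ already invoked in the proof of Proposition \ref{prop:upper-bound}.
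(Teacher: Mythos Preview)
Your proposal is correct and is precisely the approach the paper intends: the corollary is stated with a bare \qed\ because the proof of Proposition~\ref{prop:upper-bound} already uses only joins of the form $\A_j \bowtie \A_{k-j}^{\shift j}$ with $j=\lceil k/2\rceil$, which is exactly the split defining $A_k$. One tiny imprecision: in the final covering of $[n]^{V_k}$ by $n$ copies of $\A_k^\star$ you must translate both the $x_0$ and the $x_k$ coordinates (yielding $\sqrt n\cdot\sqrt n=n$ copies), not just $x_0$.
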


The pattern $B_k$ has a different nature than $A_k$. Whereas sub-patterns $A_j^{\vphantom{\shift j}}$ and $A_{k-j}^{\shift j}$ of $A_k$ overlap at only a single vertex $v_j$, sub-patterns $B_{k-1}$ and $B_{k-1}^{\shift 1}$ of $B_k$ overlap to the maximum possible extent. Despite this difference, it turns out that there is also a reasonable upper bound for $\pcost_{B_k}$.

\begin{prop}\label{prop:upper-bound-B}
For all $\B \in \P_{B_k}$, 
$
  \pcost_{B_k}(\B) \le 2^k n^{\ln(k+1) + 1}.
$
\end{prop}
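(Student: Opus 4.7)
The plan is to prove this by induction on $k$. The base case $k=1$ is immediate from sub-additivity: any $\B \in \P_{B_1}$ satisfies $\pcost_{B_1}(\B) \le |\B| \le n^2$, which is dominated by the claimed bound once constants are absorbed (small $k$ can be checked by hand).

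For the inductive step, by monotonicity it suffices to bound $\pcost_{B_k}([n]^{V_k})$. I would exploit the defining relation $B_k = \{B_{k-1},B_{k-1}^{\shift 1}\}$ by constructing $[n]^{V_k}$ through a union of joins $\A_t \bowtie \C_t^{\shift 1}$, where $\A_t \subseteq [n]^{V_{k-1}}$ and $\C_t \subseteq [n]^{V_{k-1}}$ are both $B_{k-1}$-small (note $G_{B_{k-1}} = P_{k-1}$ is connected, so smallness reduces to the single density condition $\delta \le 1/\m$). Concretely, fix a parameter $m$, partition $[n]$ into $m$ equal pieces $S_1,\dots,S_m$, and define $\A_t,\C_t$ by restricting a suitable coordinate (for example, the shifted endpoint $v_0$) to $S_t$. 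For $m \ge \m$ the resulting pathsets have density $1/m \le 1/\m$, hence are small. Using invariance of $\pcost$ under coordinate permutations of $[n]$ (``copies'') and under shifts (Notations \ref{notn:shift1}, \ref{notn:shift2}), the sum over $t$ telescopes to a constant multiple of $\pcost_{B_{k-1}}(\A_1)$.

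Iterating, one gets a recursion of the form $F(k) \le 2\,n^{1/k}\, F(k-1)$ (up to lower-order terms), which unrolls via the harmonic sum to
\[
  F(k) \le 2^{k-1}\,n^{\sum_{i=2}^k 1/i}\cdot F(1) \le 2^k\, n^{\ln(k+1)+1},
\]
using the standard estimate $H_k \le \ln(k+1)+1$. The factor $n^{\ln(k+1)}$ in the bound is precisely this harmonic-sum exponent, and the extra $n$ accounts for the base case $F(1) = O(n^2)$.

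The main obstacle will be extracting the correct $n^{1/k}$ factor per recursion level. A naive decomposition --- partitioning $[n]$ into $\m \approx n$ pieces and restricting a single interior coordinate --- yields $n$ joins per level, giving only the trivial bound $(2n)^k$, which is far weaker. The tight $n^{1/k}$ factor requires exploiting the heavy overlap between $B_{k-1}$ and $B_{k-1}^{\shift 1}$ (they share $k-1$ vertices), so that each recursion level introduces essentially only one new coordinate. Making this precise likely requires strengthening the inductive hypothesis to bound $\pcost_{B_j}$ uniformly over ``rectangular'' pathsets $\prod_{i=0}^j [N_i]$ with $N_i \le n$, so that the restricted sub-pathsets arising from the partition inherit a useful inductive bound rather than just the worst-case $F(k-1)$. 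Careful parameter tracking through this strengthened induction is the technical heart of the argument.
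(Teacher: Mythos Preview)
You correctly identify the crux: a single-coordinate partition forces $m \ge \m \approx n$ and yields only the trivial $(2n)^k$ bound. Your instinct to pass to rectangular pathsets is right, but you do not need a general uniform induction over all boxes $\prod_i [N_i]$; a single explicit sequence suffices, and the ``technical heart'' you anticipate is in fact a one-line choice rather than a parameter-tracking argument.

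The missing idea is to spread the smallness constraint across \emph{all} coordinates simultaneously rather than a single one. Define $\B_j \defeq \{x \in [n]^{V_j} : x_0,\dots,x_j \le n^{1-1/(j+1)}\}$. Since $G_{B_j} = P_j$ is connected, smallness is the single condition $\delta(\B_j) \le \m^{-1}$; and $\delta(\B_j) = (n^{-1/(j+1)})^{j+1} = n^{-1} \le \m^{-1}$, so $\B_j$ is small. Now $\B_{k-1} \bowtie \B_{k-1}^{\shift 1}$ is exactly the cube of side $n^{1-1/k}$ in $[n]^{V_k}$, while $\B_k$ is the cube of side $n^{1-1/(k+1)}$. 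Covering the larger cube by translates of the smaller takes $(n^{1/k(k+1)})^{k+1} = n^{1/k}$ copies, and (join rule) plus invariance under copies and shifts give the recursion $\pcost_{B_k}(\B_k) \le 2n^{1/k}\,\pcost_{B_{k-1}}(\B_{k-1})$ you were after. Unrolling with base case $\pcost_{B_1}(\B_1) = |\B_1| = n$ yields $\pcost_{B_k}(\B_k) \le 2^{k-1} n^{1+\frac12+\dots+\frac1k} \le 2^k n^{\ln(k+1)}$. Finally, $[n]^{V_k}$ is covered by $(n^{1/(k+1)})^{k+1} = n$ translates of $\B_k$, supplying the extra factor of~$n$.
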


\begin{proof}
We present a similar argument to the proof of Proposition \ref{prop:upper-bound}. For all $k \ge 1$, define $\B_k \in \P_{B_k}$ by
\[
  \B_k \defeq \{x \in [n]^{V_k} : x_0,\dots,x_k \le n^{1-1/(k+1)}\}.
\]
We have $\B_{k-1} \bowtie \B_{k-1}^{\shift 1}
  = \{x \in [n]^{V_k} : x_0,\dots,x_k \le n^{1-1/k}\}$.
For all $1 \le t_0,\dots,t_k \le n^{1/k(k+1)}$, let
\[
  \COPY_{t_0,\dots,t_k}(\B_{k-1} \bowtie \B_{k-1}^{\shift 1}) 
  \defeq
  \big\{x \in [n]^{V_k} : 
  t_i-1 < \frac{x_i}{n^{1-1/k}} \le t_i
  \text{ for all } 0 \le i \le k\big\}.
\] 
Note that
\[
  \B_k = \bigcup_{1 \le t_0,\dots,t_k \le n^{1/k(k+1)}} \COPY_{t_0,\dots,t_k}(\B_{k-1} \bowtie \B_{k-1}^{\shift 1}).
\]
Using (sub-additivity) and (join rule), together with the invariance of $\pcost$ under coordinate-wise permutations of $[n]$ and under shifts, we have
\[
  \pcost_{B_k}(\B_k) \le 2n^{1/k}\pcost_{B_{k-1}}(\B_{k-1}).
\]
This recurrence, together with the base case $\pcost_{B_1}(\B_1) = n$, implies
\[
  \pcost_{B_k}(\B_k) \le 2^{k-1} n^{1+(1/2)+\dots+(1/k)} \le 2^k n^{\ln(k+1)}.
\]
Noting that $[n]^{V_k}$ is covered by $n$ copies of $\B_k$, we have $\pcost_{B_k}([n]^{V_k}) \le 2^k n^{\ln(k+1)+1}$. The proposition then follows using (monotonicity).
\end{proof}

In Appendix \ref{sec:easier} we prove matching lower bounds for $\pcost_{A_k}$ and $\pcost_{B_k}$. In fact, these lower bounds apply to two classes of patterns which include $A_k$ and $B_k$. While the upper bounds for $\pcost_{A_k}$ and $\pcost_{B_k}$ are quite similar, our lower bound arguments are significantly different. The arguments in these two special cases---a ``one-sided'' induction for $\pcost_{A_k}$ and a ``balanced'' induction using the AM-GM inequality for $\pcost_{B_k}$---show up in the two cases ($\dag$) and ($\ddag$) of our general lower bound (Theorem \ref{thm:pattern-lb}). For this reason, the reader might find the results in Appendix \ref{sec:easier} to be a helpful warm-up.

\begin{rmk}
The pathsets $\A_k$ and $\B_k$ which show up in the proofs of our upper bounds are of a particularly simple form: they are {\em rectangular} subsets of $[n]^{V_k}$. In Appendix \ref{sec:rectangular} we discuss a notion of {\em rectangular pathset complexity} $\pcost^{\mr{rect}}$. Proving lower bounds for $\pcost^{\mr{rect}}$ turns our to be much easier than for $\pcost$. We present an example (the  ``palindrome pathset'') which illustrates the difficulty in attempting to generalize this easier lower bound to the non-rectangular setting.
\end{rmk}

\section{Lower Bound for $\pcost$: Special Cases}\label{sec:easier}

We prove easier special cases of our lower bound for $\pcost$ with respect to two classes of patterns which include the key examples $A_k$ and $B_k$ introduced in \S\ref{sec:key-examples}. Although the results of this appendix are not used in the main body of the paper, the arguments in the proof show up in the two cases ($\dag$) and ($\ddag$) of our general lower bound.

\begin{df}\label{df:psi}\
\begin{enumerate}[(i)]
\item
For a pattern $A$,
\begin{enumerate}[---]
  \item
    let $\End{A} \subseteq V_A$ denote the set of {\em endpoints} in $G_A$ (i.e.\ vertices of in-degree or out-degree zero), and let $\Int{A} \defeq V_A \setminus \End{A}$ denote the set of {\em interior vertices} in $G_A$,
  \item
    let $\Con{A}$ denote the set of {\em intervals} in $G_A$ (i.e.\ subsets of $V_A$ which are connected in $G_A$).
\end{enumerate}
Note that $\l{A} = \ds\max_{I \in \Con{A}} |A|-1$ and $\c{A} = |\End{A}|\,/\,2$.
\item
The classes of {\em end-joining} and {\em fully connected} patterns are defined as follows:
\begin{enumerate}[---]
\item
$A$ is {\em end-joining} if no edge of $P_k$ labels more than one leaf of $A$ (equivalently, $E_{A_1} \cap E_{A_2} = \emptyset$ for all non-atomic sub-patterns $\{A_1,A_2\} \preceq A$),
\item
$A$ is {\em fully connected} if $G_{A'}$ is connected (i.e.\ $\c{A'} = 1$) for all sub-patterns $A' \preceq A$.
\end{enumerate}
Note that patterns $A_k$ and $B_k$ are both fully connected, while only $A_k$ is end-joining (for $k \ge 3$).
\item
Functions $\psino_A,\psifc_A : \P_A \to \R$ are defined as follows:
\begin{enumerate}[---]
  \item
    for end-joining patterns $A$,
    \[
      \psino_A(\A) \defeq 
      \m\vphantom{\Big|}^{\frac12\big(\log(\l{A}) + \c{A}\big)}
      \sqrt{\ts\Ex_{z \in [n]^{\End{A}}}\big[\, \delta(\RHO{\Int{A}}{\A}{z})^2 \,\big]},
    \]
  \item
    for fully connected patterns $A$,
    \[
      \psifc_A(\A) \defeq \max_{I^{\vphantom A} \in \Con{A}}
      \m\vphantom{\Big|}^{\frac14\big(\log(|I|+1) + |I \cap \End{A}|\big)}
      \cdot  \pi_I(\A).
    \]
\end{enumerate}
For non-end-joining patterns $A$, we set $\psino_A(\A) \defeq 0$, and for non-fully connected patterns $A$, we set $\psifc_A(\A) \defeq 0$.
\end{enumerate}
\end{df}

\begin{prop}\label{prop:special-lbs}
Both $\psino$ and $\psifc$ are lower bounds on pathset complexity. That is, for every pattern $A$ and pathset $\A \in \P_A$, we have $\psino_A(\A) \le \pcost_A(\A)$ and $\psifc_A(\A) \le \pcost_A(\A)$. In particular,
\begin{align*}
  &\pcost_{A_k}([n]^{V_k}) \ge \psino_{A_k}([n]^{V_k}) \ge \m^{\frac12(\log(k) + 1)} \ge n^{\frac12\log k},\vphantom{\Big|}\\
  &\pcost_{B_k}([n]^{V_k}) \ge \psifc_{B_k}([n]^{V_k}) \ge \m^{\frac14(\log(k+1) + 2)} \ge n^{\frac14\log k}.\vphantom{\Big|}
\end{align*}
\end{prop}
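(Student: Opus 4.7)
My plan is to invoke the dual characterization of $\pcost$ mentioned in Remark \ref{rmk:dual2}: since $\pcost$ is the pointwise maximal function satisfying (base case), (monotonicity), (sub-additivity), and (join rule), it suffices to verify each of these four inequalities separately for $\psino$ and for $\psifc$. This reduces the task to a self-contained verification, with no further induction on patterns needed beyond what is already built into the dual characterization. For patterns that are not end-joining (resp.\ not fully connected), $\psino_A$ (resp.\ $\psifc_A$) is declared to be $0$, so the axioms hold trivially; and sub-patterns of end-joining (resp.\ fully connected) patterns inherit the property, so the interesting cases stay within the respective class.

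The first three axioms are routine. The base case follows from a direct computation: for atomic $A$ with $|\A|=1$, one has $\l{A} = \c{A} = 1$ and $\Int{A} = \emptyset$, so both $\psino_A(\A)$ and $\psifc_A(\A)$ evaluate to $\m^{1/2}/n \le 1$. Monotonicity is immediate from monotonicity of $\delta$ and $\pi_I$ in $\A$. For sub-additivity, $\psifc_A$ is a maximum of functions of the form $\m^{c_I} \pi_I(\A)$, each of which is sub-additive, so their maximum is sub-additive as well. For $\psino_A$, sub-additivity follows from Minkowski's inequality ($L^2$ triangle inequality) applied to the pointwise bound $\delta(\RHO{\Int{A}}{\A_1 \cup \A_2}{z}) \le \delta(\RHO{\Int{A}}{\A_1}{z}) + \delta(\RHO{\Int{A}}{\A_2}{z})$.

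The real content is the \emph{join rule}, and here the two cases foreshadow the two inductive cases $(\dag)$ and $(\ddag)$ of Lemma \ref{la:PHI}. For $\psino$ with $C = \AB$ end-joining and $\A \in \smallP_A$, $\B \in \smallP_B$, I carry out a \emph{one-sided} argument: WLOG $\l{A} \ge \l{B}$, so $\log \l{C} \le \log \l{A} + 1$, and the extra factor of $\m^{1/2}$ must be absorbed using smallness of $\B$. Concretely, for each $z \in [n]^{\End{C}}$ I apply Lemma \ref{la:basic-ineqs}(b) to $\RHO{\Int{C}}{\A\bowtie\B}{z}$, projecting to $\Int{A}$ on the $A$-side, and bounding the $\mu$ factor on $V_B \setminus V_A$ using the $B$-smallness constraint $\mu_{V_B \setminus V_A}(\B) \le \m^{-\c{B \ominus A}}$. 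Squaring, averaging over $z$, and tracking how $\c{C}$, $\End{C}$ and $\Int{C}$ decompose across $A$ and $B$ (end-joining ensures each edge is counted once) yields $\psino_C(\A \bowtie \B) \le \psino_A(\A)$, which is $\le \psino_A(\A) + \psino_B(\B)$ as required.

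For $\psifc$ with $C = \AB$ fully connected, the argument is \emph{balanced} and uses AM-GM, as in case $(\ddag)$ of Lemma \ref{la:PHI}. For each interval $I \in \Con{C}$, I split $I = I_A \cup I_B$ with $I_A = I \cap V_A \in \Con{A}$ and $I_B = I \cap V_B \in \Con{B}$, noting $|I|+1 \le (|I_A|+1) + (|I_B|+1)$ along the chosen path and $|I \cap \End{C}| \le |I_A \cap \End{A}| + |I_B \cap \End{B}|$. Applying Lemma \ref{la:split} to $\pi_I(\A \bowtie \B)$ and using smallness to absorb the $\mu$ factors gives an upper bound of the form $\sqrt{\m^{(1/2)(\cdot)}\pi_{I_A}(\A) \cdot \m^{(1/2)(\cdot)}\pi_{I_B}(\B)}$; AM-GM then converts this to $\tfrac12(\psifc_A(\A) + \psifc_B(\B))$. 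The main obstacle I anticipate is purely bookkeeping: tracking how $\Int$, $\End$, $\c$, $\l$, and $\Con$ recombine under the union $\AB$ with enough tightness that the exponents of $\m$ on the two sides match up for the AM-GM step — in the fully connected case the intervals can freely straddle the $A$/$B$ boundary, which is what makes the $\tfrac14$ constant sharp. Once the axioms are in place, the two explicit bounds follow by substitution: for $A_k$ the restriction $\RHO{\Int{A_k}}{[n]^{V_k}}{z} = [n]^{\Int{A_k}}$ has density $1$, giving $\psino_{A_k}([n]^{V_k}) = \m^{(1/2)(\log k + 1)}$; for $B_k$ the choice $I = V_k$ has $|I|+1 = k+1$ and $|I \cap \End{B_k}| = 2$, giving $\psifc_{B_k}([n]^{V_k}) \ge \m^{(1/4)(\log(k+1) + 2)}$.
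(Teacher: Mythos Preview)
Your overall framework matches the paper's exactly: invoke the dual characterization (Remark~\ref{rmk:dual2}), check that $\psino$ and $\psifc$ satisfy the four axioms, with the real work concentrated in the join rule. The first three axioms are fine (Minkowski for sub-additivity of $\psino$ is correct, as is ``max of sub-additives'' for $\psifc$). But your join-rule arguments each contain a genuine gap.

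\medskip
\noindent\textbf{Gap for $\psino$.} Your claim ``$\l{A} \ge \l{B}$ implies $\log \l{C} \le \log \l{A} + 1$'' is false for general end-joining patterns. A single component of $G_C$ can string together many $A$-segments and $B$-segments in alternation; e.g.\ if $E_A = \{\edge{v_0}{v_1},\edge{v_2}{v_3},\dots\}$ and $E_B = \{\edge{v_1}{v_2},\edge{v_3}{v_4},\dots\}$ then $\l{A} = \l{B} = 1$ while $\l{C}$ is arbitrarily large. The paper instead uses the correct bound $\l{C} \le (\c{A} + \c{B} - \c{C} + 1)\cdot\l{A}$, so the surplus in the exponent is $\tfrac12\log(\c{A}+\c{B}-\c{C}+1)$, not a flat $\tfrac12$. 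This surplus is balanced by a gain of $\tfrac12(\c{A}+\c{B}-\c{C})$ obtained as follows: since $C$ is end-joining, $\End{C}$ is the symmetric difference $\End{A}\,\triangle\,\End{B}$, and Cauchy--Schwarz over the shared endpoints $\End{A}\cap\End{B}$ factors the $L^2$ quantity into an $A$-part and a $B$-part; $B$-smallness then gives $\Ex_b[\delta(\RHO{\Int{B}}{\B}{b})^2] \le \delta(\B) \le \m^{-\c{B}}$, and finally one closes with $\log(s+1) \le s$ for $s = \c{A}+\c{B}-\c{C}$. Your proposed route via Lemma~\ref{la:basic-ineqs}(\ref{eq:basic1}) and $\mu_{V_B \setminus V_A}(\B)$ yields only a \emph{pointwise} bound on $\delta(\RHO{\Int{C}}{\A\bowtie\B}{z})$ and does not naturally produce the $L^2$ estimate; Cauchy--Schwarz over the join vertices is the missing ingredient.

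\medskip
\noindent\textbf{Gap for $\psifc$.} Your split $I_A = I \cap V_A$, $I_B = I \cap V_B$ does not work: in a fully connected pattern $V_A \cap V_B \ne \emptyset$, so $I_A$ and $I_B$ overlap and one cannot infer $\pi_I(\A\bowtie\B) \le \pi_{I_A}(\A)\cdot\pi_{I_B}(\B)$; moreover the \emph{additive} inequality $|I|+1 \le (|I_A|+1)+(|I_B|+1)$ says nothing useful about $\log(|I|+1)$. The paper's argument is different. In the key case $I = V_C$ (with $|E_A|,|E_B| \ge \tfrac12|E_C|$), it selects \emph{disjoint} half-intervals $J \subseteq V_A$ and $K \subseteq V_B$ anchored at the two endpoints of $G_C$, then takes the geometric mean of the two bounds $\delta(\A\bowtie\B) \le \m^{-1}$ (this is $C$-smallness of $\A\bowtie\B$, via Lemma~\ref{la:join}) and $\delta(\A\bowtie\B) \le \pi_J(\A)\cdot\pi_K(\B)$, and only then applies AM--GM. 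The decisive contribution of smallness here is the bare factor $\m^{-1}$, not a $\mu$-term coming out of Lemma~\ref{la:split}.
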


Recall from Remark \ref{rmk:dual2} the dual characterization of $\pcost$ as the unique coordinate-wise maximal function from pairs $(A,\A)$ to $\R$ which satisfies inequalities (base case), (monotone), (sub-additive) and (join rule). It is easy to see that $\psino$ and $\psifc$ satisfy the first three of these inequalities. To prove Proposition \ref{prop:special-lbs}, it suffices to show that $\psino$ and $\psifc$ also satisfy inequality (join rule). We begin with $\psino$.

\begin{la}\label{la:psi2}
For every non-atomic end-joining pattern $C = \AB$ and small pathsets $\A \in \smallP_A$ and $\B \in \smallP_B$,
\[
  \psino_C(\A \bowtie \B) \le \max\{\psino_A(\A),\psino_B(\B)\}.
\]
\end{la}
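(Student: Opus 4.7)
The plan is to expand $\psino_C(\A\bowtie\B)$, apply Cauchy--Schwarz to split the underlying second--moment expectation into the product of an $A$--side expectation and a $B$--side expectation, then absorb the surplus factor using smallness and a short combinatorial accounting. The end-joining assumption drives everything: because $E_A\cap E_B=\emptyset$, for any $v\in V_A\cap V_B$ the two $P_k$-edges at $v$ must split one into $E_A$ and one into $E_B$, forcing $W\defeq V_A\cap V_B\subseteq \End{A}\cap\End{B}$ and hence $W\subseteq \Int{C}$. This yields the disjoint decompositions $\End{C}=(\End{A}\setminus W)\sqcup(\End{B}\setminus W)$ and $\Int{C}=\Int{A}\sqcup\Int{B}\sqcup W$, which is exactly what is needed to view a uniform $z\in[n]^{\End{C}}$ as independent uniform coordinates on $\End{A}\setminus W$ and $\End{B}\setminus W$.

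First I will verify the factorization
\[
  \delta(\RHO{\Int{C}}{\A\bowtie\B}{z})=\Ex_{y_W\in[n]^W}\big[\alpha(z,y_W)\,\beta(z,y_W)\big],
\]
where $\alpha(z,y_W)\defeq\delta(\RHO{\Int{A}}{\A}{(z_{\End{A}\setminus W},y_W)})$ and $\beta(z,y_W)\defeq\delta(\RHO{\Int{B}}{\B}{(z_{\End{B}\setminus W},y_W)})$. Cauchy--Schwarz bounds the square of the inner expectation by $\Ex_{y_W}[\alpha^2]\cdot\Ex_{y_W}[\beta^2]$, and since $\Ex_{y_W}[\alpha^2]$ depends on $z$ only through $z_{\End{A}\setminus W}$ while $\Ex_{y_W}[\beta^2]$ depends only on $z_{\End{B}\setminus W}$, the outer expectation over $z$ factors to give
\[
  \Ex_z[\delta(\RHO{\Int{C}}{\A\bowtie\B}{z})^2]\le\Ex_{w\in[n]^{\End{A}}}[\delta(\RHO{\Int{A}}{\A}{w})^2]\cdot\Ex_{w'\in[n]^{\End{B}}}[\delta(\RHO{\Int{B}}{\B}{w'})^2].
\]
Multiplying through by $\m^{\log\l{C}+\c{C}}$ and translating back into the $\psino$ notation yields
\[
  \psino_C(\A\bowtie\B)\le \m^{E/2}\,\psino_A(\A)\,\psino_B(\B),\qquad E\defeq(\log\l{C}+\c{C})-(\log\l{A}+\c{A})-(\log\l{B}+\c{B}).
\]

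To replace the product on the right by a maximum, assume without loss of generality $\l{A}\ge\l{B}$. The trivial bound $\delta(\RHO{\Int{B}}{\B}{w'})\le 1$ combined with $B$-smallness $\delta(\B)\le\m^{-\c{B}}$ gives $\Ex_{w'}[\delta^2]\le\delta(\B)\le\m^{-\c{B}}$ and hence $\psino_B(\B)\le \m^{\frac12\log\l{B}}$. Substituting, $\psino_C(\A\bowtie\B)\le\m^{(E+\log\l{B})/2}\psino_A(\A)\le\psino_A(\A)\le\max\{\psino_A(\A),\psino_B(\B)\}$, provided
\[
  \log\l{C}-\log\l{A}\le\c{A}+\c{B}-\c{C}.
\]

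This last combinatorial inequality is the main (though not very deep) obstacle. I will establish it by viewing the overlap as a bipartite graph $H$ whose vertex set is the disjoint union of the components of $G_A$ and of $G_B$, with an edge for each $v\in W$ joining its component in $G_A$ to its component in $G_B$. Components of $G_C$ correspond bijectively to connected components of $H$, so $\c{A}+\c{B}-\c{C}$ equals $\mathrm{rank}(H)$. The largest component of $G_C$ is assembled from some $k_A$ components of $G_A$ and $k_B$ components of $G_B$ through a connected subgraph of $H$, giving $\l{C}\le(k_A+k_B)\max(\l{A},\l{B})=(k_A+k_B)\l{A}$ and $\mathrm{rank}(H)\ge k_A+k_B-1$. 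The inequality then reduces to the elementary $\log_2 m\le m-1$ for $m\ge 1$, which finishes the proof.
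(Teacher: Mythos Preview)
Your proof is correct and follows essentially the same route as the paper: the Cauchy--Schwarz split of the second-moment expectation, the use of $B$-smallness to bound $\Ex_{w'}[\delta^2]\le\delta(\B)\le\m^{-\c{B}}$, and the combinatorial inequality $\l{C}\le(\c{A}+\c{B}-\c{C}+1)\,\l{A}$ reduced to $\log m\le m-1$ are exactly the paper's steps. Your bipartite-graph $H$ formulation of the last inequality is a slightly more explicit restatement of the paper's one-line observation that each component of $G_C$ is the union of at most $\c{A}+\c{B}-\c{C}+1$ components of $G_A$ and $G_B$.
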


\begin{proof}
Without loss of generality, assume that $\l{A} \ge \l{B}$. After making three observations, will show that $\psino_C(\A \bowtie \B) \le \psino_A(\A)$.

First, note that each connected component of $G_C$ ($=G_A \cup G_B$) is the union of at most $\c{A} + \c{B} - \c{C} + 1$ components of $G_A$ and $G_B$. It follows that $\l{C} \le (\c{A} + \c{B} - \c{C} + 1)\cdot \l{A}$.

Since $C$ is end-joining, $\End{C}$ is the symmetric difference of $\End{A}$ and $\End{B}$. By the Cauchy-Schwartz inequality,
\begin{align*}
 \Ex_{c \in [n]^{\End{C}}} 
     \big[\, 
     &\delta(\pRHO{\Int{C}}{\A \bowtie \B}{c})^2 
     \,\big]\\
  &=
    \Ex_{\substack{x \in [n]^{\End{A} \setminus \End{B}}\\
    y \in [n]^{\End{B} \setminus \End{A}}}}
     \Big[\,
     \Big(
     \Ex_{z \in [n]^{\End{A} \cap \End{B}}} 
     \big[\,\delta(\RHO{\Int{A}}{\A}{xz})\cdot 
        \delta(\RHO{\Int{B}}{\B}{yz})
     \,\big]
     \Big)^2
     \,\Big]\\
  &\le 
     \Ex_{\substack{x \in [n]^{\End{A} \setminus \End{B}}\\
            z \in [n]^{\End{A} \cap \End{B}}}}
     \big[\,\delta(\RHO{\Int{A}}{\A}{xz})^2
     \,\big]
     \Ex_{\substack{y \in [n]^{\End{B} \setminus \End{A}}\\
          z \in [n]^{\End{A} \cap \End{B}}}}
     \big[\,\delta(\RHO{\Int{B}}{\B}{yz})^2
     \,\big]\\
  &= 
    \Ex_{a \in [n]^{\End{A}}} \big[\,\delta(\RHO{\Int{A}}{\A}{a})^2\,\big]
     \Ex_{b \in [n]^{\End{B}}} \big[\,\delta(\RHO{\Int{B}}{\B}{b})^2\,\big].
\end{align*}
We next note that $B$-smallness of $\B$ implies
\[
  \Ex_{b \in [n]^{\End{B}}} 
  \big[\,\delta(\RHO{\Int{B}}{\B}{b})^2\,\big]
  \le
  \Ex_{b \in [n]^{\End{B}}} 
  \big[\,\delta(\RHO{\Int{B}}{\B}{b})\,\big]
  =
  \delta(\B)
  \le
  \m^{-\c{B}}.
\]
Putting these inequalities together, we have
\begin{align*}
  \psino_C(\A \bowtie \B)
  &= \m\vphantom{\Big|}^{\frac12\big(\log(\l{C}) + \c{C}\big)}
     \sqrt{\ts
     \Ex_{c \in [n]^{\End{C}}} \big[\, 
     \delta(\RHO{\Int{C}}{\A \bowtie \B}{c})^2 \,\big]}
     \\
  &\le \m\vphantom{\Big|}^{\frac12\big(\log(\l{A}) + \log(\c{A} + \c{B} - \c{C} + 1) + \c{C} - \c{B}\big)}
     \sqrt{\ts
     \Ex_{a \in [n]^{\End{A}}} \big[\,\delta(\RHO{\Int{A}}{\A}{a})^2\,\big]}\\
  &= \m\vphantom{\Big|}^{\frac12\big(\log(\c{A} + \c{B} - \c{C} + 1) + \c{C} - \c{B} - \c{A}\big)}
     \cdot  
     \psino_A(\A)\\
  &\le \psino_A(\A)
\end{align*}
using the fact that $\log(s+1) \le s$ for every integer $s \ge 0$. We get $\psino_C(\A \bowtie \B) \le \max\{\psino_A(\A),\psino_B(\B)\}$ as required.\end{proof}

We next show that $\psifc$ satisfies inequality (join rule).

\begin{la}\label{la:psi2}
For every non-atomic fully connected pattern $C = \AB$ and small pathsets $\A \in \smallP_A$ and $\B \in \smallP_B$,
\[
  \psifc_C(\A \bowtie \B) \le \frac{\psifc_A(\A) + \psifc_B(\B)}{2}.
\]
\end{la}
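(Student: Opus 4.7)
The plan is to mimic the AM-GM-style argument used in Case $(\ddag)$ of the main lower bound proof (Lemma~\ref{la:PHI}), which was designed precisely for this sort of balanced join inequality. First I would fix an interval $I\in\Con{C}$ attaining the maximum in $\psifc_C(\A\bowtie\B)$ and set $I_A\defeq I\cap V_A$ and $I_B\defeq I\cap V_B$. Since $G_A$ and $G_B$ are connected sub-paths of $P_k$ (both patterns being fully connected) whose union $G_C$ is likewise a path containing $I$ as a consecutive sub-range, both $I_A$ and $I_B$ are intervals in their respective pattern graphs, i.e., elements of $\Con{A}$ and $\Con{B}$.

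Next I would derive the density inequality
\[
  \pi_I(\A\bowtie\B)^2 \,\le\, \pi_{I_A}(\A)\cdot\pi_{I_B}(\B)
\]
by applying inequality (\ref{eq:basic2}) of Lemma~\ref{la:basic-ineqs} to the inclusion $\proj_I(\A\bowtie\B)\subseteq \proj_{I_A}(\A)\bowtie\proj_{I_B}(\B)$ twice (swapping the roles of $A$ and $B$), bounding the resulting $\mu$-factors by $\pi$-factors via inequality (\ref{eq:basic0}), multiplying the two symmetric bounds, and taking square roots. Combined with AM-GM in the form $(\psifc_A(\A)+\psifc_B(\B))/2\ge\sqrt{\psifc_A(\A)\psifc_B(\B)}$ and the trivial lower bounds $\psifc_A(\A)\ge \m^{\frac14(\log(|I_A|+1)+|I_A\cap\End{A}|)}\pi_{I_A}(\A)$ (and its analogue for $B$), this reduces the proposition to the purely combinatorial inequality
\[
  2\log(|I|+1) + 2|I\cap\End{C}| \,\le\, \log((|I_A|+1)(|I_B|+1)) + |I_A\cap\End{A}| + |I_B\cap\End{B}|.
\]

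The main obstacle will be verifying this combinatorial inequality in every configuration of $I$ inside $V_C$. The product-versus-square factor $(|I_A|+1)(|I_B|+1)\le(|I|+1)^2$ goes the ``wrong'' way, with equality only when $I\subseteq V_A\cap V_B$, so the endpoint excess on the right must cover the log-deficit on the left. The key compensation comes from the two ``inner'' boundary vertices of the sub-paths $V_A$ and $V_B$---the endpoint of $V_A$ lying in the interior of $V_C$ and the corresponding endpoint of $V_B$---which are elements of $\End{A}$ and $\End{B}$ respectively but not of $\End{C}$; whenever $I$ contains them, they boost $|I_A\cap\End{A}|+|I_B\cap\End{B}|$ without contributing to $|I\cap\End{C}|$. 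A case analysis on the presence in $I$ of each of the four relevant endpoints (the two outer endpoints of $G_C$ and these two inner endpoints of $G_A$ and $G_B$) handles all configurations except the extremal case $I=V_C$, where one instead invokes the $C$-smallness of $\A\bowtie\B$ guaranteed by Lemma~\ref{la:join} to obtain the sharper bound $\pi_I(\A\bowtie\B)=\delta(\A\bowtie\B)\le\m^{-\c{C}}=\m^{-1}$, and---if needed---uses alternative sub-intervals $I_A'\in\Con{A}$, $I_B'\in\Con{B}$ at the endpoints of $V_A$ and $V_B$ (rather than $I_A$ and $I_B$) in the $\psifc$ lower bounds to close the remaining gap.
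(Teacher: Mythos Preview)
Your architecture mirrors the paper's: fix the maximizing interval $I\in\Con{C}$, treat the extremal case $I=V_C$ via $C$-smallness plus AM--GM on two projection bounds, and dispatch the remaining cases by endpoint bookkeeping. In the extremal case your suggestion to replace $I_A,I_B$ by ``alternative sub-intervals at the endpoints'' is precisely what the paper does, choosing disjoint halves $J=\{v_0,\dots,v_j\}\in\Con{A}$ and $K=\{v_{k-j},\dots,v_k\}\in\Con{B}$ with $j=\lfloor(k-1)/2\rfloor$ (under the additional hypothesis $|E_A|,|E_B|\ge\tfrac12|E_C|$, needed so that $J\subseteq V_A$ and $K\subseteq V_B$).

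The gap is in your claim that the displayed combinatorial inequality holds for every $I\ne V_C$. Take $V_A=\{v_0,v_1,v_2\}$, $V_B=\{v_2,v_3,v_4\}$, $I=\{v_0,v_1\}$: then $I_A=I$, $I_B=\emptyset$, and your inequality reads $2\log 3+2\le\log 3+1$, which is false. In fact the lemma as stated (with the factor $\tfrac12$) fails on this very configuration: with $\A=\{x:x_2=1\}\in\smallP_A$ and $\B$ a singleton in $\smallP_B$, one has $\psifc_C(\A\bowtie\B)=\psifc_A(\A)=\m^{(\log 3+1)/4}$ while $\psifc_B(\B)=O(1)$, so the arithmetic mean is strictly smaller than $\psifc_C$. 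The paper's own proof only establishes the $\tfrac12$-bound in the balanced sub-case; for ``all other cases'' it asserts one can find $J\in\Con{A}$ \emph{or} $K\in\Con{B}$ with an endpoint surplus, which yields only a one-sided estimate $\psifc_C\le\psifc_A$ or $\psifc_C\le\psifc_B$. That gives $\psifc_C\le\max\{\psifc_A,\psifc_B\}$---sufficient for the join rule and hence for Proposition~\ref{prop:special-lbs}---but not the stated arithmetic-mean inequality. Your attempt to push a uniform two-sided AM--GM through every configuration cannot succeed as written.
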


\begin{proof}
Fix $I \in \Con{A}$ such that 
\[
  \psifc_C(\A \bowtie \B) = \m\vphantom{\Big|}^{\frac14\big(\log(|I|+1) + |I \cap \End{C}|\big)} \pi_I(\A \bowtie \B).
\]
We consider various cases depending on $|I \cap \End{C}| \in \{0,1,2\}$. The most important case is where $|I \cap \End{C}| = 2$ (i.e.\ $I$ contains both endpoints of $G_C$). Because $G_C$ is connected, this means that $I = V_C$ ($= V_A \cup V_B$) and hence $\pi_I(\A \bowtie \B) = \delta(\A \bowtie \B)$.

Within this case, the most important sub-case is where $|E_A|,|E_B| \ge \frac12|E_C|$. In this sub-case, we argue as follows. Without loss of generality, $V_C = \{v_0,\dots,v_k\}$ (i.e.\ $G_C$ is the path $P_k$) and $v_0 \in V_A$ and $v_k \in V_B$. Let $j = \lfloor\frac{k-1}{2}\rfloor$ and $J = \{v_0,\dots,v_j\}$ and $K = \{v_{k-j},\dots,v_k\}$ and note that $J \in \Con{A}$ and $K \in \Con{B}$. Since $v_0 \in J \cap \End{A}$ and $v_k \in K \cap \End{B}$, we have
\begin{equation}\label{eq:blep}
  \log(k+2) \le \log(|J|+1) + |J \cap \End{A}|
  \quad\text{and}\quad
  \log(k+2) \le \log(|K|+1) + |K \cap \End{B}|.
\end{equation}

Next, observe that $\delta(\A \bowtie \B) \le \m^{-1}$ by $C$-smallness of $\A \bowtie \B$. We also have the bound $\delta(\A \bowtie \B) \le \pi_J(\A) \cdot  \pi_K(\B)$ (since $J \cap K = \emptyset$). Taking the geometric mean of these two inequalities, we have
\[
  \delta(\A \bowtie \B) \le \m^{-1/2}  \sqrt{\pi_J(\A) \cdot \pi_K(\B)}.
\]
Putting these pieces together, we have
\begin{align*}
  \psifc_C(\A \bowtie \B) 
  &= \m\vphantom{\Big|}^{\frac14\big(\log(k+2)+2\big)} 
  \delta(\A \bowtie \B)\vphantom{\Big|}\\
  &\le \vphantom{\frac12}\m\vphantom{\Big|}^{\frac14\log(k+2)}
  \sqrt{\pi_J(\A) \cdot \pi_K(\B)}\\
  &\le \frac12\Big(\m\vphantom{\Big|}^{\frac14\log(k+2)}\pi_J(\A)
  + \m\vphantom{\Big|}^{\frac14\log(k+2)}\pi_K(\B)\Big)
  &&\text{(AM-GM ineq.)}\\
  &\le \frac12\Big(\m\vphantom{\Big|}^{\frac14\big(\log(|J|+1) + 
  |J \cap \End{A}|\big)}\pi_J(\A)
  + \m\vphantom{\Big|}^{\frac14\big(\log(|K|+1) + |K \cap \End{B}|\big)}
  \pi_K(\B)\Big)
  &&\text{(by (\ref{eq:blep}))}\\
  &\le \frac12\Big(\psifc_A(\A) + \psifc_B(\B)\Big)
  &&\text{(ind.\ hyp.)}
\end{align*}

In all other cases (i.e.\ when $|I \cap \End{C}| < 2$ or $\min\{|E_A|,|E_B|\} < |E_C|/2$), the inequality is proved by finding $J \in \Con{A}$ or $K \in \Con{B}$ such that $|I \cap \End{C}| < |J \cap \End{A}|$ or $|K \in \End{B}|$. We omit the analysis of these cases, since the arguments are not relevant to our main pathset complexity lower bound.
\end{proof}

Having shown that $\psino$ and $\psifc$ both satisfying (join rule), the proof of Proposition \ref{prop:special-lbs} is complete. Combining our upper and lower bounds for $\pcost_{A_k}$ and $\pcost_{B_k}$ (Corollary \ref{cor:upper-bound-A} and Propositions \ref{prop:upper-bound-B} and \ref{prop:special-lbs}), we have

\begin{cor}
With respect to patterns $A_k$ and $B_k$, the pathset complexity of the complete $P_k$-pathset $[n]^{V_k}$ has the following bounds:
\begin{align*}
  n^{\frac12\log k - O(1)} &\le \pcost_{A_k}([n]^{V_k}) \le k n^{\frac12\log k + O(1)},\vphantom{\big|}\\  
  n^{\frac14\log k - O(1)} &\le \pcost_{B_k}([n]^{V_k}) \le 2^k n^{\ln k + O(1)}.\vphantom{\big|}
\end{align*}
\end{cor}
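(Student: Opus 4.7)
The corollary is essentially a summary statement assembled from three already-proved results in the appendices, namely Corollary~\ref{cor:upper-bound-A}, Proposition~\ref{prop:upper-bound-B} and Proposition~\ref{prop:special-lbs}. My plan is simply to combine these, after checking the few structural claims about $A_k$ and $B_k$ that the lower-bound proposition requires.

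The upper bounds are immediate instantiations with $\A = [n]^{V_k}$. Plugging into Corollary~\ref{cor:upper-bound-A} yields $\pcost_{A_k}([n]^{V_k}) \le O(kn^{(1/2)\lceil\log k\rceil+2}) = k\cdot n^{(1/2)\log k + O(1)}$, and plugging into Proposition~\ref{prop:upper-bound-B} yields $\pcost_{B_k}([n]^{V_k}) \le 2^k n^{\ln(k+1)+1} = 2^k n^{\ln k + O(1)}$.

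For the lower bounds I would invoke Proposition~\ref{prop:special-lbs}, but to do so one must first verify that $A_k$ is end-joining and $B_k$ is fully connected. Both are straightforward inductions on the recursive definitions. For $A_k = \{A_j, A_{k-j}^{\shift j}\}$ with $j = \lceil k/2\rceil$, the leaf-labels of the two children lie in the disjoint edge sets $\{\edge{v_0}{v_1},\dots,\edge{v_{j-1}}{v_j}\}$ and $\{\edge{v_j}{v_{j+1}},\dots,\edge{v_{k-1}}{v_k}\}$, so end-joining is preserved by the recursion. For $B_k = \{B_{k-1}, B_{k-1}^{\shift 1}\}$ every strict sub-pattern lies inside $B_{k-1}$ or $B_{k-1}^{\shift 1}$, whose graphs are connected by induction, while $G_{B_k} = P_k$ is itself connected. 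Once these properties are in hand, I need to evaluate $\psino_{A_k}$ and $\psifc_{B_k}$ at the complete pathset $[n]^{V_k}$. Since $G_{A_k} = G_{B_k} = P_k$, I have $\l{A_k}=\l{B_k}=k$, $\c{A_k}=\c{B_k}=1$, and $\End{A_k} = \End{B_k} = \{v_0,v_k\}$. The complete pathset has $\delta(\RHO{\Int{A_k}}{[n]^{V_k}}{z})=1$ for every $z$, so $\psino_{A_k}([n]^{V_k}) = \m^{(1/2)(\log k + 1)}$; and taking the interval $I = V_k$ (which contains both endpoints) gives $\psifc_{B_k}([n]^{V_k}) \ge \m^{(1/4)(\log(k+1) + 2)}$.

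Finally, I convert $\m$-exponents into $n$-exponents using $\m = n^{1-1/\log k}$, which produces only a constant-factor loss: $\m^{(1/2)(\log k+1)} = n^{(1/2)\log k - O(1)}$ and $\m^{(1/4)(\log(k+1)+2)} = n^{(1/4)\log k - O(1)}$, as claimed. There is no real obstacle here; the entire proof is a two- or three-line assembly of previously established facts, and the only point requiring the slightest care is the inductive verification that $A_k$ and $B_k$ belong to the structural classes needed by Proposition~\ref{prop:special-lbs}.
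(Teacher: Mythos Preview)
Your proposal is correct and follows essentially the same approach as the paper: the corollary is stated immediately after Proposition~\ref{prop:special-lbs} with the words ``Combining our upper and lower bounds for $\pcost_{A_k}$ and $\pcost_{B_k}$ (Corollary~\ref{cor:upper-bound-A} and Propositions~\ref{prop:upper-bound-B} and~\ref{prop:special-lbs}), we have,'' and no further proof is given. You supply a bit more detail than the paper does (explicitly verifying end-joining and fully-connected, and evaluating $\psino$ and $\psifc$ at $[n]^{V_k}$), but these facts are already noted in Definition~\ref{df:psi}(ii) and in the ``In particular'' clause of Proposition~\ref{prop:special-lbs}, so the extra work is harmless.
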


Since $\PHI{B_k} = \frac12\log k - O(1)$ (as noted in \S\ref{sec:PHI}), Theorem \ref{thm:pattern-lb} gives the stronger lower bound $\pcost_{B_k}([n]^{V_k}) \ge \m^{(1/2)\log k - O(1)} = n^{(1/2)\log k - O(1)}$. Even after extensively studying this special case, we were unable to narrow the gap between $\frac12\log k$ and $\ln k$ ($\approx 0.69\log k$) in the exponent of $n$ in $\pcost_{B_k}([n]^{V_k})$.

\section{Rectangular Pathsets}\label{sec:rectangular}

A set $X \subseteq [n]^V$ is {\em rectangular} if there exist sets $S_i \subseteq [n]$, $i \in V$, such that $X = \{x \in [n]^V : x_i \in S_i$ for all $i \in V\}$. For a pattern graph $G$, let $\scr R_G = \{\A \in \P_G : \A$ is rectangular$\}$ and $\smallR_G = \scr R_G \cap \smallP_G$. For $\A \in \scr R_G$, we define {\em rectangular pathset complexity} $\pcost_G^{\mr{rect}}(\A)$ exactly like pathset complexity $\pcost_G(\A)$ (Definition \ref{df:pc1}) except with $\scr R_G$ and $\smallR_G$ replacing $\P_G$ and $\smallP_G$. Analogously, we define $\pcost_A^{\mr{rect}}(\A)$ for patterns $A$. Note that $\pcost_A(\A) \le \pcost_A^{\mr{rect}}(\A)$ for all $\A \in \scr R_A$.

\begin{rmk}
I venture to guess that $\pcost_A(\A) = \pcost_A^{\mr{rect}}(\A)$ for all $\A \in \scr R_A$, but have no idea how to prove this.
\end{rmk}

We have remarked that our upper bounds on $\pcost_{A_k}$ and $\pcost_{B_k}$ (Corollary \ref{cor:upper-bound-A} and Proposition \ref{prop:upper-bound-B}) involved only rectangular pathsets. It follows that the same upper bounds apply to $\pcost_{A_k}^{\mr{rect}}$ and $\pcost_{B_k}^{\mr{rect}}$.

As for lower bounds on $\pcost^{\mr{rect}}$, this turns out to be significantly easier than our lower bound for $\pcost$. Similar to our lower bound for fully connected patterns in Appendix \ref{sec:easier}, we can lower bound $\pcost_G^{\mr{rect}}(\A)$ for all $\A \in \scr R_G$ in terms of the projection densities $\pi_S(\A)$ where $S \in \Con{G}$ via a function similar to $\psino$. 

A key difference when it comes rectangular pathsets is that $\pi_S = \mu_S$ (projection density $=$ maximum restriction density) and hence smallness of rectangular pathsets is preserved under projections to a union of components (cp.\ Remark \ref{rmk:proj}). This fact turns out to greatly simplify the task of proving a lower bound for $\pcost^{\mr{rect}}$.

The next example shows that projections of non-rectangular pathsets can be tricky. This illustrates the difficulty in generalizing the lower bound for $\pcost^{\mr{rect}}$ to the non-rectangular setting.

\begin{ex}
For $k \ge 1$, let $\Pal_{2k} \in \P_{P_{2k}}$ be the ``palindrome pathset''
\[
  \Pal_{2k} = \big\{x \in [n]^{0,\dots,2k} : x_{k-i} = x_{k+i} \text{ for all } 0 \le i \le k\big\}.
\]
\end{ex}

The palindrome pathset $\Pal_{2k}$ has low density, while having the maximum projection over vertices $v_0,\dots,v_k$:
\[
  \delta(\Pal_{2k}) = n^{-k}
  \quad\text{ and }\quad
  \pi_{\{v_0,\dots,v_k\}}(\Pal_{2k}) = 1.
\]
It turns out that $\Pal_{2k}$ is inexpensive to construct, given the right pattern. Let $\PalPat_{2k}$ be the pattern 
\[
  \text{\Tree 
  [.$\PalPat_{2k}$
    [.$\vdots$  
        [
          [ \makebox[40pt]{$\edge{v_{k-1}}{v_k}$} \makebox[40pt]{$\edge{v_k}{v_{k+1}}$} ]
          [ \makebox[40pt]{$\edge{v_{k-2}}{v_{k-1}}$} \makebox[40pt]{$\edge{v_{k+1}}{v_{k+2}}$} ]
        ]!\qsetw{140pt}
        [
          \makebox[40pt]{$\edge{v_{k-3}}{v_{k-2}}$} \makebox[40pt]{$\edge{v_{k+2}}{v_{k+3}}$}
        ]
    ]!\qsetw{60pt}
    [
      \makebox[30pt]{$\edge{v_0}{v_1}$} \makebox[30pt]{$\edge{v_{2k-1}}{v_{2k}}$}
    ]
  ]}
\]
It is easy to show that
\[
  \pcost_{\PalPat_{2k}}(\Pal_{2k}) \le O(kn^2).
\]
On the other hand, for any fully connected pattern $C$ with graph $P_{2k}$ (such as $A_{2k}$ or $B_{2k}$), the lower bound of Appendix \ref{sec:easier} implies
\[
  \xi_C(\Pal_{2k}) 
    \ge
    \psifc_C(\Pal_{2k}) 
    \ge 
      \m\vphantom{\Big|}^{\frac14\big(\log(|\{v_0,\dots,v_k\}|+1) + |\{v_0,\dots,v_k\} \cap \End{A}|\big)}
      \cdot
      \pi_{\{v_0,\dots,v_k\}}(\A)
    = n^{\Omega(\log k)}.
\]

\end{document}